\newcommand{\so}{\scriptscriptstyle \rm I}
\newcommand{\st}{\scriptscriptstyle \rm I\hspace{-1pt}I}
\newcommand{\diag}{\operatorname{diag}}
\newcommand{\tr}{\operatorname{tr}}
\newcommand{\Res}{\operatorname{Res}}
\newcommand{\bbb}{\mathbf{B}}
\newcommand{\ccc}{\mathbf{C}}
\newcommand{\bu}{\bar u}
\newcommand{\bv}{\bar v}
\newcommand{\bw}{\bar w}
\newcommand{\bth}{\bar{\theta}}
\newcommand{\vk}{\kappa}
\newcommand{\tvk}{\widetilde{\kappa}}
\numberwithin{equation}{section}
\newtheorem{Theorem}{Theorem}[section]
\newtheorem{Lemma}[Theorem]{Lemma}
\newtheorem{Proposition}[Theorem]{Proposition}
\newtheorem{Corollary}[Theorem]{Corollary}
{ \theoremstyle{definition}
\newtheorem{Remark}[Theorem]{Remark}
\newtheorem{Definition}[Theorem]{Definition}}
\begin{document}
\allowdisplaybreaks

\newcommand{\arXivNumber}{1906.06897}

\renewcommand{\PaperNumber}{066}

\FirstPageHeading

\ShortArticleName{Scalar products in twisted XXX spin chain. Determinant Representation}

\ArticleName{Scalar Products in Twisted XXX Spin Chain.\\ Determinant Representation}

\Author{Samuel BELLIARD~$^\dag$ and Nikita A.~SLAVNOV~$^\ddag$}

\AuthorNameForHeading{S.~Belliard and N.A.~Slavnov}

\Address{$^\dag$~Institut Denis-Poisson, Universit\'e de Tours, Universit\'e d'Orl\'eans,\\
\hphantom{$^\dag$}~Parc de Grammont, 37200 Tours, France}
\EmailD{\href{mailto:email@address}{samuel.belliard@gmail.com}}

\Address{$^\ddag$~Steklov Mathematical Institute of Russian Academy of Sciences,\\
\hphantom{$^\ddag$}~8 Gubkina Str., Moscow, 119991, Russia}
\EmailD{\href{mailto:email@address}{nslavnov@mi-ras.ru}}

\ArticleDates{Received June 19, 2019, in final form August 27, 2019; Published online September 03, 2019}

\Abstract{We consider XXX spin-$1/2$ Heisenberg chain with non-diagonal boundary conditions. We obtain a compact determinant representation for the scalar product of on-shell and off-shell Bethe vectors. In the particular case when both Bethe vectors are on shell, we obtain a determinant representation for the norm of on-shell Bethe vector and prove orthogonality of the on-shell vectors corresponding to the different eigenvalues of the transfer matrix.}

\Keywords{XXX chain; non-diagonal boundary conditions; scalar product; determinant}

\Classification{82B23; 81R50}

\section{Introduction}

The algebraic Bethe ansatz (ABA) \cite{FadLH96, FadST79,FadT79} is a powerful method to study quantum integrable systems. Besides the studying the spectra of quantum Hamiltonians, this method is also used to calculate the correlation functions \cite{GohKS04,KitKMST12,KitMT00, BogIK93L}. The main tool for solving this problem within the framework of the ABA is the calculation of scalar products of Bethe vectors. In this context, it should be noted works \cite{Gau72,Gaud83,Kor82} in which the norm of the Hamiltonian eigenstate (on-shell Bethe vector) was computed, and also paper~\cite{Sla89}, where a compact determinant formula was obtained for the scalar product of off-shell and on-shell Bethe vectors (OFS-ONS scalar product).

The results listed above concern models possessing $U(1)$ symmetry. However, in real physical systems this symmetry can often be violated, for example, because of non-trivial boundary conditions. The study of quantum integrable models without $U(1)$ symmetry has led to the development of new techniques to perform the ABA~\cite{Skl88}.
Among the proposed methods such as the {\it off-diagonal Bethe ansatz} \cite{CYSW13a,CYSW13b,singsol,CYSW13c} or the {\it separation of the variables} \cite{Der,KitMNT17, Skl85}, the {\it modified algebraic Bethe ansatz} (MABA) allows one to understand this new method from the ABA point of view \cite{ABGP15,Bel15, BelC13,BP152,Cram14}.

This paper is a continuation of the series of works \cite{BelP,BSV18,BelSV18c} devoted to the study of the MABA. We consider closed XXX spin-$\frac{1}{2}$ chain with the Hamiltonian
\begin{gather}\label{HXXX}
H= \sum_{k=1}^N\big(\sigma^{x}_k\otimes\sigma^{x}_{k+1}+\sigma^{y}_k\otimes\sigma^{y}_{k+1}+\sigma^{z}_k\otimes\sigma^{z}_{k+1}\big),
\end{gather}
subject to the following non-diagonal boundary conditions
\begin{gather}
 \label{cs1} \gamma \sigma^{x}_{N+1}= \frac{ \tvk^2+ \vk^2-\vk_+^2-\vk_-^2}{2}\sigma^{x}_1 +i\frac{\vk^2-\tvk^2-\vk_+^2+\vk_-^2}{2}\sigma^{y}_1 +(\vk \vk_- -\tvk \vk_+)\sigma^{z}_1, \\
 \label{cs2} \gamma \sigma^{y}_{N+1}= i\frac{\tvk^2- \vk^2-\vk_+^2+\vk_-^2}{2}\sigma^{x}_1 +\frac{\tvk^2+\vk^2+\vk_+^2+\vk_-^2}{2}\sigma^{y}_1 -i(\tilde \vk \vk_++\vk \vk_-)\sigma^{z}_1, \\
 \label{cs3} \gamma \sigma^{z}_{N+1}=(\vk \vk_+-\tilde \vk \vk_-)\sigma^{x}_1 +i(\tilde \vk \vk_-+\vk \vk_+)\sigma^{y}_1 +(\tilde \vk \vk+\vk_+ \vk_-)\sigma^{z}_1.
\end{gather}
The twist parameters $\{\vk,\tvk,\vk_+,\vk_-\}$ are generic complex numbers and $\gamma =\tvk\vk -\vk_+\vk_-$. The Pauli matrices\footnote{$\sigma^{z}=\left(\begin{smallmatrix}
 1 & 0\\
 0 & -1 \end{smallmatrix}\right)$, $\sigma^{+}=\left(\begin{smallmatrix}
 0 & 1\\
 0 & 0 \end{smallmatrix}\right)$, $\sigma^{-}=\left(\begin{smallmatrix}
 0 & 0\\
 1 & 0 \end{smallmatrix}\right)$, $\sigma^{x}=\sigma^{+}+\sigma^{-}$, $\sigma^{y}={\rm i}(\sigma^{-}-\sigma^{+})$.} $\sigma^{\alpha}_k$ with $\alpha=x,y,z$ act non-trivially on the $k$th component of the quantum space $\mathcal{H}= \otimes_{k=1}^N V_k$ with $V_k=\mathbb{C}^2$.

The main object of our study is the OFS-ONS scalar product. It was conjectured in~\cite{BelP} that this scalar product admits a determinant representation similar to the one obtained in~\cite{KitMT99, Sla89} in the case of the usual ABA. In this paper we prove this conjecture. The main tool of our prove is an analog of Izergin--Korepin formula for the scalar product of off-shell Bethe vectors~\cite{Ize87, Kor82} (see~\eqref{SP-finuv}). This formula was generalized for the case of the MABA in~\cite{BelSV18c}. In this paper we specify it to the particular case when one of the Bethe vectors is on-shell. This allows us to compute the sum over partitions of the Bethe parameters in the form of a single determinant.

In the particular case, the obtained determinant representation describes the norm of on-shell Bethe vector. This representation also allows us to prove orthogonality of the on-shell vectors corresponding to the different eigenvalues of the transfer matrix.

The paper is organised as follows. In Section~\ref{MA-IK} we recall basic notions of the MABA and define modified Bethe vectors. We also introduce notation used in the paper. In Section~\ref{A-MDID} we give definition of a modified Izergin determinant, which is one of the most important tools for studying scalar products of the modified Bethe vectors. Section~\ref{S-MR} contains the main results of the paper. Here we give different determinant representations for OFS-ONS scalar product, a~determinant formula for the norm of the modified on-shell Bethe vector, and different forms of the inhomogeneous Bethe equations. In the rest of the paper we present the proofs of the results of Section~\ref{S-MR}. In Section~\ref{S-AFBE} we prove alternative forms of the inhomogeneous Bethe equations. In Section~\ref{S-MIDOS} we consider some properties of the on-shell modified Izergin determinant. Finally, in Section~\ref{S-DPSP} we present the proof of determinant formula for OFS-ONS scalar product. In particular, we prove the equivalence of two determinant representations for OFS-ONS scalar product in Section~\ref{S-TJ}. Several useful formulas and auxiliary lemmas are gathered in appendices. Appendix~\ref{A-PMID} contains a list of properties of the modified Izergin determinant. In Appendix~\ref{A-IRF}, we give some identities for rational functions. Appendices~\ref{A-ROSID} and~\ref{A-PPL} are devoted to the properties of the on-shell modified Izergin determinants.

\section{Basic notions}\label{MA-IK}

To describe the Hamiltonian \eqref{HXXX} within the framework of the quantum inverse scattering method (QISM), we first
introduce a $\mathfrak{gl}_2$-invariant $R$-matrix acting
in $\mathbb{C}^2\otimes \mathbb{C}^2$:
\begin{gather}\label{Rm}
 R(u,v)=\frac{u-v}c\mathbb{I}+P.
\end{gather}
Here $c$ is a constant, $\mathbb{I}$ is the identity operator, and $P$ is the permutation operator. The $R$-matrix~\eqref{Rm} is called~$\mathfrak{gl}_2$-invariant, because
\begin{gather*}%\label{twist-inv}
[R(u,v),K\otimes K]=0,
\end{gather*}
for any matrix $K\in \mathfrak{gl}_2$.

The key object of the QISM is a quantum monodromy matrix
\begin{gather}\label{MonoT}
T(u)=\begin{pmatrix} t_{11}(u)& t_{12}(u)\\ t_{21}(u)&t_{22}(u)\end{pmatrix}.
\end{gather}
This matrix acts as a $2\times 2$ matrix in auxiliary space $\mathbb{C}^2$. The entries $t_{ij}(u)$ are operators depending on a complex parameter $u$ and acting in the Hilbert space $\mathcal{H}$ of the Hamiltonian~\eqref{HXXX}. The commutation relations of these operators are given by an $RTT$-relation
\begin{gather*}%\label{RTT}
 R(u,v)\bigl(T(u)\otimes I \bigr) \bigl(I \otimes T(v) \bigr)=\bigl(I \otimes T(v) \bigr) \bigl(T(u)\otimes I \bigr)R(u,v).
\end{gather*}
Here $I$ is the identity operator in $\mathbb{C}^2$. Equivalently, these commutation relations can be written in the form
\begin{gather*}%\label{genCR}
[t_{ij}(u),t_{kl}(v)]=g(u,v)\bigl(t_{kj}(v)t_{il}(u)-t_{kj}(u)t_{il}(v)\bigr).
\end{gather*}

To obtain the Hamiltonian \eqref{HXXX} we first introduce an inhomogeneous monodromy matrix
\begin{gather}\label{MonodR}
T(u)=R_{0N}(u,\theta_N)\cdots R_{01}(u,\theta_1),
\end{gather}
where $\theta_j$ are inhomogeneity parameters. Each $R$-matrix $R_{0k}(u,\theta_k)$ in \eqref{MonodR} acts non-trivially in the space
$V_0\otimes V_k$, where $V_0$ is the auxiliary space of the monodromy matrix and $V_k$ is the quantum space associated with the $k$th
site of the chain. The Hamiltonian of the XXX chain with periodic boundary conditions can be obtained from the monodromy
matrix \eqref{MonodR} in the homogeneous limit $\theta_j=0$, $j=1,\dots,N$ (see \cite{Bax72}). In order to obtain the Hamiltonian
\eqref{HXXX} with the boundary condition \eqref{cs1}--\eqref{cs3} we introduce a twisted monodromy matrix
\begin{gather*}%\label{TwMM}
T_K(u)=KT(u), \qquad K=\begin{pmatrix}\tvk&\vk_+\\\vk_-&\vk\end{pmatrix}.
\end{gather*}
Then, defining a twisted transfer matrix $\mathcal{T}(u)$ by
\begin{gather*}%\label{TwTM}
\mathcal{T}(u)=\tr T_K(u),
\end{gather*}
we obtain
\begin{gather*}\label{Hfromt}
H=2c \frac{{\rm d}}{{\rm d} u}\big(\log\big(\mathcal{T}(u)\big)\big)\big|_{u\to 0, \, \theta_i\to 0}-N.
\end{gather*}

It is convenient to present the twist matrix $K$ in the form
\begin{gather*}%\label{KBDA}
K=BDA,
\end{gather*}
where
\begin{gather*}%\label{Mat-Tf}
A=\sqrt\mu\begin{pmatrix}1&\dfrac{\rho_2}{\vk^-}\\ \dfrac{\rho_1}{\vk^+}&1\end{pmatrix},\qquad
B=\sqrt\mu\begin{pmatrix} 1&\dfrac{\rho_1}{\vk^-}\\ \dfrac{\rho_2}{\vk^+}&1\end{pmatrix},\qquad
D=\begin{pmatrix} \tvk-\rho_1&0\\0& \vk-\rho_2\end{pmatrix},
\end{gather*}
and the parameters $\rho_i$ and $\mu$ enjoy the following constraints:
\begin{gather}\label{murho}
\rho_1\rho_2-\rho_2\tvk-\rho_1\vk+\vk^+\vk^-=0, \qquad \mu=\frac{1}{1-\frac{\rho_1\rho_2}{\vk^+\vk^-}}.
\end{gather}
Then we have
\begin{gather}\label{trTK}
\mathcal{T}(u)=\tr \bigl(D\overline{T}(u)\bigr), \qquad \overline{T}(u)=AT(u)B=
\begin{pmatrix} \nu_{11}(u)& \nu_{12}(u)\\ \nu_{21}(u)&\nu_{22}(u)\end{pmatrix}.
\end{gather}
Thus, instead of the twisted monodromy matrix $T_K(u)$ we can consider a modified monodromy matrix
$\overline{T}(u)$ \eqref{trTK}. Respectively, the transfer matrix $\mathcal{T}(u)$ now can be understood
as the trace of the twisted modified monodromy matrix
$\overline{T}(u)$ with the diagonal twist $D$.

\subsection{Highest weight representation of the Yangian}\label{SS-HWR}

Recall that the Hilbert space of the Hamiltonian \eqref{HXXX} is $\mathcal{H}= \otimes_{k=1}^N V_k$, where $V_k=\mathbb{C}^2$. We define a highest weight vector $|0\rangle\in \mathcal{H}$ as the state with all spins up
\begin{gather*}%\label{vacdef}
|0\rangle=\left(\begin{smallmatrix}1\\0\end{smallmatrix}\right)_1\otimes\dots \otimes
\left(\begin{smallmatrix}1\\0\end{smallmatrix}\right)_N.
\end{gather*}
Then the action of the monodromy matrix entries $t_{ij}(u)$ \eqref{MonoT} on $|0\rangle$ is
\begin{gather}
t_{ii}(u)|0\rangle=\lambda_i(u)|0\rangle,\qquad i=1,2,\nonumber\\
 t_{21}(u)|0\rangle=0,\label{HWRG}
\end{gather}
where
\begin{gather}\label{lambdas}
\lambda_1(u)=\frac1{c^N}\prod_{k=1}^N(u-\theta_k+c),\qquad \lambda_2(u)=\frac1{c^N}\prod_{k=1}^N(u-\theta_k).
\end{gather}
The representation space of the Yangian is then spanned by the Bethe vectors $\bbb^m_0(\bv)$
\begin{gather*}%\label{BVdef}
\bbb^m_0(\bv)= \prod_{i=1}^m t_{12}(v_i)|0\rangle,\qquad m=0,1,\dots,N,
\end{gather*}
which provide a formal basis depending on the parameters $\bv=\{v_1,\dots,v_m\}$.

To study scalar products of Bethe vectors we also consider the dual highest weight vector $\langle0|=|0\rangle^T$ belonging to the dual space $\mathcal{H}^*$. This vector possesses the following properties
\begin{gather*}
\langle0| t_{ii}(u)=\lambda_i(u)\langle0|, \qquad i=1,2, \qquad
\langle0| t_{12}(u)=0, \qquad
\langle0| 0\rangle=1,%\label{dHWRG}
\end{gather*}
where the functions $\lambda_i(u)$ are given by \eqref{lambdas}. Dual Bethe vectors are constructed by successive application of the operator $t_{21}$ to the vector $\langle0|$
\begin{gather*}%\label{dBVdef}
\ccc^m_0(\bv)= \langle0|\prod_{i=1}^m t_{21}(v_i),\qquad m=0,1,\dots,N.
\end{gather*}

Within the framework of the MABA the states of the space $\mathcal{H}$ (resp.\ the dual space $\mathcal{H}^*$) are generated by the successive application of the operators $\nu_{12}$ (resp.~$\nu_{21}$) to the state $|0\rangle$ (resp.~$\langle0|$). The modified Bethe vectors are given by
\begin{gather}\label{modvect}
\bbb^m(\bv)= \prod_{i=1}^m \nu_{12}(v_i)|0\rangle,\qquad \ccc^m(\bv)= \langle0|\prod_{i=1}^m \nu_{21}(v_i).
\end{gather}
In these formulas, $\bv=\{v_1,\dots,v_m\}$ are generic complex numbers. They are called {\it Bethe para\-me\-ters}.

Observe that despite of the new operators $\nu_{ij}$ satisfy the same commutation relations as the~$t_{ij}(z)$,
\begin{gather}\label{genCRn}
[\nu_{ij}(u),\nu_{kl}(v)]=g(u,v)\bigl(\nu_{kj}(v)\nu_{il}(u)-\nu_{kj}(u)\nu_{il}(v)\bigr),
\end{gather}
their \looseness=-1 actions on the highest weight vector \eqref{HWRG} change. It is easy to see that now they are given by{\samepage
\begin{gather*}
\nu_{11}(u)|0\rangle= \lambda_1(u)|0\rangle+\beta_2\nu_{12}(u)|0\rangle, \nonumber\\
\nu_{22}(u)|0\rangle= \lambda_2(u)|0\rangle+\beta_1\nu_{12}(u)|0\rangle,\nonumber\\
\nu_{21}(u)|0\rangle=\big(\beta_1\lambda_1(u)+\beta_2\lambda_2(u)\big)|0\rangle+\beta_1\beta_2\nu_{12}(u)|0\rangle,%\label{act-sing}
\end{gather*}
where $\beta_i=\rho_i/\vk^+$.}

Concluding this section we would like to mention that the operators $t_{ii}(u)$ are polynomials in $u$ of degree $N$, while $t_{ij}(u)$ with $i\ne j$ are polynomials in $u$ of degree $N-1$. This statement follows from representation~\eqref{MonodR} and~\eqref{Rm}. Since for generic twist parameters, any $\nu_{ij}(u)$ is a~linear combination of all $t_{kl}(u)$, we conclude that the operators $\nu_{ij}(u)$ are polynomials in $u$ of degree $N$ for all $i$ and $j$.

\subsection{Shorthand notation}

Before moving on, we introduce a notation and several important conventions that will be used throughout the paper. First of all, we introduce the following rational functions:
\begin{gather}
g(u,v)=\frac{c}{u-v},\qquad f(u,v)=1+g(u,v)=\frac{u-v+c}{u-v}, \nonumber\\
h(u,v)= \frac{f(u,v)}{g(u,v)}= \frac{u-v+c}{c}.\label{gfh}
\end{gather}
Actually, all these functions depend on the difference of their arguments. However, we do not stress this dependence, which will allow us to introduce special shorthand notation for their products. It is easy to see that the functions introduced above possess the following properties:
\begin{gather*}%\label{x-gfh}
\chi(u,v)\Bigr|_{c\to -c}=\chi(v,u), \qquad \chi(-u,-v)=\chi(v,u),\qquad \chi(u-c,v)=\chi(u,v+c),
\end{gather*}
where $\chi$ is any of the three functions. One can also convince himself that
\begin{gather}\label{gfh-prop}
g(u,v-c)=\frac{1}{h(u,v)},\qquad h(u,v+c)= \frac{1}{g(u,v)}, \qquad f(u,v+c)=\frac{1}{f(v,u)}.
\end{gather}

Let us formulate now a convention on the notation. We denote sets of variables by a bar, for example, $\bu=\{u_1,\dots,u_n\}$. Notation $\bu\pm c$ means that $\pm c$ is added to all the arguments of the set~$\bu$. Individual elements of the sets or subsets are denoted by Latin subscripts, for instance, $u_j$ is an element of $\bu$. As a rule, the number of elements in the sets is not shown explicitly in the equations, however we give these cardinalities in special comments to the formulas, if necessary.

We also consider subsets of variables. We agree upon that the notation $\bu_k$ refers to a subset that is complementary to the element~$u_k$, that is, $\bu_k=\bu\setminus u_k$. In all other cases, we denote subsets by subscripts so that they can be easily distinguished from the elements of sets. In particular, when dealing with partitions of the sets into subsets we mostly denote the latter by the Roman numbers $\bu_{\so}$, $\bu_{\st}$, and so on. Notation $\{\bu_{\so},\bu_{\st}\}\vdash\bu$ means that the set $\bu$ is divided into two disjoint subsets $\bu_{\so}$ and $\bu_{\st}$. The order of the elements in each subset is not essential.

To make the formulas more compact, we use a shorthand notation for the products of the rational functions \eqref{gfh}, the operators~$\nu_{kl}(u)$ \eqref{MonoT}, and the vacuum eigenvalues~$\lambda_i(u)$ \eqref{HWRG}. Namely, if a function (an operator) depends on a~(sub)set of variables, then one should take a~product with respect to the corresponding (sub)set. For example,
\begin{gather}\label{shn}
\nu_{kl}(\bu)=\prod_{u_j\in \bu} \nu_{kl}(u_j), \qquad f(z,\bu_{\so})=\prod_{u_j\in \bu_{\so}} f(z,u_j), \qquad f(\bu_k,u_k)=
\prod_{\substack{u_j\in \bu\\u_j\ne u_k}} f(u_j,u_k).
\end{gather}
Note that due to commutativity of the $\nu_{kl}$-operators (which follows from~\eqref{genCRn}) the first product in~\eqref{shn} is well defined. Notation $f(\bu,\bv)$ means a double product over the sets $\bu$ and $\bv$. By definition any product over the empty set is equal to~$1$. A double product is equal to~$1$ if at least one of the sets is empty.

In particular, using this convention we can write down the modified Bethe vectors and their dual ones \eqref{modvect} in the form
\begin{gather*}%\label{modvect1}
\bbb^m(\bv)= \nu_{12}(\bv)|0\rangle,\qquad \ccc^m(\bv)= \langle0| \nu_{21}(\bv),
\end{gather*}
where $\bv=\{v_1,\dots,v_m\}$. The eigenvalues \eqref{lambdas} take the form
\begin{gather}\label{lambdas1}
\lambda_1(u)=h(u,\bth),\qquad \lambda_2(u)=g(u,\bth)^{-1},\qquad \frac{\lambda_1(u)}{\lambda_2(u)}=f(u,\bth),
\end{gather}
where $\bth=\{\theta_1,\dots,\theta_N\}$.

\subsection{On-shell Bethe vectors}\label{S-OSBV}

Within the framework of the MABA the operator \eqref{trTK}
\begin{gather*}%\label{TrT}
\mathcal{T}(z)=\tr \bigl(D\overline{T}(z)\bigr) = (\tvk-\rho_1)\nu_{11}(z)+(\vk-\rho_2)\nu_{22}(z)
\end{gather*}
appears to be a generating function of the integrals of motion. Thus, the eigenstates of this operator also are the eigenstates of the Hamiltonian \eqref{HXXX}. They are commonly called {modified on-shell Bethe vectors}.

The on-shell Bethe vectors in the MABA solvable models were found in \cite{BelP, CYSW13a,Cram14}. Here we briefly recall this construction.

Let $\bu=\{u_1,\dots,u_N\}$, where $N$ is the number of sites of the chain. Then a modified Bethe vector $\bbb^N(\bu)$ \eqref{modvect} becomes on-shell, if the Bethe parameters $\bu$ satisfy a system of modified Bethe equations
\begin{gather}
(\vk-\rho_2)\lambda_2(u_j)f(u_j,\bu_j)-(\tvk-\rho_1)\lambda_1(u_j)f(\bu_j,u_j)\nonumber\\
\qquad{} +(\rho_1+\rho_2) g(u_j,\bu_j)\lambda_1(u_j)\lambda_2(u_j)=0,\label{BE00}
\end{gather}
for $j=1,\dots,N$. Then
\begin{gather*}%\label{EigVeq}
\mathcal{T}(z)\bbb^N(\bu)=\Lambda(z|\bu)\bbb^N(\bu),
\end{gather*}
where the eigenvalue $\Lambda(z|\bu)$ is
\begin{gather}\label{Lam}
\Lambda(z|\bu)=(\tvk-\rho_1)\lambda_1(z)f(\bu,z) +(\vk-\rho_2)\lambda_2(z)f(z,\bu) +(\rho_1+\rho_2)\lambda_1(z)\lambda_2(z)g(z,\bu).
\end{gather}

For further application, it is convenient to introduce a function (cf.~\cite{KitMST05})
\begin{gather}\label{Y-def}
\mathcal{Y}(z|\bu)=\frac{\Lambda(z|\bu)}{\lambda_2(z)g(z,\bu)}.
\end{gather}
The explicit expression for this function reads
\begin{gather*}%\label{Y-expl}
\mathcal{Y}(z|\bu)=(-1)^N(\tvk-\rho_1)\frac{\lambda_1(z)}{\lambda_2(z)}h(\bu,z) +(\vk-\rho_2)h(z,\bu) +(\rho_1+\rho_2)\lambda_1(z).
\end{gather*}
Then Bethe equations \eqref{BE00} take the form
\begin{gather*}%\label{BE-Y}
\mathcal{Y}(u_j|\bu)=0,\qquad j=1,\dots,N.
\end{gather*}

Similarly, a modified dual Bethe vector $\ccc^N(\bu)$ \eqref{modvect} becomes on-shell, if the Bethe para\-meters $\bu$ satisfy the system~\eqref{BE00}. Then
\begin{gather*}%\label{dEigVeq}
\ccc^N(\bu) \mathcal{T}(z)=\Lambda(z|\bu)\ccc^N(\bu),
\end{gather*}
with the eigenvalue \eqref{Lam}.

The main goal of this paper is to study the scalar products of the modified Bethe vectors, in which at least one of the vectors is on-shell. In fact, it is enough to consider the case when the vector $\bbb^N(\bu)$ is on-shell, while the dual state $\ccc^N(\bv)$ is a generic modified dual Bethe vector (see~\eqref{symSP} below).

\section{Modified Izergin determinant and scalar products} \label{A-MDID}

Within the framework of the ABA, the Izergin determinant allows us to obtain a formula for the scalar product of Bethe vectors as a sum over partitions of the Bethe parameters~\cite{Ize87, Kor82}. For the MABA, we should introduce the modified Izergin determinant. Then one can obtain an analogous formula for the scalar product of the modified Bethe vectors~\cite{BelSV18c}.

\subsection{Modified Izergin determinant \label{MID}}

\begin{Definition}Let $\bu=\{u_1,\dots,u_n\}$, $\bv=\{v_1,\dots,v_m\}$, and $z$ be complex numbers. Then the modified Izergin determinant $K_{n,m}^{(z)}(\bu|\bv)$ is defined by
\begin{gather}\label{defKdef1}
K_{n,m}^{(z)}(\bu|\bv)=\det_m\left(-z\delta_{jk}+\frac{f(\bu,v_j)f(v_j,\bv_j)}{h(v_j,v_k)}\right).
\end{gather}
Alternatively the modified Izergin determinant can be presented as
\begin{gather}\label{defKdef2}
K_{n,m}^{(z)}(\bu|\bv)=(1-z)^{m-n}\det_n\left(\delta_{jk}f(u_j,\bv)-z\frac{f(u_j,\bu_j)}{h(u_j,u_k)}\right).
\end{gather}
\end{Definition}

Recall that we use the shorthand notation for the products \eqref{shn} in representations \eqref{defKdef1} and \eqref{defKdef2}. The proof of the equivalence of these representations can be found in~\cite{GorZZ14}. It is based on the recursive property~\eqref{resK}.

It is also convenient to introduce a conjugated modified Izergin determinant as
\begin{gather}\label{CdefKdef1}
\overline{K}_{n,m}^{(z)}(\bu|\bv)=K_{n,m}^{(z)}(\bu|\bv)\Bigr|_{c\to -c}
=\det_m\left(-z\delta_{jk}+\frac{f(v_j,\bu)f(\bv_j,v_j)}{h(v_k,v_j)}\right),
\end{gather}
or equivalently,
\begin{gather*}%\label{CdefKdef2}
\overline{K}_{n,m}^{(z)}(\bu|\bv)=(1-z)^{m-n}\det_n\left(\delta_{jk}f(\bv,u_j)-z\frac{f(\bu_j,u_j)}{h(u_k,u_j)}\right).
\end{gather*}

In the particular case $z=1$ and $\#\bu=\#\bv=n$ the modified Izergin determinant turns into the ordinary Izergin determinant, that we traditionally denote by $K_{n}(\bu|\bv)$:
\begin{gather*}%\label{ModI-OrdI}
K_{n,n}^{(1)}(\bu|\bv)=K_{n}(\bu|\bv).
\end{gather*}
This property can be seen from the recursion \eqref{resK} and the initial condition \eqref{K0}. Let us recall one more representation for the ordinary Izergin determinant~\cite{Ize87}
\begin{gather}\label{OrdIK}
K_{n}(\bu|\bv)=h(\bu,\bv)\Delta'(\bu)\Delta(\bv)\det_n \left(\frac{g(u_j,v_k)}{h(u_j,v_k)}\right),
\end{gather}
where
\begin{gather}\label{Delta}
 \Delta(\bv)=\prod_{1\le j<k\le n} g(v_k,v_j),\qquad \Delta'(\bu)=\prod_{1\le j<k\le n} g(u_j,u_k).
 \end{gather}

It is easy to see that the modified Izergin determinants $K_{n,m}^{(z)}(\bu|\bv)$ and $\overline{K}_{n,m}^{(z)}(\bu|\bv)$ are rational functions of $\bu$ and $\bv$. They are symmetric over $\bu$ and symmetric over $\bv$. Other properties of the modified Izergin determinant are collected in Appendix~\ref{A-PMID}.

\subsection{Scalar product \label{SS}}

\begin{Definition}Let $\bu=\{u_1,\dots,u_n\}$, $\bv=\{v_1,\dots,v_m\}$. The scalar product of two modified Bethe vectors is defined as
\begin{gather*}%\label{SPDEF}
S_\nu^{m,n}(\bv,\bu)=\ccc^m(\bv)\bbb^n(\bu)=\langle0| \nu_{21}(\bv)\nu_{12}(\bu)|0\rangle.
\end{gather*}
\end{Definition}

The function $S_\nu^{m,n}(\bv,\bu)$ has several important properties. First, it follows from the commutation relations $[\nu_{12}(u),\nu_{12}(v)]=0$ and $[\nu_{21}(u),\nu_{21}(v)]=0$ that the scalar product is a symmetric function of $\bv$ and a symmetric function of~$\bu$. Furthermore, since any $\nu_{ij}(z)$ is a polynomial in $z$ of degree $N$, the scalar product is a polynomial in any $v_j$ and in any $u_j$ of degree $N$.

\begin{Proposition}[\cite{BelSV18c}]\label{MSPfor}Let $\#\bu=n$ and $\#\bv=m$. Then
\begin{gather}
S_\nu^{m,n}(\bv,\bu)=\mu^{2m}(\mu-1)^{n-m}
\sum_{\substack{\{\bv_{\so},\bv_{\st}\}\vdash\bv\\\{\bu_{\so},\bu_{\st}\}\vdash\bu }}
\beta_1^{p_{\st}-q_{\st}}\beta_2^{p_{\so}-q_{\so}}\lambda_2(\bv_{\so})\lambda_1(\bv_{\st}) \lambda_2(\bu_{\st})\lambda_1(\bu_{\so})\nonumber\\
\hphantom{S_\nu^{m,n}(\bv,\bu)=}{} \times f(\bv_{\so}, \bv_{\st}) f(\bu_{\st},\bu_{\so})
 K_{q_{\st},p_{\st}}^{(1/\mu)}(\bu_{\st}|\bv_{\st}) \overline{K}_{q_{\so},p_{\so}}^{(1/\mu)}(\bu_{\so}|\bv_{\so}).\label{SP-finuv}
\end{gather}
Here $p_{\so}=\#\bv_{\so}$, $p_{\st}=\#\bv_{\st}$, $q_{\so}=\#\bu_{\so}$, and $q_{\st}=\#\bu_{\st}$. The sum is taken over all partitions $\{\bv_{\so},\bv_{\st}\}\vdash\bv$ and $\{\bu_{\so},\bu_{\st}\}\vdash\bu$. There is no restriction on the cardinalities of the subsets. The functions $K^{(1/\mu)}_{q_{\st},p_{\st}}$ and $\overline{K}^{(1/\mu)}_{q_{\so},p_{\so}}$ respectively are the modified Izergin determinants~\eqref{defKdef1} and~\eqref{CdefKdef1} at $z=1/\mu$.
\end{Proposition}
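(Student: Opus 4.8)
The plan is to exploit the fact that the modified operators $\nu_{ij}$ obey the \emph{same} $\mathfrak{gl}_2$ commutation relations \eqref{genCRn} as the original $t_{ij}$, so that the only novelty relative to the standard ABA lies in the modified action on the reference states. I would therefore compute $S_\nu^{m,n}(\bv,\bu)=\langle0|\nu_{21}(\bv)\nu_{12}(\bu)|0\rangle$ by the usual drag-through method: commute the product $\nu_{21}(\bv)$ rightward past $\nu_{12}(\bu)$ until every $\nu_{21}$, and every diagonal operator produced along the way, acts directly on $|0\rangle$, whereupon one substitutes the modified vacuum relations. Since the scalar product is symmetric in $\bv$ and in $\bu$, it suffices to single out one operator at a time, which sets up an induction on $m=\#\bv$, the base case $m=0$ being $\langle0|\nu_{12}(\bu)|0\rangle$, evaluated directly from the vacuum action.

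For the inductive step I would move $\nu_{21}(v)$ for a fixed $v\in\bv$ to the right. Each time it meets a $\nu_{12}(u_k)$ the relation \eqref{genCRn} with $(i,j,k,l)=(2,1,1,2)$ yields a ``transmission'' term, in which $\nu_{12}(u_k)$ survives up to a rational factor, and a ``reflection'' term proportional to $g(v,u_k)\bigl(\nu_{11}(u_k)\nu_{22}(v)-\nu_{11}(v)\nu_{22}(u_k)\bigr)$. The diagonal operators $\nu_{11},\nu_{22}$ so produced must themselves be transported to $|0\rangle$ (again through instances of \eqref{genCRn}), where the rules $\nu_{11}(u)|0\rangle=\lambda_1(u)|0\rangle+\beta_2\nu_{12}(u)|0\rangle$, $\nu_{22}(u)|0\rangle=\lambda_2(u)|0\rangle+\beta_1\nu_{12}(u)|0\rangle$ and $\nu_{21}(u)|0\rangle=\bigl(\beta_1\lambda_1(u)+\beta_2\lambda_2(u)\bigr)|0\rangle+\beta_1\beta_2\nu_{12}(u)|0\rangle$ generate both a pure-$\lambda$ contribution and a fresh $\nu_{12}$ that re-enters the product. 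This is the mechanism producing the $\mu$, $\beta_1$, $\beta_2$ and $\lambda_i$ factors of \eqref{SP-finuv}.

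The organising principle is a bipartition of each set. Tracking whether a given $v_j$ (resp.\ $u_j$) is ultimately absorbed through the $\lambda_2,\beta_2$ channel or the $\lambda_1,\beta_1$ channel assigns it to $\bv_{\so}$ or $\bv_{\st}$ (resp.\ $\bu_{\so}$ or $\bu_{\st}$), which accounts for the prefactors $\beta_1^{p_{\st}-q_{\st}}\beta_2^{p_{\so}-q_{\so}}\lambda_2(\bv_{\so})\lambda_1(\bv_{\st})\lambda_2(\bu_{\st})\lambda_1(\bu_{\so})$ together with the decoupling factors $f(\bv_{\so},\bv_{\st})f(\bu_{\st},\bu_{\so})$. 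Within each channel the remaining sum over the internal $\bu$--$\bv$ pairings is exactly the combinatorial content of Izergin's partition function; I would match it to $K^{(1/\mu)}_{q_{\st},p_{\st}}(\bu_{\st}|\bv_{\st})$ and to its $c\to-c$ conjugate $\overline{K}^{(1/\mu)}_{q_{\so},p_{\so}}(\bu_{\so}|\bv_{\so})$ by showing that the candidate right-hand side of \eqref{SP-finuv} satisfies the same recursion as $S_\nu^{m,n}$, using the recursive property \eqref{resK} and the initial data \eqref{K0} of the modified Izergin determinant together with the elementary identities \eqref{gfh-prop} for $f,g,h$.

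The main obstacle is precisely this last resummation: proving that the proliferating sum of transmission, reflection and vacuum terms collapses into the two modified Izergin determinants with \emph{exactly} the stated data, in particular the deformation value $z=1/\mu$ and the overall normalisation $\mu^{2m}(\mu-1)^{n-m}$. I expect the powers of $\mu$ to come from the $\sqrt\mu$ in the gauge factors $A,B$ (one $\mu$ per $\nu$-operator), while the shift to $z=1/\mu$ and the compensating $(\mu-1)^{n-m}$ should follow from the constraint \eqref{murho} relating $\rho_1,\rho_2,\mu$ and from the factor $(1-z)^{m-n}$ built into \eqref{defKdef2}. To keep the bookkeeping honest I would first fix all conventions on the base cases $n=m=0$ and $n=m=1$, and then complete the induction by comparing coefficients channel by channel.
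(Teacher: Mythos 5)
First, a point of orientation: the paper itself contains \emph{no} proof of Proposition~\ref{MSPfor}. The statement is imported from \cite{BelSV18c} (the sentence right after it says exactly this), and the present paper only uses it as the starting point for Theorem~\ref{Mtheor1}. So your proposal cannot be compared with an internal argument; it has to stand as a self-contained derivation, and as such it has genuine gaps rather than mere unfinished bookkeeping.

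The first gap is your base case. For $m=0$ the quantity $\langle0|\nu_{12}(\bu)|0\rangle$ is \emph{not} ``evaluated directly from the vacuum action'': the right action of $\nu_{12}$ on $|0\rangle$ simply reproduces the modified Bethe vector, and the dual vacuum does not annihilate $\nu_{12}$ (only $\langle0|t_{12}=0$ holds, while $\nu_{12}$ is a linear combination of all four $t_{kl}$, as noted at the end of Section~\ref{SS-HWR}). Indeed, specializing \eqref{SP-finuv} to $m=0$ and using \eqref{K0} gives
\begin{gather*}
S_\nu^{0,n}(\varnothing,\bu)=(\mu-1)^{n}\sum_{\{\bu_{\so},\bu_{\st}\}\vdash\bu}
\beta_1^{-q_{\st}}\beta_2^{-q_{\so}}\lambda_1(\bu_{\so})\lambda_2(\bu_{\st})f(\bu_{\st},\bu_{\so}),
\end{gather*}
already a nontrivial sum over all partitions of $\bu$, carrying essentially the full combinatorial weight of the proposition. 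Computing it within your scheme would require the left actions $\langle0|\nu_{11}(u)$, $\langle0|\nu_{22}(u)$, $\langle0|\nu_{12}(u)$, which the paper never supplies and your plan never derives. Second, the induction does not close as described: both termination mechanisms of the ordinary ABA fail here, since $\nu_{21}(v)|0\rangle$ contains the term $\beta_1\beta_2\nu_{12}(v)|0\rangle$ and the diagonal operators $\nu_{11}$, $\nu_{22}$ hitting $|0\rangle$ also regenerate $\nu_{12}$'s. Dragging the last $\nu_{21}$ to the vacuum therefore produces contributions proportional to $S_\nu^{m-1,n+1}$ --- the number of creation operators \emph{grows}. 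One can still induct on $m$ alone with $n$ arbitrary at each level, but this must be stated and the cascade controlled; that cascade is precisely what generates $(\mu-1)^{n-m}$ and the shift to $z=1/\mu$, i.e., the substance of the claim.

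Third, the mechanism you propose for the final identification --- showing that both sides satisfy ``the same recursion'' built on \eqref{resK} and \eqref{K0} --- cannot work as stated. As the paper points out in Section~\ref{SS}, $S_\nu^{m,n}$ is a polynomial in every $u_j$ and every $v_k$; it has no poles at $u_j=v_k$, whereas \eqref{resK} is a statement about exactly such poles. The poles of individual terms of \eqref{SP-finuv} cancel in the sum, so a residue-matching argument says nothing about the full object. A workable characterization must instead combine polynomiality and symmetry with evaluation at sufficiently many special points (for instance $\bv$ drawn from $\{\bth,\bth-c\}$, which is the device the present paper uses in Section~\ref{S-DPSP} when it \emph{applies} Proposition~\ref{MSPfor}), or else avoid the drag-through altogether by expanding modified Bethe vectors over ordinary ones and resumming. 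Since the resummation into the two modified Izergin determinants is exactly what you defer as ``the main obstacle'', the proposal in its current form is a plan for a proof, not a proof.
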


Equation \eqref{SP-finuv} was derived in \cite{BelSV18c}.

\begin{Proposition}\label{Pmor-SP}The scalar product of generic modified Bethe vectors satisfies a condition
\begin{gather}\label{symSP}
S_\nu^{m,n}(\bv,\bu)=\left(\frac{\vk_-}{\vk_+}\right)^{m-n}S_\nu^{n,m}(\bu,\bv).
\end{gather}
\end{Proposition}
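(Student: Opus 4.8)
The plan is to establish the symmetry relation \eqref{symSP} by comparing the explicit partition-sum formula \eqref{SP-finuv} for $S_\nu^{m,n}(\bv,\bu)$ with the same formula applied to $S_\nu^{n,m}(\bu,\bv)$, and then identifying the proportionality factor. First I would write down \eqref{SP-finuv} with the roles of the two sets interchanged, i.e.\ replace $\bv\to\bu$, $\bu\to\bv$, $m\to n$, $n\to m$, and rename the partition subsets accordingly: $\{\bu_{\so},\bu_{\st}\}\vdash\bu$ now plays the role of the ``$\bv$'' partition and $\{\bv_{\so},\bv_{\st}\}\vdash\bv$ the role of the ``$\bu$'' partition. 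This gives $S_\nu^{n,m}(\bu,\bv)$ as a sum over the very same pairs of partitions that appear in $S_\nu^{m,n}(\bv,\bu)$, so the two sums can be matched term by term.

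Next I would compare the summands. The combinatorial prefactor $\mu^{2m}(\mu-1)^{n-m}$ becomes $\mu^{2n}(\mu-1)^{m-n}$, contributing a ratio $\mu^{2(n-m)}(\mu-1)^{2(m-n)}$. The $\lambda$-factors and the $f(\bv_{\so},\bv_{\st})f(\bu_{\st},\bu_{\so})$ factors are symmetric under the swap and should match directly once one pairs the subset labelled $\so$ in one sum with the appropriate label in the other. The exponents of $\beta_1,\beta_2$ transform in a controlled way under the interchange. The key identity I expect to need is a relation between the two Izergin building blocks under exchange of arguments together with the conjugation $c\to-c$ — namely that $K^{(z)}_{q,p}(\bu|\bv)$ and $\overline{K}^{(z)}_{p,q}(\bv|\bu)$ are related by a simple factor, which follows from the definitions \eqref{defKdef1}, \eqref{CdefKdef1} and the parity property $\chi(u,v)\big|_{c\to-c}=\chi(v,u)$ of $g,f,h$. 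Collecting the constant factor $\mu$ (via \eqref{murho}) and the ratios $\beta_i=\rho_i/\vk^+$ together with $\vk^+\vk^-$, I would show the net proportionality constant collapses to $(\vk_-/\vk_+)^{m-n}$.

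The main obstacle will be bookkeeping the powers of $\beta_1,\beta_2$, $\mu$, and the $\vk^{\pm}$ that emerge from the two Izergin determinants and the prefactors, and verifying that all the $\mu$-dependent and $\beta$-dependent pieces cancel except for the clean factor $(\vk_-/\vk_+)^{m-n}$. In particular one must check that when the exponents of $\beta_1=\rho_1/\vk^+$ and $\beta_2=\rho_2/\vk^+$ are rearranged under the swap, the residual $\vk^+$ and $\vk^-$ dependence assembles into precisely $(\vk_-/\vk_+)^{m-n}$ using the constraint \eqref{murho}. An alternative, cleaner route would bypass the explicit sum: one could instead use a structural argument at the level of the operators $\nu_{21},\nu_{12}$ themselves. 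Since $\overline T(u)=AT(u)B$ with $A,B$ built from $\vk^{\pm}$ and $\rho_i$, the entries $\nu_{21}$ and $\nu_{12}$ are related under a transposition/reflection symmetry of the monodromy matrix, and \eqref{symSP} would then follow from how $\langle0|$ and $|0\rangle$ transform. I would attempt the direct term-by-term comparison first, since every ingredient is already explicit in \eqref{SP-finuv}, and fall back on the operator-level symmetry only if the combinatorial cancellation proves too delicate.
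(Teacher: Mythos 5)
Your proposal is correct and takes essentially the same approach as the paper: interchange $\bu\leftrightarrow\bv$ in \eqref{SP-finuv}, use the exchange relation \eqref{c-c} between $K^{(z)}$ and $\overline{K}^{(z)}$ together with the relabeling $\bu_{\so}\leftrightarrow\bu_{\st}$, $\bv_{\so}\leftrightarrow\bv_{\st}$ to identify the two partition sums term by term, and reduce the leftover constant $\bigl(\mu\beta_1\beta_2/(\mu-1)\bigr)^{n-m}$ to $(\vk_-/\vk_+)^{m-n}$ via \eqref{murho}. The key identity you anticipate for the Izergin blocks is precisely \eqref{c-c}, so the bookkeeping you flag goes through exactly as the paper carries it out.
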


\begin{proof} Replacing in \eqref{SP-finuv} $\bu\leftrightarrow\bv$, $n\leftrightarrow m$, $q_{\so}\leftrightarrow p_{\so}$, and $q_{\st}\leftrightarrow p_{\st}$ we obtain
\begin{gather}
S_\nu^{n,m}(\bu,\bv)=\mu^{2n}(\mu-1)^{m-n}
\sum_{\substack{\{\bv_{\so},\bv_{\st}\}\vdash\bv\\\{\bu_{\so},\bu_{\st}\}\vdash\bu }}
\beta_1^{q_{\st}-p_{\st}}\beta_2^{q_{\so}-p_{\so}}\lambda_2(\bu_{\so})\lambda_1(\bu_{\st}) \lambda_2(\bv_{\st})\lambda_1(\bv_{\so})\nonumber\\
\hphantom{S_\nu^{n,m}(\bu,\bv)=}{}\times f(\bu_{\so}, \bu_{\st}) f(\bv_{\st},\bv_{\so})
 K_{p_{\st},q_{\st}}^{(1/\mu)}(\bv_{\st}|\bu_{\st}) \overline{K}_{p_{\so},q_{\so}}^{(1/\mu)}(\bv_{\so}|\bu_{\so}).\label{SP-finvu}
\end{gather}
Using \eqref{c-c} we find
\begin{gather*}%\label{KK-KK}
K_{p_{\st},q_{\st}}^{(1/\mu)}(\bv_{\st}|\bu_{\st}) \overline{K}_{p_{\so},q_{\so}}^{(1/\mu)}(\bv_{\so}|\bu_{\so})
=\left(1-\frac1\mu\right)^{n-m}\overline{K}_{q_{\st},p_{\st}}^{(1/\mu)}(\bu_{\st}|\bv_{\st}) K_{q_{\so},p_{\so}}^{(1/\mu)}(\bu_{\so}|\bv_{\so}),
\end{gather*}
and hence,
\begin{gather*}
S_\nu^{n,m}(\bu,\bv)=\mu^{n+m}
\sum_{\substack{\{\bv_{\so},\bv_{\st}\}\vdash\bv\\\{\bu_{\so},\bu_{\st}\}\vdash\bu }}
\beta_1^{q_{\st}-p_{\st}}\beta_2^{q_{\so}-p_{\so}}\lambda_2(\bu_{\so})\lambda_1(\bu_{\st}) \lambda_2(\bv_{\st})\lambda_1(\bv_{\so})\nonumber\\
\hphantom{S_\nu^{n,m}(\bu,\bv)=}{}\times f(\bu_{\so}, \bu_{\st}) f(\bv_{\st},\bv_{\so})
 \overline{K}_{q_{\st},p_{\st}}^{(1/\mu)}(\bu_{\st}|\bv_{\st}) K_{q_{\so},p_{\so}}^{(1/\mu)}(\bu_{\so}|\bv_{\so}).%\label{SP-finvu1}
\end{gather*}
Relabeling the subsets as $\bu_{\so}\leftrightarrow\bu_{\st}$ and $\bv_{\so}\leftrightarrow\bv_{\st}$ (and respectively relabeling their cardinalities) we finally arrive at
\begin{gather}
S_\nu^{n,m}(\bu,\bv)=\mu^{n+m}(\beta_1\beta_2)^{n-m}
\sum_{\substack{\{\bv_{\so},\bv_{\st}\}\vdash\bv\\\{\bu_{\so},\bu_{\st}\}\vdash\bu }}
\beta_1^{p_{\st}-q_{\st}}\beta_2^{p_{\so}-q_{\so}}\lambda_2(\bu_{\st})\lambda_1(\bu_{\so}) \lambda_2(\bv_{\so})\lambda_1(\bv_{\st})\nonumber\\
\hphantom{S_\nu^{n,m}(\bu,\bv)=}{}\times f(\bu_{\st}, \bu_{\so}) f(\bv_{\so},\bv_{\st}) \overline{K}_{q_{\so},p_{\so}}^{(1/\mu)}(\bu_{\so}|\bv_{\so}) K_{q_{\st},p_{\st}}^{(1/\mu)}(\bu_{\st}|\bv_{\st}).\label{SP-finvu2}
\end{gather}
Comparing equations \eqref{SP-finvu2} and \eqref{SP-finvu} we see that
\begin{gather*}%\label{ratS}
\frac{S_\nu^{m,n}(\bv,\bu)}{S_\nu^{n,m}(\bu,\bv)}=\left(\frac{\mu\beta_1\beta_2}{\mu-1}\right)^{n-m},
\end{gather*}
and substituting here explicit expressions for $\beta_i=\rho_i/\vk_+$ and $\mu$ \eqref{murho} we immediately arrive at~\eqref{symSP}.\end{proof}

\section{Main results}\label{S-MR}

In this section we give a list of the main results obtained in this paper. Most of the proofs are given in the remaining part of the text.

\subsection{Determinant representations for the scalar product}\label{SS-DRSP}

\begin{Theorem}\label{Mtheor1} Let $\#\bu=\#\bv=N$. Let the set $\bu$ solve Bethe equations \eqref{BE00}, while the set $\bv$ consist of arbitrary complex numbers. Then the scalar product $S_\nu^N(\bv,\bu)$ of the modified on-shell Bethe vector $\bbb^N(\bu)$ and generic dual modified Bethe vector $\ccc^N(\bv)$ has the following determinant representation:
\begin{gather}\label{Svu}
S_\nu^N(\bv,\bu)=\frac{(\mu\beta)^N \Delta(\bv)\Delta'(\bu)\lambda_2(\bu)\lambda_2(\bv)}{(\beta\vk+\tvk-2\rho_1)^N}
K_{N,N}^{(-1/\beta)}(\bu|\bth) \det_N\left(\frac{c}{g(v_k,\bu)\lambda_2(v_k)}\frac{\partial\Lambda(v_k|\bu)}{\partial u_j}\right).\!\!\!
\end{gather}
Here $\Lambda(v|\bu)$ is the eigenvalue~\eqref{Lam}, $\beta=\rho_1/\rho_2$, and $ \Delta(\bv)$, $\Delta'(\bu)$ are given by \eqref{Delta} with $n=N$.
\end{Theorem}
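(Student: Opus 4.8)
The natural point of departure is Proposition~\ref{MSPfor}: specialising \eqref{SP-finuv} to $\#\bu=\#\bv=N$ makes the factor $(\mu-1)^{n-m}$ equal to $1$ and leaves $S_\nu^N(\bv,\bu)$ as $\mu^{2N}$ times a double sum over partitions $\{\bu_{\so},\bu_{\st}\}\vdash\bu$ and $\{\bv_{\so},\bv_{\st}\}\vdash\bv$ of products of the two modified Izergin determinants $K^{(1/\mu)}$ and $\overline{K}^{(1/\mu)}$. The whole task is to use the hypothesis that $\bu$ solves \eqref{BE00} to collapse this double sum into the single $N\times N$ determinant of \eqref{Svu}. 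Since both sides of \eqref{Svu} are symmetric polynomials of degree $N$ in each $v_k$, a uniqueness/characterisation argument is conceivable; but the route dictated by the paper's own framing — and the one I would follow — is to resum the partition sum directly.

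The plan is to resum the double partition sum, and the crucial input is the on-shell condition. The Bethe equations \eqref{BE00}, or rather the alternative forms to be established in Section~\ref{S-AFBE}, let one eliminate the vacuum eigenvalues $\lambda_1$, $\lambda_2$ evaluated on the subsets $\bu_{\so}$, $\bu_{\st}$ — which via \eqref{lambdas1} carry the entire $\bth$-dependence — in favour of products of $f$-functions of $\bu$. Once this substitution is made, I expect the $\bth$-dependence to organise itself into the single on-shell modified Izergin determinant $K_{N,N}^{(-1/\beta)}(\bu|\bth)$, which then factors out as a common prefactor; controlling this factor is exactly the purpose of the properties of the on-shell modified Izergin determinant collected in Section~\ref{S-MIDOS} and Appendices~\ref{A-ROSID}, \ref{A-PPL}. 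The residual $\bu$- and $\bv$-dependence should then assemble into a single determinant. The derivative columns $\partial\Lambda(v_k|\bu)/\partial u_j$ arise naturally: in \eqref{Lam} the eigenvalue $\Lambda(z|\bu)$ is a sum of three terms, each a product over all of $\bu$, so that $\partial_{u_j}$ isolates one factor at $u_j$ while leaving products over $\bu_j$ — precisely the combinatorial data of a partition that singles out one Bethe root per column.

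It then remains to convert the ordinary Izergin determinants produced along the way into their explicit Cauchy form via \eqref{OrdIK}, which supplies the Vandermonde factors $\Delta(\bv)\Delta'(\bu)$ of \eqref{Delta} together with the normalisation $\lambda_2(\bu)\lambda_2(\bv)$, and to chase the scalar constant $(\mu\beta)^N/(\beta\vk+\tvk-2\rho_1)^N$ through the bookkeeping of $\beta_1$, $\beta_2$ and $\mu$ using \eqref{murho}. The main obstacle I anticipate is the reduction just described: showing that, once the Bethe equations are inserted, the partition sum cleanly factorises into the single on-shell determinant $K_{N,N}^{(-1/\beta)}(\bu|\bth)$ while simultaneously assembling the columns $\partial\Lambda(v_k|\bu)/\partial u_j$. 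This is a genuine determinant identity relating a weighted sum over subsets to a derivative of the transfer-matrix eigenvalue, and it is here — rather than in the routine prefactor bookkeeping — that the alternative Bethe equations of Section~\ref{S-AFBE} and the special structure of the on-shell modified Izergin determinant are indispensable.
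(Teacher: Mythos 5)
Your proposal stops exactly where the real work begins: you describe the desired outcome of the resummation (the double partition sum collapsing into $K_{N,N}^{(-1/\beta)}(\bu|\bth)$ times the Jacobian) and then flag it as ``the main obstacle I anticipate.'' That identity essentially \emph{is} Theorem~\ref{Mtheor1}; the prefactor bookkeeping around it is routine. Moreover, no mechanism is offered for handling the sum over partitions of $\bv$ for \emph{generic} $\bv$, and none of the tools you invoke does that job: the alternative Bethe equations \eqref{BE7}, \eqref{BEj} are relations indexed by, and summed over, the inhomogeneities $\bth$ (they enter the paper only through the on-shell Izergin-determinant identities of Section~\ref{S-MIDOS} and Theorem~\ref{MT}), while the elimination of the vacuum eigenvalues on subsets of $\bu$ is achieved off shell by the trivial identity $\lambda_1(u)/\lambda_2(u)=f(u,\bth)$ of \eqref{lambdas1}, not by the on-shell condition.

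The idea you explicitly set aside --- the uniqueness/characterisation argument --- is precisely what makes the paper's proof tractable. Since $S_\nu^{N}(\bv,\bu)$ is a symmetric polynomial of degree $N$ in each $v_k$, it is determined by its values at the $\binom{2N}{N}$ points $\bv=\{\bth_{A},\bth_{B}-c\}$ with $\{\bth_{A},\bth_{B}\}\vdash\bth$; at such points $\lambda_2(\theta_j)=0$ and $\lambda_1(\theta_j-c)=0$, so the sum over partitions of $\bv$ in \eqref{SP-finuv} \emph{freezes} to a single term, and only the $\bu$-sum remains. That remaining sum is evaluated in Sections~\ref{SS-TSPGU} and~\ref{SS-FTSPOS} by repeated use of \eqref{K-sumpart}, \eqref{Kinv1}, \eqref{sun-Kf} and, once $\bu$ is put on shell, Theorem~\ref{MT}. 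Note also that this route proves the representation \eqref{Svunew1} --- the one involving $K_{N,N}^{(\mu+(\mu-1)/\beta)}(\bu|\bth)\,K_{2N,N}^{(\alpha)}(\{\bu,\bv\}|\bth)$ --- and the Jacobian form \eqref{Svu} you are targeting is only then recovered through the separate equivalence of Theorem~\ref{Mtheor2} (Section~\ref{S-TJ}), where a Cauchy-matrix transformation, contour integrals, and the on-shell identity \eqref{finId} produce the factor $K_{N,N}^{(-1/\beta)}(\bu|\bth)$. So even your expectation of what the resummation should yield is misplaced: the $(-1/\beta)$-determinant does not emerge from the partition sum; it emerges from the Jacobian transformation, and your heuristic that $\partial\Lambda(v_k|\bu)/\partial u_j$ encodes ``one Bethe root per column'' does not match the structure of \eqref{SP-finuv}, whose partitions have unrestricted cardinalities.
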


Representation \eqref{Svu} was conjectured in~\cite{BelP}. We also would like to point out that similarly to the scalar products of on-shell and off-shell Bethe vectors in ABA \cite{KitMT99, Sla89} this representation involves Jacobian of the transfer matrix eigenvalue:
\begin{gather}
\frac{c}{g(v_k,\bu)\lambda_2(v_k)}\frac{\partial\Lambda(v_k|\bu)}{\partial u_j}
=(-1)^{N-1}(\tvk-\rho_1)\frac{\lambda_1(v_k)}{\lambda_2(v_k)}\frac{g(u_j,v_k)}{h(u_j,v_k)}h(\bu,v_k) \nonumber\\
\hphantom{\frac{c}{g(v_k,\bu)\lambda_2(v_k)}\frac{\partial\Lambda(v_k|\bu)}{\partial u_j}=}{}
+(\vk-\rho_2)\frac{g(v_k,u_j)}{h(v_k,u_j)}h(v_k,\bu) +(\rho_1+\rho_2) \lambda_1(v_k)g(v_k,u_j).\label{Lamder0}
\end{gather}

\begin{Theorem}\label{Mtheor2}Under the condition of Theorem~{\rm \ref{Mtheor1}} the scalar product has the following determinant representation:
\begin{gather}\label{Svunew1}
S_\nu^N(\bv,\bu)=\lambda_2(\bu)\lambda_2(\bv)\left(\frac{\mu}{\alpha}\right)^N
K_{N,N}^{(\mu+(\mu-1)/\beta)}(\bu|\bth)K_{2N,N}^{(\alpha)}(\{\bu,\bv\}|\bth),
\end{gather}
where
\begin{gather}\label{alpha0}
\alpha=\frac{\vk-\rho_2}{\tvk-\rho_1}.
\end{gather}
\end{Theorem}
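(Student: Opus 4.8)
The plan is to obtain \eqref{Svunew1} from the Jacobian representation \eqref{Svu} of Theorem~\ref{Mtheor1}, that is, to prove that the two determinant representations of $S_\nu^N(\bv,\bu)$ coincide. Both \eqref{Svu} and \eqref{Svunew1} carry the common factor $\lambda_2(\bu)\lambda_2(\bv)$; after cancelling it the claim reduces to the determinant identity
\begin{gather*}
\frac{(\mu\beta)^N\Delta(\bv)\Delta'(\bu)}{(\beta\vk+\tvk-2\rho_1)^N}K_{N,N}^{(-1/\beta)}(\bu|\bth)\det_N\left(\frac{c}{g(v_k,\bu)\lambda_2(v_k)}\frac{\partial\Lambda(v_k|\bu)}{\partial u_j}\right)\\
=\left(\frac{\mu}{\alpha}\right)^N K_{N,N}^{(\mu+(\mu-1)/\beta)}(\bu|\bth)K_{2N,N}^{(\alpha)}(\{\bu,\bv\}|\bth).
\end{gather*}
The key remark is that the whole $\bv$-dependence is confined to $\Delta(\bv)\det_N(\cdots)$ on the left and to $K_{2N,N}^{(\alpha)}(\{\bu,\bv\}|\bth)$ on the right, while the factors $K_{N,N}^{(-1/\beta)}(\bu|\bth)$, $K_{N,N}^{(\mu+(\mu-1)/\beta)}(\bu|\bth)$ and $\Delta'(\bu)$ depend only on $\bu$ and $\bth$. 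It therefore suffices to prove that $A:=\Delta(\bv)\det_N(\cdots)$ and $B:=K_{2N,N}^{(\alpha)}(\{\bu,\bv\}|\bth)$ are proportional with a $\bv$-independent ratio $C(\bu)$, and afterwards to determine $C(\bu)$.

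To establish $A=C(\bu)B$ I would argue by analyticity in a single variable $v_1$, the remaining $v$'s held fixed; symmetry in $\bv$ then extends the conclusion. By Theorem~\ref{Mtheor1}, $A$ equals $S_\nu^N(\bv,\bu)$ divided by $\lambda_2(\bv)$ and a $\bv$-independent constant; since $S_\nu^N$ is a polynomial of degree $N$ in $v_1$ while $\lambda_2(v_1)=g(v_1,\bth)^{-1}$ is a polynomial of degree $N$ with simple zeros at $\bth$, the function $A$ has at most simple poles at $v_1=\theta_l$ and stays finite as $v_1\to\infty$. The same is true of $B$, read off directly from \eqref{defKdef1} through the factor $f(v_1,\theta_j)$. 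Hence $A$ and $B$ both lie in the $(N+1)$-dimensional space spanned by $1$ and $\{(v_1-\theta_l)^{-1}\}_{l=1}^N$, so they are proportional once their residues at the $N$ points $v_1=\theta_l$ and their values at infinity are shown to be in one common ratio. I would compute $\Res_{v_1=\theta_l}B$ using the recursion~\eqref{resK} of the modified Izergin determinant and $\Res_{v_1=\theta_l}A$ from the explicit residue of $f(v_1,\bth)$ in \eqref{Lamder0}, and then check that these ratios are independent of $l$.

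It remains to fix $C(\bu)$ and close the argument. Reading off a single residue (or the leading term as $v_1\to\infty$) of $A=C(\bu)B$ gives $C(\bu)$ explicitly; substituting back into the determinant identity above and cancelling $K_{2N,N}^{(\alpha)}(\{\bu,\bv\}|\bth)$ collapses everything to a relation purely in $\bu$ and $\bth$,
\begin{gather*}
\frac{(\mu\beta)^N\Delta'(\bu)}{(\beta\vk+\tvk-2\rho_1)^N}\,C(\bu)\,K_{N,N}^{(-1/\beta)}(\bu|\bth)=\left(\frac{\mu}{\alpha}\right)^N K_{N,N}^{(\mu+(\mu-1)/\beta)}(\bu|\bth),
\end{gather*}
which relates the two modified Izergin determinants at $z=-1/\beta$ and $z=\mu+(\mu-1)/\beta$. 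This is precisely where the on-shell hypothesis on $\bu$ enters: the relation is proved with the Bethe equations~\eqref{BE00} and the on-shell properties of the modified Izergin determinant gathered in Appendices~\ref{A-ROSID} and~\ref{A-PPL}.

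The main obstacle is the residue matching of the second paragraph. Computing $\Res_{v_1=\theta_l}K_{2N,N}^{(\alpha)}(\{\bu,\bv\}|\bth)$ for the rectangular determinant and showing, uniformly in $l$, that it is proportional to the corresponding residue of the Jacobian determinant demands careful use of the recursion for $K_{n,m}^{(z)}$ together with meticulous bookkeeping of the twist constants $\alpha$, $\beta$, $\mu$, $\rho_i$, $\vk$ and $\tvk$. The concluding $\bu$-only identity, though purely algebraic, likewise rests on the nontrivial on-shell properties of the modified Izergin determinant rather than on a short manipulation.
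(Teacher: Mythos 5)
Your outer skeleton is the right one: reduce \eqref{Svunew1} to the claim that $A=\Delta(\bv)\det_N M$ and $B=K_{2N,N}^{(\alpha)}(\{\bu,\bv\}|\bth)$ are proportional with a $\bv$-independent ratio $C(\bu)$, then identify the leftover $\bu$-only relation as the place where the on-shell hypothesis enters. Your analytic bookkeeping is also correct: as functions of $v_1$ both $A$ and $B$ lie in the span of $1$ and $(v_1-\theta_l)^{-1}$, and a symmetric function of $\bv$ that is independent of $v_1$ is independent of all $v_j$. Moreover, with the correct value $C(\bu)=(\tvk-\rho_1)^N K_{N,N}^{(1)}(\bu|\bth)\big/\bigl(f(\bu,\bth)\Delta'(\bu)\bigr)$, your concluding identity is exactly the paper's relation \eqref{finId}, which indeed follows from the on-shell Proposition~\ref{P-IdentID}.

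The gap is the residue matching, which you defer as a ``check'' but which is in fact the whole theorem. By \eqref{resK}, $\Res B\bigr|_{v_1=\theta_l}=c\,f(\bu,\theta_l)f(\bv_1,\theta_l)f(\theta_l,\bth_l)K^{(\alpha)}_{2N-1,N-1}(\{\bu,\bv_1\}|\bth_l)$, while $\Res A\bigr|_{v_1=\theta_l}$ is $\Delta(\bv)\bigr|_{v_1=\theta_l}$ times the $N\times N$ determinant whose first column is $c(-1)^{N-1}(\tvk-\rho_1)f(\theta_l,\bth_l)h(\bu,\theta_l)g(u_j,\theta_l)/h(u_j,\theta_l)$ and whose other columns are the original $M(u_j,v_k)$, $k\ge 2$. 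These residues are not numbers: they are determinants still depending on all of $v_2,\dots,v_N$, and the assertion that their ratio is one constant, independent of $l$ and of $v_2,\dots,v_N$, is an identity of precisely the same type as the theorem itself (a Jacobian-type determinant equated to a modified Izergin determinant), merely with one variable specialized. It also does not set up a clean induction: after taking the residue, $\bu$ still has $N$ elements and is on shell with respect to the \emph{full} set $\bth$, while the second set has shrunk to $\bth_l$, so the reduced objects are not smaller instances of the same statement. The same difficulty infects the determination of $C(\bu)$ --- its closed form contains the ordinary Izergin determinant $K_{N,N}^{(1)}(\bu|\bth)$, which cannot be ``read off'' from a single coefficient without further determinant evaluations. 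What is missing is a mechanism that exposes all the $\bv$-dependence at once; the paper supplies exactly this in Section~\ref{S-TJ}: multiply $M$ by the Cauchy matrix $A(v_j,\theta_k)=g(v_j,\bv_j)/h(v_j,\theta_k)$, evaluate the product entries by a contour integral (this is where the absence of poles of $M$ at $v_k=u_j$, i.e., the Bethe equations, is used), and observe that the product factorizes as $H^{(1)}H^{(2)}$ with $\det_N H^{(1)}\propto K_{N,N}^{(1)}(\bu|\bth)$ and $\det_N H^{(2)}\propto K_{2N,N}^{(\alpha)}(\{\bu,\bv\}|\bth)$, yielding \eqref{detMres} in one stroke; only then does \eqref{finId} close the argument. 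Alternatively, your interpolation could be pushed through by specializing \emph{all} the $v$'s to points $\{\bth_{A},\bth_{B}-c\}$, but that is the paper's other, much heavier route (Sections~\ref{SS-TSPGU} and~\ref{SS-FTSPOS}), which requires the on-shell summation identities of Appendices~\ref{A-ROSID} and~\ref{A-PPL}, not merely Proposition~\ref{P-IdentID}.
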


The equivalence of representations \eqref{Svu} and \eqref{Svunew1} is proved in Section~\ref{S-TJ}.

\begin{Corollary}\label{ortho}Dual and ordinary modified on-shell Bethe vectors corresponding to different eigenvalues are orthogonal.
\end{Corollary}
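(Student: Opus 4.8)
The plan is to derive orthogonality as an immediate consequence of Theorem~\ref{Mtheor1}, specialized to the case when both Bethe vectors are on-shell. Suppose $\bbb^N(\bu)$ and $\ccc^N(\bv)$ are both on-shell, with $\bu$ and $\bv$ solving the Bethe equations~\eqref{BE00} and corresponding to \emph{different} eigenvalues of the transfer matrix $\mathcal{T}(z)$. I would start from the determinant representation~\eqref{Svu}, whose nontrivial factor is the Jacobian
\begin{gather*}
\det_N\left(\frac{c}{g(v_k,\bu)\lambda_2(v_k)}\frac{\partial\Lambda(v_k|\bu)}{\partial u_j}\right).
\end{gather*}
The key idea is to evaluate the entries of this matrix at the second on-shell set $\bv$ and exploit that $\bv$ itself satisfies the Bethe equations, i.e.\ $\mathcal{Y}(v_k|\bv)=0$ for each $k$.

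The main step is to show that under these hypotheses the above determinant vanishes. First I would observe that, because both $\bu$ and $\bv$ are on-shell with distinct eigenvalues, there must exist $z$ with $\Lambda(z|\bu)\neq\Lambda(z|\bv)$; since $\Lambda(\,\cdot\,|\bu)$ and $\Lambda(\,\cdot\,|\bv)$ are rational functions of $z$ of the same prescribed form, distinctness of the eigenvalues means they differ as functions. Using definition~\eqref{Y-def}, this is equivalent to $\mathcal{Y}(\,\cdot\,|\bu)\neq\mathcal{Y}(\,\cdot\,|\bv)$ as functions of the spectral parameter. The goal is to produce a nonzero vector in the kernel (or cokernel) of the Jacobian matrix. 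The natural candidate is the vector of differences coming from comparing the two eigenvalue functions: consider the quantity $\Lambda(v_k|\bu)-\Lambda(v_k|\bv)$ and use that $\mathcal{Y}(v_k|\bv)=0$, hence $\Lambda(v_k|\bv)=0$ (more precisely $\Lambda(v_k|\bv)=\lambda_2(v_k)g(v_k,\bv)\mathcal{Y}(v_k|\bv)=0$ by~\eqref{Y-def}). Thus the quantities $\Lambda(v_k|\bu)$, evaluated at the on-shell points $v_k$, are themselves the relevant residual values, and a suitable linear combination of the Jacobian columns reconstructs precisely a derivative of $\Lambda(v_k|\bu)$ along the direction in which $\bu$ is deformed into $\bv$.

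Concretely, I would exploit that the Bethe equations $\mathcal{Y}(u_j|\bu)=0$ define $\bu$ implicitly, so the matrix $\partial\Lambda(v_k|\bu)/\partial u_j$ is the derivative of the function $v\mapsto\Lambda(v|\bu)$ with respect to the parameters $\bu$. Evaluated at $v=v_k$, and combined with the vanishing $\Lambda(v_k|\bv)=0$, one finds that the vector with components proportional to the deviation of $\bu$ from $\bv$ lies in the kernel of the transpose Jacobian; equivalently, summing the $k$-th row of the matrix against the appropriate weights reproduces $\mathcal{Y}(v_k|\bu)$ up to the prefactor, and since $v_k$ is on-shell for $\bv$ but generically not for $\bu$ the rank drops whenever the two eigenvalue functions coincide on the $N$ points $\bv$. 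The clean statement is that two distinct eigenvalues force the $N\times N$ Jacobian to be singular, so its determinant vanishes and hence $S_\nu^N(\bv,\bu)=0$ by~\eqref{Svu}. The prefactor $(\mu\beta)^N\Delta(\bv)\Delta'(\bu)\lambda_2(\bu)\lambda_2(\bv)(\beta\vk+\tvk-2\rho_1)^{-N}K_{N,N}^{(-1/\beta)}(\bu|\bth)$ is generically nonzero, so it does not affect the conclusion.

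The hard part will be making the kernel argument rigorous: I must show that equality of the two transfer-matrix eigenvalues is genuinely equivalent to a rank deficiency of the Jacobian, rather than merely a sufficient condition read off formally. This is a standard but delicate point in the algebraic Bethe ansatz, where one typically uses the completeness/nondegeneracy of the map from Bethe roots to eigenvalues, or equivalently shows directly that a nonzero null vector is built from the differences $\Lambda(v_k|\bu)-\Lambda(v_k|\bv)$ together with~\eqref{Lamder0}. Once this linear-algebra lemma is in place, orthogonality follows immediately, and the symmetric companion statement about the dual vectors follows by the same argument applied through the symmetry relation~\eqref{symSP}.
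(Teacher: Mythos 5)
You adopt the same high-level strategy as the paper---specialize \eqref{Svu} to the case where $\bv$ is also on-shell and show that the Jacobian matrix \eqref{Lamder0} is singular---but your execution contains a fatal error and omits the core step. The error: your claim that $\mathcal{Y}(v_k|\bv)=0$ implies $\Lambda(v_k|\bv)=\lambda_2(v_k)g(v_k,\bv)\mathcal{Y}(v_k|\bv)=0$ is false. The factor $g(v_k,\bv)$ contains $g(v_k,v_k)$, which is infinite, so this evaluation is an indeterminate form $\infty\cdot 0$. The actual content of the Bethe equations $\mathcal{Y}(v_k|\bv)=0$ is that $\Lambda(z|\bv)$ is \emph{regular} at $z=v_k$ (the poles of the three terms in \eqref{Lam} cancel), not that it vanishes there; generically $\Lambda(v_k|\bv)\neq 0$. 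Since your candidate null vector was to be built from the differences $\Lambda(v_k|\bu)-\Lambda(v_k|\bv)$ with $\Lambda(v_k|\bv)=0$ used as an input, the construction collapses, and nothing concrete replaces it.

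The missing step is exactly the ``hard part'' you defer: exhibiting an explicit nonzero vector annihilated by the Jacobian. The paper supplies it in closed form, $\gamma_j=g(u_j,\bu_j)/g(u_j,\bv)$, and then needs only two short facts. First, if the eigenvalues differ then $\bu\neq\bv$ as sets, so some $u_j\notin\bv$ and the corresponding $\gamma_j\neq 0$ (note $\gamma_j=0$ whenever $u_j\in\bv$). Second, the contour-integral summation identities \eqref{sumform} of Appendix~\ref{A-SRF} yield \eqref{sumM}: the $\gamma$-weighted sum over $j$ of the entries \eqref{Lamder0} equals, up to the factor $-1/g(v_k,\bv_k)$, precisely the left-hand side of the Bethe equations \eqref{BE00} written for the set $\bv$ at the point $v_k$, and this vanishes for every $k$ because $\bv$ is on-shell. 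Hence the rows of the Jacobian are linearly dependent, the determinant in \eqref{Svu} vanishes, and orthogonality follows. Note also that your closing worry---that one must prove an \emph{equivalence} between coincidence of eigenvalues and rank deficiency, invoking completeness or nondegeneracy of the root-to-eigenvalue map---is a red herring: only the one-directional implication ``$\bv$ on-shell and $\bu\neq\bv$ $\Rightarrow$ the determinant vanishes'' is needed, and the explicit null vector delivers it with no completeness input whatsoever. Until you produce such a vector (or an equivalent summation identity), your proposal remains an outline rather than a proof.
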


\begin{proof} Let
\begin{gather*}%\label{gaj}
\gamma_j=\frac{g(u_j,\bu_j)}{g(u_j,\bv)}.
\end{gather*}
Observe that if the vectors $\bbb^N(\bu)$ and $\ccc^N(\bv)$ correspond to different transfer matrix eigenvalues (i.e., $\bu\ne\bv$), then there exists at least one $\gamma_j\ne 0$.

Using formulas of Appendix~\ref{A-SRF} we obtain
\begin{gather}
\sum_{j=1}^N \frac{c\gamma_j}{g(v_k,\bu)\lambda_2(v_k)}\frac{\partial\Lambda(v_k|\bu)}{\partial u_j}=-\frac{1}{g(v_k,\bv_k)}
\bigg((\tvk-\rho_1)\frac{\lambda_1(v_k)}{\lambda_2(v_k)}f(\bv_k,v_k)\nonumber\\
\hphantom{\sum_{j=1}^N \frac{c\gamma_j}{g(v_k,\bu)\lambda_2(v_k)}\frac{\partial\Lambda(v_k|\bu)}{\partial u_j}=}{}-(\vk-\rho_2)f(v_k,\bv_k)+(\rho_1+\rho_2) g(v_k,\bv_k)\lambda_1(v_k)\bigg).\label{sumM}
\end{gather}
If the set $\bv$ satisfies the system of modified Bethe equations \eqref{BE00}, then the r.h.s.\ of~\eqref{sumM} vanishes. Thus, the rows of the matrix~\eqref{Lamder0} are linearly dependent, and hence, the Jacobian of the transfer matrix eigenvalue in~\eqref{Svu} vanishes.
\end{proof}

Setting $\bv=\bu$ in \eqref{Svunew1} we obtain an expression for square of the norm of on-shell modified Bethe vector\footnote{Traditionally $S_\nu^N(\bu,\bu)= \ccc^N(\bu)\bbb^N(\bu)$ is called the square of the norm even if $\ccc^N(\bu)\ne \left(\bbb^N(\bu)\right)^\dagger$.}. Another possibility is to set $\bv=\bu$ in \eqref{Svu}. Then one should resolve singularities in the diagonal elements of~\eqref{Lamder0}. The result is described by the following theorem.

\begin{Theorem}\label{Mtheor3}The square of the norm of on-shell modified Bethe vector has the following determinant representation:
\begin{gather*}%\label{norm1}
S_\nu^N(\bu,\bu)=\frac{(\mu\beta)^N \Delta(\bu)\Delta'(\bu)\lambda^2_2(\bu)}{(\beta\vk+\tvk-2\rho_1)^N}
K_{N,N}^{(-1/\beta)}(\bu|\bth)
\det_N\left(c\frac{\partial\mathcal{Y}(u_k|\bu)}{\partial u_j}\right),
\end{gather*}
where $\mathcal{Y}(z|\bu)$ is given by \eqref{Y-def}.
\end{Theorem}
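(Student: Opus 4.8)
The plan is to obtain Theorem~\ref{Mtheor3} as the limit $\bv\to\bu$ of the determinant representation~\eqref{Svu} from Theorem~\ref{Mtheor1}. The prefactor $\frac{(\mu\beta)^N\lambda_2(\bu)\lambda_2(\bv)}{(\beta\vk+\tvk-2\rho_1)^N}K_{N,N}^{(-1/\beta)}(\bu|\bth)$ together with $\Delta(\bv)\Delta'(\bu)$ is continuous in $\bv$, so setting $\bv=\bu$ there is immediate and produces $\Delta(\bu)\Delta'(\bu)\lambda_2^2(\bu)$. The only delicate piece is the determinant $\det_N\bigl(\frac{c}{g(v_k,\bu)\lambda_2(v_k)}\frac{\partial\Lambda(v_k|\bu)}{\partial u_j}\bigr)$, whose matrix entries, written out in~\eqref{Lamder0}, contain the factors $g(u_j,v_k)/h(u_j,v_k)$ and $g(v_k,u_j)/h(v_k,u_j)$. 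When $v_k\to u_k$ these develop poles because $g(u_k,u_k)=c/0$, so I must extract the singular part and show it cancels against the coordinated vanishing of off-diagonal contributions in the $\bv\to\bu$ limit.

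First I would substitute $v_k=u_k+\varepsilon_k$ and analyze each matrix entry as $\varepsilon_k\to 0$. For the off-diagonal entries $j\ne k$ the limit is smooth and one simply replaces $v_k$ by $u_k$. For the diagonal entry $j=k$ the two terms carrying $g(u_k,v_k)/h(u_k,v_k)$ and $g(v_k,u_k)/h(v_k,u_k)$ are singular; I would expand them to the order needed and recognize that the leading singular pieces combine into a derivative. The natural object to compare against is $\mathcal{Y}(z|\bu)=\Lambda(z|\bu)/(\lambda_2(z)g(z,\bu))$ from~\eqref{Y-def}. Since $\bu$ is on-shell, $\mathcal{Y}(u_j|\bu)=0$ for all $j$, and the singularities in $\Lambda$ at $z=u_k$ are precisely matched by the $g(z,\bu)^{-1}$ factor in the definition of $\mathcal{Y}$. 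The key computation is to show that
\begin{gather*}
\lim_{\bv\to\bu}\frac{c}{g(v_k,\bu)\lambda_2(v_k)}\frac{\partial\Lambda(v_k|\bu)}{\partial u_j}
=c\,\frac{\partial\mathcal{Y}(u_k|\bu)}{\partial u_j}
\end{gather*}
as an equality of matrices, so that the determinant converges to $\det_N\bigl(c\,\frac{\partial\mathcal{Y}(u_k|\bu)}{\partial u_j}\bigr)$.

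To justify this cleanly I would differentiate the defining relation $\Lambda(z|\bu)=\lambda_2(z)g(z,\bu)\mathcal{Y}(z|\bu)$ with respect to $u_j$, holding $z$ fixed, then set $z=v_k$ and take $v_k\to u_k$. Differentiating the product, one factor of the derivative of $g(z,\bu)$ with respect to $u_j$ multiplies $\mathcal{Y}(z|\bu)$; but after setting $z=u_k$ this term carries $\mathcal{Y}(u_k|\bu)$, which vanishes by the Bethe equations, eliminating exactly the would-be singular contribution. What survives is $\lambda_2(z)g(z,\bu)\frac{\partial\mathcal{Y}(z|\bu)}{\partial u_j}$, and dividing by $\lambda_2(v_k)g(v_k,\bu)$ and multiplying by $c$ yields precisely $c\,\frac{\partial\mathcal{Y}(u_k|\bu)}{\partial u_j}$ in the limit. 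This is the heart of the argument and the place where the on-shell condition is essential.

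The main obstacle I anticipate is bookkeeping the singular terms carefully: the diagonal entries of~\eqref{Lamder0} involve $g$ and $h$ in combinations that are individually singular as $v_k\to u_k$, and one must verify that after multiplying by $g(v_k,\bu)^{-1}$ the only surviving singular contribution is exactly the one killed by $\mathcal{Y}(u_k|\bu)=0$. I would handle this by working with $\mathcal{Y}$ from the outset rather than expanding $\Lambda$ term by term, since the factorization $\Lambda=\lambda_2\,g\,\mathcal{Y}$ makes the cancellation structural rather than accidental. Once the matrix entries are shown to converge, the determinant converges by continuity, and assembling the convergent prefactor with $\Delta(\bu)\Delta'(\bu)\lambda_2^2(\bu)$ gives the stated formula. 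A minor check at the end is that no spurious factors of $c$ or signs are introduced when passing from the $\partial\Lambda/\partial u_j$ normalization to the $c\,\partial\mathcal{Y}/\partial u_j$ normalization, which the differentiation identity above handles automatically.
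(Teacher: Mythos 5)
Your overall route---set $\bv=\bu$ in representation \eqref{Svu} of Theorem~\ref{Mtheor1}, note that the prefactor is continuous, and resolve the diagonal singularities of the Jacobian by working with $\mathcal{Y}$ instead of $\Lambda$---is exactly the paper's proof, which consists of the single instruction to substitute $v_j=u_j$ in \eqref{Svu} and compare with $c\,\partial\mathcal{Y}(u_k|\bu)/\partial u_j$. (One side remark of yours is inaccurate but harmless: no cancellation ``against off-diagonal contributions'' is ever needed, since each entry of \eqref{Lamder0} is individually regular at $v_k=u_j$ when $\bu$ is on shell---its residue is proportional to the Bethe equations \eqref{BE00}, as the paper notes in Section~\ref{S-TJ}.)

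The genuine problem is in your key step. Differentiating $\Lambda(z|\bu)=\lambda_2(z)g(z,\bu)\mathcal{Y}(z|\bu)$ in $u_j$ and normalizing gives
\begin{gather*}
\frac{c}{g(z,\bu)\lambda_2(z)}\frac{\partial\Lambda(z|\bu)}{\partial u_j}
=\frac{c}{z-u_j}\,\mathcal{Y}(z|\bu)+c\,\frac{\partial\mathcal{Y}(z|\bu)}{\partial u_j},
\end{gather*}
and you claim the first term is ``eliminated'' by $\mathcal{Y}(u_k|\bu)=0$, so that only the second survives. That is correct for the off-diagonal entries $j\ne k$, where one may set $z=u_k$ directly. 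But on the diagonal $j=k$ it is a $0/0$ limit:
\begin{gather*}
\lim_{z\to u_k}\frac{c}{z-u_k}\,\mathcal{Y}(z|\bu)
=c\,\frac{\partial\mathcal{Y}(z|\bu)}{\partial z}\Big|_{z=u_k},
\end{gather*}
which is nonzero in general. This surviving piece is not a spurious singularity to be cancelled; it is precisely what upgrades the partial derivative $\partial\mathcal{Y}(z|\bu)/\partial u_k\big|_{z=u_k}$ (differentiation only through the set $\bu$) to the full derivative $\partial\mathcal{Y}(u_k|\bu)/\partial u_k$ (substitute $z=u_k$ first, then differentiate), which is what the Gaudin-type matrix in Theorem~\ref{Mtheor3} contains. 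If you drop it, as your wording dictates, your diagonal entries are $c\,\partial\mathcal{Y}(z|\bu)/\partial u_k\big|_{z=u_k}$ rather than $c\,\partial\mathcal{Y}(u_k|\bu)/\partial u_k$, and the resulting determinant is not the one in the theorem. The repair is exactly your own earlier instinct for the diagonal: Taylor-expand, $\mathcal{Y}(v_k|\bu)=(v_k-u_k)\,\partial_z\mathcal{Y}(z|\bu)\big|_{z=u_k}+O\big((v_k-u_k)^2\big)$, keep the resulting $\partial_z$ term, and observe that the sum of the two contributions assembles into the total derivative; with that correction the rest of your argument (continuity of the prefactor and of the determinant) goes through and reproduces the paper's proof.
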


\begin{proof} It suffices to substitute $v_j=u_j$ in \eqref{Svu} and compare the result with the partial derivatives $c\partial\mathcal{Y}(u_k|\bu)/\partial u_j$. \end{proof}

\subsection{Alternative form of the Bethe equations}\label{SS-IBE}

The initial form of Bethe equations \eqref{BE00} is not always convenient for applications. There exists, however, various alternative forms of these equations. Two of them are described by the following proposition.

\begin{Proposition}\label{P-AlfBE}Inhomogeneous Bethe equations \eqref{BE00} can be written in the forms
\begin{gather}\label{BE7}
\frac{\vk-\rho_2}{f(\bu,\theta_j)}
+(\tvk-\rho_1)\sum_{k=1}^N\frac{f(\theta_k,\bth_k)}{h(\theta_k,\theta_j)}f(\bu,\theta_k)=\vk+\tvk,\qquad j=1,\dots,N,
\end{gather}
or
\begin{gather}\label{BEj}
(\tvk-\rho_1)f(\bu,\theta_j)+(\vk-\rho_2)\sum_{k=1}^N\frac{f(\bth_k,\theta_k)}{h(\theta_j,\theta_k)}\frac1{f(\bu,\theta_k)}=\vk+\tvk,\qquad j=1,\dots,N.
\end{gather}
\end{Proposition}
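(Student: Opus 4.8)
The plan is to pass to the Baxter $Q$-function and reduce both forms to a single partial-fraction identity that I then read off at the inhomogeneities. Write $Q(z)=\prod_{i=1}^{N}(z-u_i)$, so that $f(\bu,z)=Q(z-c)/Q(z)$, $f(z,\bu)=Q(z+c)/Q(z)$, $g(z,\bu)=c^{N}/Q(z)$, while $\lambda_2(z)=c^{-N}\prod_k(z-\theta_k)$ and $\lambda_1(z)=c^{-N}\prod_k(z-\theta_k+c)$. The first observation is that the original equations \eqref{BE00} are equivalent to the single statement that $\Lambda(z|\bu)$ from \eqref{Lam} is a polynomial: since $g(z,\bu)$ is the only source of poles of $\Lambda$ at the points $u_i$, one has $\Res_{z=u_i}\Lambda(z|\bu)\propto\mathcal{Y}(u_i|\bu)$, and \eqref{BE00} holds iff all these residues vanish.

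I would then introduce $F(z)=\Lambda(z|\bu)/\lambda_2(z)$. Putting the three terms of \eqref{Lam} over the common denominator $\lambda_2(z)$ shows that $F$ is a ratio of two polynomials of degree $2N$, that $F(z)\to\vk+\tvk$ as $z\to\infty$, and that its only poles are simple, located at the $\theta_k$ and at the $u_i$. Hence $F$ admits the partial-fraction expansion
\[
F(z)=(\vk+\tvk)+\sum_{k=1}^{N}\frac{\Res_{\theta_k}F}{z-\theta_k}+\sum_{i=1}^{N}\frac{\Res_{u_i}F}{z-u_i},
\]
where a direct computation gives $\Res_{\theta_k}F=(\tvk-\rho_1)\,c\,f(\theta_k,\bth_k)f(\bu,\theta_k)$, while $\Res_{u_i}F\propto\mathcal{Y}(u_i|\bu)$ vanishes exactly on shell.

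The key move is to evaluate $F$ at the shifted point $z=\theta_j-c$. There $\lambda_1(\theta_j-c)=c^{-N}\prod_k(\theta_j-\theta_k)=0$, which annihilates the first and third terms of $\Lambda/\lambda_2$ and leaves only $F(\theta_j-c)=(\vk-\rho_2)f(\theta_j-c,\bu)=(\vk-\rho_2)/f(\bu,\theta_j)$. I would then equate this with the expansion above at $z=\theta_j-c$; using $\Res_{\theta_k}F$ together with the elementary identity $(\theta_j-c-\theta_k)^{-1}=-\bigl(c\,h(\theta_k,\theta_j)\bigr)^{-1}$, the $\theta_k$-sum reproduces precisely the kernel $f(\theta_k,\bth_k)/h(\theta_k,\theta_j)$ of \eqref{BE7}, and the whole matching collapses to the statement that \eqref{BE7} at index $j$ is equivalent to $\sum_{i}\Res_{u_i}F/(\theta_j-c-u_i)=0$. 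Therefore \eqref{BE00} immediately implies \eqref{BE7}; for the converse, \eqref{BE7} for $j=1,\dots,N$ is a homogeneous linear system $\sum_i x_i/\bigl((\theta_j-c)-u_i\bigr)=0$ in the unknowns $x_i=\Res_{u_i}F$, whose matrix is a Cauchy matrix, nonsingular for generic pairwise-distinct parameters, so $x_i=0$ for all $i$ and \eqref{BE00} follows.

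The second form \eqref{BEj} is obtained by the mirror argument applied to the companion function $F^{\vee}(z)=\Lambda(z|\bu)/\lambda_1(z)$, whose poles sit at $\theta_k-c$ and at the $u_i$ and which also tends to $\vk+\tvk$ at infinity; evaluating it at the unshifted points $z=\theta_j$, where now $\lambda_2(\theta_j)=0$ kills the complementary pair of terms, gives $F^{\vee}(\theta_j)=(\tvk-\rho_1)f(\bu,\theta_j)$ and reduces \eqref{BEj} to $\sum_i\Res_{u_i}F^{\vee}/(\theta_j-u_i)=0$, once more equivalent to \eqref{BE00} by the same Cauchy-matrix argument. The main technical burden will be the bookkeeping in the matching step---tracking the shifts $z\mapsto z\pm c$ and the exact constants so that the $\theta_k$-residue sum lands on the kernels of \eqref{BE7} and \eqref{BEj} with the correct signs---and insisting on the bidirectional conclusion through the nondegeneracy of the Cauchy matrix, rather than merely verifying one implication; the genericity hypotheses (distinct $\theta_k$, distinct $u_i$, and $\theta_j-c\neq u_i$) must be stated explicitly.
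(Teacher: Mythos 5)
Your proof is correct, and its computations check out: $\Res_{z=\theta_k}F=c\,(\tvk-\rho_1)f(\theta_k,\bth_k)f(\bu,\theta_k)$, $F(\theta_j-c)=(\vk-\rho_2)/f(\bu,\theta_j)$, and the mirror statements for $F^{\vee}=\Lambda/\lambda_1$, so that \eqref{BE7} and \eqref{BEj} at index $j$ do reduce, respectively, to $\sum_i \Res_{u_i}F/(\theta_j-c-u_i)=0$ and $\sum_i \Res_{u_i}F^{\vee}/(\theta_j-u_i)=0$, which vanish on shell. Your route is, however, only superficially different from the paper's: the paper multiplies \eqref{BE00} by a test polynomial $P_n(u_j)$, sums over $j$, and recognizes the result as a contour integral of $\Lambda(z|\bu)P_n(z)/\bigl(\lambda_1(z)\lambda_2(z)\bigr)$ around the Bethe roots (its integrand is exactly this function, although the paper never names it as $\Lambda$), then evaluates by the poles outside the contour to get the master identity \eqref{BE5}; the choices \eqref{Pjh} and \eqref{Pjg} then give \eqref{BE7} and \eqref{BEj}. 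Your evaluation of $F=\Lambda/\lambda_2$ at $z=\theta_j-c$ is precisely that integral with $P^{(j)}_{N-1}(z)=c^{N-1}h(z,\bth_j)$, which cancels every factor of $\lambda_1(z)$ except $(z-\theta_j+c)$, so the outside poles collapse to your $\theta_k$'s plus the point $\theta_j-c$; likewise $F^{\vee}$ evaluated at $\theta_j$ corresponds to \eqref{Pjg}. What the two packagings buy is different. The paper's test-polynomial identity \eqref{BE5} is more general and is reused later (the interpolating choice \eqref{Pjgh}, the family \eqref{Pmm}, and the Bethe-equation manipulations in Appendix~\ref{A-PL1}), whereas your version is leaner for the two stated forms and rests on the cleaner structural fact that \eqref{BE00} is equivalent to polynomiality of $\Lambda(z|\bu)$. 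You also prove something the paper does not state: the converse implications. The paper's argument only derives \eqref{BE7} and \eqref{BEj} from \eqref{BE00}, while your Cauchy-matrix argument (the matrices with entries $1/(\theta_j-c-u_i)$ and $1/(\theta_j-u_i)$ are nonsingular when the $\theta_k$ are pairwise distinct and the $u_i$ are pairwise distinct) upgrades both forms to genuine equivalences, under the genericity hypotheses you rightly insist on making explicit; those same hypotheses are implicitly assumed in the paper's residue computations as well.
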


The proof of this theorem and other forms of Bethe equations are given in Section~\ref{S-AFBE}.

\subsection{On-shell modified Izergin determinant} \label{SS-MIKOS}

Determinant representations for the scalar product \eqref{Svu} contains the modified Izergin determinant $K_{N,N}^{(z)}(\bu|\bth)$, where the set $\bu$ solves Bethe equations \eqref{BE00}. We call it {\it on-shell modified Izergin determinant}. This determinant enjoys an identity described by the following proposition.

\begin{Proposition}\label{P-IdentID}Let a set $\bu$ consist of the roots of Bethe equations \eqref{BE00}. Then
\begin{gather}\label{K-resul0}
K_{N,N}^{(z)}(\bu|\bth)=\prod_{i=1}^N(d_i-z),
\end{gather}
for arbitrary complex $z$. Here $d_i=d^\pm$ for $i=1,\dots,N$, and
\begin{gather}\label{dpm}
d^\pm=\frac{\vk+\tvk\pm\sqrt{(\vk+\tvk)^2-4(\vk-\rho_2)(\tvk-\rho_1)}}{2(\tvk-\rho_1)}.
\end{gather}
\end{Proposition}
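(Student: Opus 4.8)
The plan is to evaluate the on-shell modified Izergin determinant $K_{N,N}^{(z)}(\bu|\bth)$ by exploiting the representation \eqref{defKdef2}, which presents it as $\det_N\big(\delta_{jk}f(u_j,\bth)-z\,f(u_j,\bu_j)/h(u_j,u_k)\big)$ when $\#\bu=\#\bth=N$ and $z$ is replaced by itself (here $m=n=N$, so the prefactor $(1-z)^{m-n}=1$). The goal is to show this determinant factorizes into $\prod_i(d_i-z)$, so the strategy is to understand the $N\times N$ matrix $M_{jk}(z)=\delta_{jk}f(u_j,\bth)-z\,f(u_j,\bu_j)/h(u_j,u_k)$ as an affine pencil in $z$ and identify its eigenvalue structure. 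Since $K_{N,N}^{(z)}$ is manifestly a polynomial of degree $N$ in $z$ (the matrix is linear in $z$), it must factor as $\prod_i(d_i-z)$ for some roots; the content is that all roots coincide with only the two values $d^\pm$.

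First I would recast the matrix. Write $M(z)=F - z\,G$, where $F=\diag\big(f(u_j,\bth)\big)$ is diagonal and $G_{jk}=f(u_j,\bu_j)/h(u_j,u_k)$ has rank-revealing structure. The key observation is that the Bethe equations \eqref{BE00} — used here in one of their alternative forms from Proposition~\ref{P-AlfBE} — control precisely the combination $f(u_j,\bth)=\lambda_1(u_j)/\lambda_2(u_j)$ via \eqref{lambdas1}. Rewriting \eqref{BE00} as a relation for $f(u_j,\bu_j)$ and $f(\bu_j,u_j)$ in terms of $f(u_j,\bth)$, I would substitute this into $M(z)$ to expose a hidden low-rank or quasi-triangular form. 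The natural conjecture is that $M(z)$ admits an eigenvector decomposition in which the eigenvalues are $d^\pm - z$, each with some multiplicity, so that $\det M(z)=\prod_i(d_i-z)$ with $d_i\in\{d^+,d^-\}$.

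The cleanest route is likely to prove the identity first for two or three special values of $z$ and then extend by polynomiality, or alternatively to compute $\det M(z)$ directly by diagonalizing the associated linear map. Because both sides are degree-$N$ polynomials in $z$, it suffices to check that the roots are exactly $d^\pm$ and to match leading coefficients. The quantity $d^\pm$ in \eqref{dpm} is the pair of roots of the quadratic $(\tvk-\rho_1)d^2-(\vk+\tvk)d+(\vk-\rho_2)=0$; I would try to show that these are precisely the two eigenvalues of the $z$-linear pencil $G^{-1}F$ (equivalently, that $M(z)$ is singular exactly when $z=d^\pm$). To locate the roots I would search for null vectors: for $z=d^\pm$, construct an explicit vector $\xi$ with $M(d^\pm)\xi=0$ using the Bethe equations, which should pin down the kernel dimension and hence the multiplicities $N^\pm$ with $N^++N^-=N$.

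The main obstacle will be establishing the multiplicities of $d^+$ and $d^-$ and showing they combine to give the clean product $\prod_i(d_i-z)$ with each factor one of the two values — in particular, verifying that the full polynomial, not just its roots, matches. I expect that the technically delicate part is translating the Bethe equations into a usable algebraic constraint on the matrix entries (the forms \eqref{BE7} or \eqref{BEj} should be exactly what is needed to collapse the sum $\sum_k G_{jk}$ into $f(u_j,\bth)$-dependent data), and then controlling the rank of $M(z)$ away from $z=d^\pm$. I would fall back, if the direct eigenvalue argument stalls, on the auxiliary results promised in Appendices~\ref{A-ROSID} and~\ref{A-PPL} on properties of the on-shell modified Izergin determinant, which presumably supply the recursive or functional relations needed to nail down the polynomial identity \eqref{K-resul0}.
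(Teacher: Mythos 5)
Your setup has the right ingredients---the degree-$N$ polynomial structure in $z$, the quadratic $(\tvk-\rho_1)d^2-(\vk+\tvk)d+(\vk-\rho_2)=0$ whose roots are $d^\pm$, and the expectation that the alternative Bethe equations \eqref{BE7}, \eqref{BEj} are the decisive input---but the proposal stops exactly where the real work begins. The entire content of \eqref{K-resul0} is that \emph{all} $N$ roots of the degree-$N$ polynomial $z\mapsto K^{(z)}_{N,N}(\bu|\bth)$ lie in the two-element set $\{d^+,d^-\}$. Exhibiting null vectors of $M(d^\pm)$ would only show that $d^\pm$ are \emph{among} the roots, and you explicitly defer the multiplicity/completeness question (``the main obstacle will be establishing the multiplicities'') rather than resolve it. The fallback of checking ``two or three special values of $z$'' and invoking polynomiality cannot work either: two polynomials of degree $N$ agree only after $N+1$ checks, and already the single value at $z=0$, namely $f(\bu,\bth)=\prod_i d_i$, is a nontrivial identity (it is \eqref{prod-f}, obtained in the paper as a \emph{consequence} of the proposition, not an available input). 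Finally, pointing to Appendices~\ref{A-ROSID} and~\ref{A-PPL} does not close the gap: Theorem~\ref{MT} there is a different identity, itself proved with the same Bethe-equation technology, and it does not imply Proposition~\ref{P-IdentID}.

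There is also a concrete structural obstruction in your choice of representation. You work with \eqref{defKdef2}, a determinant over the Bethe roots $u_j$, and hope that \eqref{BE7}/\eqref{BEj} will ``collapse the sum $\sum_k G_{jk}$''. But those equations are sums over the \emph{inhomogeneities} $\theta_k$, so they say nothing about $\sum_k G_{jk}=f(u_j,\bu_j)\sum_k 1/h(u_j,u_k)$; and the original form \eqref{BE00}, rewritten on the roots, contains the inhomogeneous term $(\rho_1+\rho_2)g(u_j,\bu_j)h(u_j,\bth)$, which is not expressible through your pencil data $F$, $G$. The paper's proof works instead with \eqref{defKdef1}, the determinant over the $\theta$'s: setting $\Omega_{jk}(\bth)=f(\theta_k,\bth_k)/h(\theta_k,\theta_j)$ and $Z_{jk}=\Omega_{jk}(\bth)f(\bu,\theta_k)$, one has $K^{(z)}_{N,N}(\bu|\bth)=\det_N(Z_{jk}-z\delta_{jk})$, and \eqref{BE7} is precisely a row-sum statement, $(\tvk-\rho_1)(Z\mathbf{1})_j=\vk+\tvk-(\vk-\rho_2)/f(\bu,\theta_j)$, where $\mathbf{1}$ is the all-ones vector. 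Multiplying by $Z_{ij}$ (which converts $Z_{ij}/f(\bu,\theta_j)$ back into $\Omega_{ij}$), summing over $j$, and using the summation identity $\sum_j\Omega_{ij}(\bth)=1$ of \eqref{Asum} produces the operator identity
\[
\bigl((\tvk-\rho_1)Z^2-(\vk+\tvk)Z+(\vk-\rho_2)\mathbb{I}\bigr)\mathbf{1}=0,
\]
after which diagonalizing $Z=S^{-1}\mathcal{D}S$ forces every eigenvalue $d_i$ to satisfy the quadratic, giving \eqref{K-resul0} with all multiplicities accounted for at once. This chain---the $\theta$-representation, the multiplication by $Z_{ij}$, and the identity $\sum_j\Omega_{ij}=1$---is the missing mechanism in your proposal; without it, or a genuine substitute, the plan remains a statement of intent rather than a proof.
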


The proof of this proposition is given in Section~\ref{S-MIDOS}.

\begin{Remark}Similar property of the on-shell modified Izergin determinant holds for the XXX spin chain with diagonal boundary condition $($see \eqref{K-resul00}$)$.
\end{Remark}

\section{Proof of the alternative forms of Bethe equations\label{S-AFBE}}

In this section we prove Proposition~\ref{P-AlfBE}.

\begin{proof} Let us divide Bethe equations \eqref{BE00} by the product $\lambda_1(u_j)\lambda_2(u_j)$:
\begin{gather*}%\label{BE1}
(\vk-\rho_2)\frac{f(u_j,\bu_j)}{\lambda_1(u_j)}-(\tvk-\rho_1)\frac{f(\bu_j,u_j)}{\lambda_2(u_j)}+(\rho_1+\rho_2) g(u_j,\bu_j)=0, \qquad j=1,\dots,N.
\end{gather*}
Multiplying each of equations by a monic polynomial $P_n(u_j)$ in $u_j$ of degree $n<N$ and taking the sum over $j$ we obtain
\begin{gather*}%\label{BE2}
\sum_{j=1}^N\left(\frac{(\vk-\rho_2)f(u_j,\bu_j)}{\lambda_1(u_j)}-\frac{(\tvk-\rho_1)f(\bu_j,u_j)}{\lambda_2(u_j)}+(\rho_1+\rho_2) g(u_j,\bu_j)\right)P_n(u_j)=0.
\end{gather*}
The sums over $j$ can be written is a contour integral
\begin{gather*}%\label{BE3}
\frac1{2\pi{\rm i}c}\oint_{\Gamma(\bu)}\left(\frac{(\vk-\rho_2)f(z,\bu)}{\lambda_1(z)}+
\frac{(\tvk-\rho_1)f(\bu,z)}{\lambda_2(z)}+(\rho_1+\rho_2) g(z,\bu)\right)P_n(z) \,{\rm d}z=0.
\end{gather*}
Here anticlockwise oriented contour $\Gamma(\bu)$ surrounds the roots of Bethe equations $\bu$ and does not contain any other singularities of the integrand. Substituting here explicit expressions for $\lambda_1(z)$ and $\lambda_2(z)$ \eqref{lambdas1} we obtain
\begin{gather*}%\label{BE4}
\frac{1}{2\pi{\rm i}}\oint_{\Gamma(\bu)}\left(\frac{(\vk-\rho_2)f(z,\bu)}{h(z,\bth)}+(\tvk-\rho_1)f(\bu,z)g(z,\bth)+ (\rho_1+\rho_2)g(z,\bu)\right)P_n(z)\,{\rm d}z=0.
\end{gather*}
The integral can now be taken by the residues outside the integration contour, that is in the points $z=\theta_k$ and $z=\theta_k-c$, $k=1,\dots,N$. Taking into account that $n<N$ we find{\samepage
\begin{gather}\label{BE5}
\sum_{k=1}^N\left((\vk-\rho_2)\frac{P_n(\theta_k-c)}{f(\bu,\theta_k)}+(\tvk-\rho_1)f(\bu,\theta_k)P_n(\theta_k)\right)g(\theta_k,\bth_k)=c^{N-1}(\vk+\tvk)\delta_{n,N-1},
\end{gather}
where we used \eqref{gfh-prop}.}

This is the most general form of Bethe equations. It involves an arbitrary polynomial $P_n(\theta_k)$ of degree $n<N$. Now we can consider several particular cases.

Let us take a system of polynomials
\begin{gather}\label{Pjh}
P^{(j)}_{N-1}(z)=c^{N-1}h(z,\bth_j)=\prod_{\substack{k=1\\ k\ne j}}^N (z-\theta_k+c), \qquad j=1,\dots,N.
\end{gather}
Then equation \eqref{BE5} yields
\begin{gather*}
\sum_{k=1}^N\left(\!(\vk-\rho_2)\frac{h(\theta_k-c,\bth_j)}{f(\bu,\theta_k)}+(\tvk-\rho_1)f(\bu,\theta_k)h(\theta_k,\bth_j)\!\right)g(\theta_k,\bth_k)
 =\vk+\tvk,\!\qquad j=1,\dots,N.%\label{BE6}
\end{gather*}
Using $h(\theta_k-c,\bth_j)=\delta_{jk}/g(\theta_k,\bth_k)$ we arrive at \eqref{BE7}.

Similarly, setting in \eqref{BE5}
\begin{gather}\label{Pjg}
P^{(j)}_{N-1}(z)=\frac{c^{N-1}}{g(z,\bth_j)}=\prod_{\substack{k=1\\ k\ne j}}^N (z-\theta_k), \qquad j=1,\dots,N,
\end{gather}
we arrive at \eqref{BEj}. \end{proof}

One can also consider a choice of $P^{(j)}_{N-1}(z)$ that interpolates between \eqref{Pjh} and \eqref{Pjg}. Let $\bth_A$ and $\bth_B$ be two fixed disjoint subsets of the inhomogeneities such that $\{\bth_A,\bth_B\}=\bth$. Assume that $\bth_B\ne\varnothing$. Let
\begin{gather}\label{Pjgh}
P^{(j)}_{N-1}(z)=c^{N-1}g(z,\theta_j)\frac{h(z,\bth_{A})}{g(z,\bth_{B})}, \qquad \theta_j\in\bth_{B}.
\end{gather}
Then equation \eqref{BE5} takes the form
\begin{gather*}
(\vk-\rho_2)\sum_{\substack{\{\bth_{B_1},\bth_{B_2}\}\vdash\bth_{B}\\ \#\bth_{B_1}=1}}
\frac{f(\bth_{B_2},\bth_{B_1})}{h(\theta_j,\bth_{B_1})}\frac1{f(\bu,\bth_{B_1})}
=\vk+\tvk-(\tvk-\rho_1)f(\bu,\theta_j)f(\theta_j,\bth_{A})\nonumber\\
\qquad{} +(\tvk-\rho_1)\sum_{\substack{\{\bth_{A_1},\bth_{A_2}\}\vdash\bth_{A}\\ \#\bth_{A_1}=1}}
g(\theta_j,\bth_{A_1})f(\bu,\bth_{A_1})f(\bth_{A_1},\bth_{A_2}).%\label{BEjgen}
\end{gather*}
Here in the l.h.s.\ the sum is taken over partitions $\{\bth_{B_1},\bth_{B_2}\}\vdash\bth_{B}$ such that $\#\bth_{B_1}=1$. In the r.h.s.\ the sum is taken over partitions $\{\bth_{A_1},\bth_{A_2}\}\vdash\bth_{A}$ such that $\#\bth_{A_1}=1$.

\begin{Remark}The examples given above do not exhaust all possible forms of the Bethe equations. In particular, on the basis of the polynomial~\eqref{Pjgh} we can construct \begin{gather}\label{Pmm}
P^{(j)}_{N-1}(z)=c^{N-1}\sum_{\{\bth_A,\bth_B\}\vdash\bth}g(z,\theta_j)\frac{h(z,\bth_{A})}{g(z,\bth_{B})}G(\bth_A,\bth_B), \qquad \theta_j\in\bth_{B},
\end{gather}
where $G(\bth_A,\bth_B)$ is some function that depends on the subsets $\bth_A$ and $\bth_B$. An example of the polynomial~\eqref{Pmm} is considered in Appendix~\ref{A-PL1}.
\end{Remark}

\section{Proof of the on-shell modified Izergin determinant properties}\label{S-MIDOS}

In this section we prove Proposition~\ref{P-IdentID}.

\begin{proof}
Let us introduce two $N\times N$ matrices $\Omega(\bth)$ and $Z$:
\begin{gather*}%\label{AZ}
\Omega_{jk}(\bth)=\frac{f(\theta_k,\bth_k)}{h(\theta_k,\theta_j)},\qquad Z_{jk}=\Omega_{jk}(\bth)f(\bu,\theta_k),\qquad j,k=1,\dots,N.
\end{gather*}
Then the system of equations \eqref{BE7} takes the form
\begin{gather*}%\label{BEAZ}
\frac{\vk-\rho_2}{f(\bu,\theta_j)}+(\tvk-\rho_1)\sum_{k=1}^NZ_{jk}=\vk+\tvk,\qquad j=1,\dots,N.
\end{gather*}
Multiplying each equation by $Z_{ij}$ we obtain
\begin{gather*}%\label{BEAZ0}
(\vk-\rho_2)\Omega_{ij}(\bth)+(\tvk-\rho_1)\sum_{k=1}^NZ_{ij}Z_{jk}=(\vk+\tvk)Z_{ij},\qquad i,j=1,\dots,N.
\end{gather*}
Taking the sum over $j$ and using\footnote{Equation \eqref{Asum} can be derived by the contour integral method described in Appendix~\ref{A-SRF}.}
\begin{gather}\label{Asum}
\sum_{j=1}^N\Omega_{ij}(\bth)=1,
\end{gather}
we arrive at
\begin{gather*}%\label{BEAZ1}
\sum_{k=1}^N\bigl((\vk-\rho_2)\delta_{ik} +(\tvk-\rho_1)\big(Z^2\big)_{ik}-(\vk+\tvk)Z_{ik}\bigr)=0,\qquad i=1,\dots,N.
\end{gather*}
Setting here $Z=S^{-1}\mathcal{D}S$, where $\mathcal{D}=\diag(d_1,\dots,d_N)$, we find
\begin{gather*}%\label{BEAZ2}
\sum_{k=1}^N\bigl((\vk-\rho_2)\delta_{ik}+(\tvk-\rho_1)\big(S^{-1}\mathcal{D}^2S\big)_{ik}-(\vk+\tvk)\big(S^{-1}\mathcal{D}S\big)_{ik}\bigr)=0,\qquad i=1,\dots,N.
\end{gather*}
Multiplying this equation from the left by $S$ we arrive at
\begin{gather}\label{BEAZ3}
\sum_{k=1}^N\bigl((\vk-\rho_2) +(\tvk-\rho_1)d_i^2-(\vk+\tvk)d_i\bigr)S_{ik}=0,\qquad i=1,\dots,N.
\end{gather}
Since $S$ is invertible, we conclude that linear combination \eqref{BEAZ3} vanishes if and only if
\begin{gather}\label{BEAZ4}
(\tvk-\rho_1)d_i^2 -(\vk+\tvk)d_i+\vk-\rho_2=0,\qquad i=1,\dots,N.
\end{gather}

Thus, the matrix $Z$ has only two eigenvalues $d^+$ and $d^-$~\eqref{dpm}. It also follows from the above consideration that the on-shell modified Izergin determinant $K_{N,N}^{(z)}(\bu|\bth)$ has the following presentation:
\begin{gather}\label{K-resul}
K_{N,N}^{(z)}(\bu|\bth)=\det_N(Z_{jk}-z\delta_{jk})=\prod_{i=1}^N(d_i-z),
\end{gather}
for arbitrary complex $z$, provided the parameters $\bu$ are on-shell. Thus, \eqref{K-resul0} is proved.
\end{proof}

In particular, setting $z=0$ in \eqref{K-resul} we obtain
\begin{gather}\label{prod-f}
f(\bu,\bth)=\prod_{i=1}^Nd_i.
\end{gather}

Using equation \eqref{BEAZ4} one can find identities for the on-shell modified Izergin determinants with different $z$. For example, \eqref{BEAZ4} yields
\begin{gather*}%\label{BEAZ5}
\frac{\beta(\vk-\rho_2)}{\beta\vk+\tvk-2\rho_1}(d_i-1)\left(d_i+\frac1\beta\right)=d_i\left(d_i-\mu-\frac{\mu-1}\beta\right),\qquad i=1,\dots,N.
\end{gather*}
Taking the product over $i$ we obtain
\begin{gather*}%\label{BEAZ8}
\left(\frac{\beta(\vk-\rho_2)}{\beta\vk+\tvk-2\rho_1}\right)^N\prod_{i=1}^N(d_i-1)\left(d_i+\frac1\beta\right)=\prod_{i=1}^N d_i\left(d_i-\mu-\frac{\mu-1}\beta\right).
\end{gather*}
Then, due to equations \eqref{K-resul}, \eqref{prod-f} we find
\begin{gather}\label{finId}
\left(\frac{\beta(\vk-\rho_2)}{\beta\vk+\tvk-2\rho_1}\right)^N K_{N,N}^{(1)}(\bu|\bth)K_{N,N}^{(-1/\beta)}(\bu|\bth)=f(\bu,\bth)K_{N,N}^{(\mu+(\mu-1)/\beta)}(\bu|\bth).
\end{gather}

Concluding this section we would like to mention that exactly the same method can be used to transform Bethe equations of the XXX spin chain with diagonal boundary condition
\begin{gather*}%\label{UBE}
\vk\frac{f(u_j,\bu_j)}{\lambda_1(u_j)}-\tvk\frac{f(\bu_j,u_j)}{\lambda_2(u_j)}=0, \qquad j=1,\dots,M,
\end{gather*}
where $\# \bu=M\leq N$. Similarly to \eqref{K-resul0} we then obtain
\begin{gather}\label{K-resul00}
K_{M,N}^{(z)}(\bu|\bth)=\prod_{i=1}^N(d_i-z),
\end{gather}
where $d_i\in\{1,\vk/\tvk\}$. Using definition of the modified Izergin determinant \eqref{defKdef2} we can write down \eqref{K-resul00} as
\begin{gather*}%\label{NEWBE}
(1-z)^{N-M}\det_M\left(\delta_{jk}f(u_j,\bth)-z\frac{f(u_j,\bu_j)}{h(u_j,u_k)}\right)=\prod_{i=1}^N(d_i-z).
\end{gather*}

\section{Calculation of the scalar product}\label{S-DPSP}

In this section we prove Theorem~\ref{Mtheor1}. First, we show the equivalence of representations~\eqref{Svunew1} and~\eqref{Svu}. Then we prove~\eqref{Svunew1} starting with equation \eqref{SP-finuv} and specifying it for the case $\#\bu=\#\bv=N$
\begin{gather}
S_\nu^{N}(\bv,\bu)=\mu^{2N}
\sum_{\substack{\{\bv_{\so},\bv_{\st}\}\vdash\bv\\\{\bu_{\so},\bu_{\st}\}\vdash\bu }}
\beta_1^{p_{\st}-q_{\st}}\beta_2^{p_{\so}-q_{\so}}\lambda_2(\bv_{\so})\lambda_1(\bv_{\st}) \lambda_2(\bu_{\st})\lambda_1(\bu_{\so})
f(\bv_{\so}, \bv_{\st}) f(\bu_{\st},\bu_{\so})\nonumber\\
\hphantom{S_\nu^{N}(\bv,\bu)=}{}\times K_{q_{\st},p_{\st}}^{(1/\mu)}(\bu_{\st}|\bv_{\st}) \overline{K}_{q_{\so},p_{\so}}^{(1/\mu)}(\bu_{\so}|\bv_{\so}).\label{SP-finuvB}
\end{gather}
Recall that here the sum is taken over all possible partitions $\{\bv_{\so},\bv_{\st}\}\vdash\bv$ and $\{\bu_{\so},\bu_{\st}\}\vdash\bu$ without any restriction on the cardinalities of the subsets.

We have pointed out in Section~\ref{SS} that the function $S_{N}(\bv|\bu)$ is a polynomial in $\bv$ of degree~$N$ in each~$v_j$. Since this polynomial is symmetric over $\bv$, it has $\binom{2N}{N}$ independent coefficients. Hence, in order to find this polynomial, it is enough to compute it in~$\binom{2N}{N}$ points.

Let us consider an arbitrary partition of the inhomogeneities $\bth=\{\bth_{A},\bth_{B}\}$ with $\#\bth_{A}=n_{A}$ and $\#\bth_{B}=n_{B}$. Suppose that $\bv=\{\bth_{A},\bth_{B}-c\}$. Clearly, there exists exactly $\binom{2N}{N}$ choices of this type. Hence, if we compute the scalar product for all possible $\bv=\{\bth_{A},\bth_{B}-c\}$, then the polynomial is completely determined for an arbitrary set $\bv$. This will be done in Sections~\ref{SS-TSPGU} and~\ref{SS-FTSPOS}.

\subsection{Transformation of Jacobian}\label{S-TJ}

In this subsection we reduce the Jacobian of the transfer matrix eigenvalue to the modified Izergin determinant. We thus prove Theorem~\ref{Mtheor2}.

\begin{proof} Assume that representation \eqref{Svu} holds. Let
\begin{gather*}%\label{Mdef}
M(u_j,v_k)=\frac{c}{g(v_k,\bu)\lambda_2(v_k)}\frac{\partial\Lambda(v_k|\bu)}{\partial u_j},
\end{gather*}
where $\Lambda(v_k|\bu)$ is given by~\eqref{Lam}. Then
\begin{gather}
M(u_j,v_k)=(-1)^{N-1}(\tvk-\rho_1)f(v_k,\bth)h(\bu,v_k)\frac{g(u_j,v_k)}{h(u_j,v_k)} \nonumber\\
\hphantom{M(u_j,v_k)=}{} +(\vk-\rho_2)h(v_k,\bu)\frac{g(v_k,u_j)}{h(v_k,u_j)} +(\rho_1+\rho_2) h(v_k,\bth)g(v_k,u_j).\label{Lamder}
\end{gather}
Recall that here the set $\bv$ consists of arbitrary complex numbers, while the set $\bu$ solves Bethe equations. Therefore, in particular, the matrix elements \eqref{Lamder} have no poles at $v_k=u_j$:
\begin{gather*}\label{Nopoles}
\Res M(u_j,v_k)\Bigr|_{v_k=u_j}=0.
\end{gather*}
Indeed, it is easy to see that this residue is proportional to the Bethe equations \eqref{BE00}.

Obviously, for any non-degenerated $N\times N$ matrix $A$, one has
\begin{gather*}%\label{detM}
\det_N M=\frac{\det_N H}{\det_N A}, \qquad\text{where}\qquad H=MA.
\end{gather*}
Let
\begin{gather*}%\label{A}
A(v_j,\theta_k)=\frac{g(v_j,\bv_j)}{h(v_j,\theta_k)}.
\end{gather*}
Then the determinant of this matrix is proportional to the Cauchy determinant
\begin{gather*}%\label{detA}
\det_N A(v_j,\theta_k)=\frac{\Delta(\bv)}{\Delta(\bth)h(\bv,\bth)},
\end{gather*}
and we obtain
\begin{gather*}%\label{detM1}
\det_N M=\frac{\Delta(\bth)h(\bv,\bth)}{\Delta(\bv)}\det_N H.
\end{gather*}

Let us compute explicitly the entries $H_{jl}=\sum\limits_{k=1}^NM(u_j,v_k)A(v_k,\theta_l)$. Consider an auxiliary contour integral
\begin{gather}\label{CI}
I=\frac1{2\pi{\rm i} c}\oint_{|z|=R\to\infty}\frac{g(z,\bv)}{h(z,\theta_l)}M(u_j,z)\,{\rm d}z.
\end{gather}
It is easy to see that the integrand behaves as $z^{-2}$ as $z\to\infty$. Thus, $I=0$. On the other hand, taking
the sum of the residues within the integration contour we obtain
\begin{gather*}%\label{CIres}
I=H_{jl}+(-1)^N\frac{M(u_j,\theta_l-c)}{h(\bv,\theta_l)}+\frac1c\sum_{k=1}^N\frac{g(\theta_k,\bv)}{h(\theta_k,\theta_l)}\Res M(u_j,z)\Bigr|_{z=\theta_k}.
\end{gather*}
Substituting here explicit representation \eqref{Lamder} for $M(u_j,v_k)$ we find
\begin{gather*}
(-1)^{N-1}H_{jl}=\frac{(\vk-\rho_2)g(u_j,\theta_l)}{g(\theta_l,\bu)h(\bv,\theta_l)h(u_j,\theta_l)}
-(\tvk-\rho_1)\sum_{k=1}^N\frac{g(\theta_k,\bv)}{h(\theta_k,\theta_l)}f(\theta_k,\bth_k)h(\bu,\theta_k)\frac{g(u_j,\theta_k)}{h(u_j,\theta_k)}\nonumber\\
\hphantom{(-1)^{N-1}H_{jl}}{} =\sum_{k=1}^N\frac{(\vk-\rho_2)g(u_j,\theta_k)}{g(\theta_k,\bu)h(\bv,\theta_k)h(u_j,\theta_k)}\!\left(\delta_{kl}
-\frac{\tvk-\rho_1}{\vk-\rho_2}\frac{f(\theta_k,\bth_k)}{h(\theta_k,\theta_l)}f(\bu,\theta_k)f(\bv,\theta_k)\right).%\label{Hjl}
\end{gather*}
Thus
\begin{gather*}%\label{detHXY}
\det_NH=\det_N H^{(1)} \det_N H^{(2)},
\end{gather*}
where
\begin{gather*}%\label{X}
H^{(1)}_{jk}=\frac{(\vk-\rho_2)}{g(\theta_k,\bu)h(\bv,\theta_k)}\frac{g(u_j,\theta_k)}{h(u_j,\theta_k)},
\end{gather*}
and
\begin{gather*}%\label{Y}
H^{(2)}_{jk}=\delta_{jk}-\frac{\tvk-\rho_1}{\vk-\rho_2}\frac{f(\theta_j,\bth_j)}{h(\theta_j,\theta_k)}f(\bu,\theta_j)f(\bv,\theta_j).
\end{gather*}
It is easy to see that $\det\limits_N H^{(1)}$ is proportional to the ordinary Izergin determinant \eqref{OrdIK}
\begin{gather*}%\label{detX}
\det_NH^{(1)} =\frac{(\vk-\rho_2)^N}{g(\bth,\bu)h(\bv,\bth)}\det_N \left(\frac{g(u_j,\theta_k)}{h(u_j,\theta_k)}\right)
=\frac{(-1)^N(\vk-\rho_2)^N}{h(\bv,\bth)f(\bu,\bth)\Delta'(\bu)\Delta(\bth)} K_{N,N}^{(1)}(\bu|\bth),
\end{gather*}
while $\det\limits_N H^{(2)} $ is proportional to the modified Izergin determinant
\begin{gather*}%\label{detY}
\det_NH^{(2)} =(-1)^N\left(\frac{\tvk-\rho_1}{\vk-\rho_2}\right)^N K_{2N,N}^{(\alpha)}(\{\bu,\bv\}|\bth), \qquad \text{where}\qquad
\alpha=\frac{\vk-\rho_2}{\tvk-\rho_1}.
\end{gather*}
Thus, we finally obtain
\begin{gather}\label{detMres}
\det_N M=\frac{(\tvk-\rho_1)^N}{f(\bu,\bth)\Delta'(\bu)\Delta(\bv)} K_{N,N}^{(1)}(\bu|\bth)
K_{2N,N}^{(\alpha)}(\{\bu,\bv\}|\bth).
\end{gather}

We would like to stress that we essentially used the fact that $\bu$ satisfies Bethe equations~\eqref{BE00}. Therefore, $M(u_j,v_k)$ has no pole at $v_k=u_j$. Otherwise the contour integral \eqref{CI} would have an additional contribution. Thus, equation~\eqref{detMres} holds only on-shell, that is $\bu$ satisfies Bethe equations \eqref{BE00}.

Substituting \eqref{detMres} into \eqref{Svu} we obtain
\begin{gather*}%\label{Svunew}
S_\nu^N(\bv,\bu)=\left(\frac{\mu\beta(\tvk-\rho_1)}{\beta\vk+\tvk-2\rho_1}\right)^N \frac{\lambda_2(\bu)\lambda_2(\bv)}{f(\bu,\bth)}
K_{N,N}^{(-1/\beta)}(\bu|\bth) K_{N,N}^{(1)}(\bu|\bth)K_{2N,N}^{(\alpha)}(\{\bu,\bv\}|\bth).
\end{gather*}
Using identity \eqref{finId} we immediately arrive at \eqref{Svunew1}. Thus, representations~\eqref{Svu} and \eqref{Svunew1} are equivalent. \end{proof}

\subsection[Transformation of the scalar product for generic $\bu$]{Transformation of the scalar product for generic $\boldsymbol{\bu}$} \label{SS-TSPGU}

We now turn back to equation \eqref{SP-finuvB}. Using equation \eqref{c-c} and $\lambda_1(u_j)/\lambda_2(u_j)=f(u_j,\bth)$ we obtain
\begin{gather*}
S_\nu^{N}(\bv,\bu)=\mu^{2N}\lambda_2(\bu) \sum_{\substack{\{\bv_{\so},\bv_{\st}\}\vdash\bv\\\{\bu_{\so},\bu_{\st}\}\vdash\bu }}\beta_1^{p_{\st}-q_{\st}}\beta_2^{p_{\so}-q_{\so}}
\left(1-\frac1\mu\right)^{p_{\so}-q_{\so}}\lambda_2(\bv_{\so})\lambda_1(\bv_{\st}) f(\bu_{\so},\bth)\nonumber\\
\hphantom{S_\nu^{N}(\bv,\bu)=}{} \times f(\bv_{\so}, \bv_{\st}) f(\bu_{\st},\bu_{\so})
K_{q_{\st},p_{\st}}^{(1/\mu)}(\bu_{\st}|\bv_{\st}) K_{p_{\so},q_{\so}}^{(1/\mu)}(\bv_{\so}|\bu_{\so}).%\label{SP-finuv-redNN}
\end{gather*}
Now we set $\bv=\{\bth_{A},\bth_{B}-c\}$ and denote the corresponding scalar product by $S_{AB}$. Since $\lambda_2(\theta_j)=0$ and $\lambda_1(\theta_j-c)=0$ for all $j=1,\dots,N$, we see that the sum over partitions $\{\bv_{\so},\bv_{\st}\}\vdash\bv$ is frozen. The only non-vanishing contribution occurs for $\bv_{\so}=\bth_{B}-c$ and $\bv_{\st}=\bth_{A}$. Then we have
\begin{gather}
S_{AB}=\mu^{2N}\lambda_2(\bu) G_{AB} \sum_{\{\bu_{\so},\bu_{\st}\}\vdash\bu } \beta^{n_A-q_{\st}}
\left(1-\frac1\mu\right)^{n_{B}-q_{\so}} f(\bu_{\so},\bth)f(\bu_{\st},\bu_{\so})\nonumber\\
\hphantom{S_{AB}=}{}\times K_{q_{\st},n_{A}}^{(1/\mu)}(\bu_{\st}|\bth_{A}) K_{n_{B},q_{\so}}^{(1/\mu)}(\bth_{B}-c|\bu_{\so}),\label{SPab1}
\end{gather}
where we used $n_A+n_B=q_{\so}+q_{\st}=N$, and
\begin{gather}\label{Gab}
G_{AB}=\frac{\lambda_2(\bth_{B}-c)\lambda_1(\bth_{A})}{f(\bth_{A},\bth_{B})}= (-1)^{Nn_{B}}\frac{h(\bth,\bth_{B})h(\bth_{A},\bth)}{f(\bth_{A},\bth_{B})}.
\end{gather}
Recall also that $\beta=\beta_1/\beta_2$. Our goal now is to calculate the sum over partitions of the set $\bu$.

Using \eqref{Kinv1} we rewrite \eqref{SPab1} as follows:
\begin{gather}
S_{AB}=\mu^{2N}\lambda_2(\bu) G_{AB}\sum_{\{\bu_{\so},\bu_{\st}\}\vdash\bu } \beta^{n_A-q_{\st}}
\left(-\frac1\mu\right)^{n_{B}} f(\bu_{\so},\bth_{A})f(\bu_{\st},\bu_{\so})\nonumber\\
\hphantom{S_{AB}=}{}\times K_{q_{\st},n_{A}}^{(1/\mu)}(\bu_{\st}|\bth_{A}) K_{q_{\so},n_{B}}^{(\mu)}(\bu_{\so}|\bth_{B}).\label{SPab2}
\end{gather}
The next step is to use the second equation~\eqref{K-sumpart} for both modified Izergin determinants in~\eqref{SPab2}:
\begin{gather*}
K_{q_{\st},n_{A}}^{(1/\mu)}(\bu_{\st}|\bth_{A})= \left(1-\frac1\mu\right)^{n_{A}-q_2-q_4}
\sum_{\{\bu_{2},\bu_{4}\}\vdash\bu_{\st}}\left(-\frac1\mu\right)^{q_2} f(\bu_{4},\bth_{A})f(\bu_{2},\bu_{4}),\nonumber\\
K_{q_{\so},n_{B}}^{(\mu)}(\bu_{\so}|\bth_{B})=\left(1-\mu\right)^{n_{B}-q_1-q_3}
\sum_{\{\bu_{1},\bu_{3}\}\vdash\bu_{\so}}(-\mu)^{q_1} f(\bu_{3},\bth_{B})f(\bu_{1},\bu_{3}).%\label{K-sums}
\end{gather*}
Here we use Arabic numeration of the subsets, in order to avoid too cumbersome Roman numerals. We also set $q_j=\#\bu_j$. Substituting these equations into~\eqref{SPab2} we obtain
\begin{gather}
S_{AB}=\mu^{2N}\lambda_2(\bu)G_{AB}\sum_{\{\bu_{1},\bu_{2},\bu_{3},\bu_{4}\}\vdash\bu }\beta^{n_A-q_2-q_4}
\left(-\frac1\mu\right)^{N-q_1-q_4} f(\bu_{3},\bth_{B}) f(\bu_{1},\bth_{A}) \nonumber\\
\hphantom{S_{AB}=}{} \times f(\bu_{3},\bth_{A})f(\bu_{4},\bth_{A})
f(\bu_{2},\bu_{1}) f(\bu_{2},\bu_{3})f(\bu_{4},\bu_{1}) f(\bu_{4},\bu_{3})f(\bu_{2},\bu_{4}) f(\bu_{1},\bu_{3}).\label{SPab3}
\end{gather}
Here the sum is taken over partitions of the set $\bu$ into four subsets $\{\bu_1,\bu_2,\bu_3,\bu_4\}\vdash\bu$.

Let $\bu_{0}=\{\bu_{1},\bu_{4}\}$. Then \eqref{SPab3} transforms as follows:
\begin{gather*}
S_{AB}=\mu^{2N}\lambda_2(\bu)G_{AB}\sum_{\{\bu_{0},\bu_{2},\bu_{3}\}\vdash\bu }\beta^{n_A-q_2}
\left(-\frac1\mu\right)^{N-q_0} f(\bu_{0},\bth_{A}) \nonumber\\
\hphantom{S_{AB}=}{} \times f(\bu_{3},\bth) f(\bu_{2},\bu_{0}) f(\bu_{2},\bu_{3}) f(\bu_{0},\bu_{3})
\sum_{\{\bu_{1},\bu_{4}\}\vdash\bu_{0} } \beta^{-q_4}f(\bu_{4},\bu_{1}),%\label{SPab4}
\end{gather*}
where $q_0=\#\bu_0$. The sum over partitions is now organized into two steps. First, the set $\bu$ is divided into three subsets $\{\bu_0,\bu_2,\bu_3\}\vdash\bu$, and then the subset $\bu_0$ is divided once more as $\{\bu_1,\bu_4\}\vdash\bu_0$. Obviously, the last sum over partitions of the subset $\bu_{0}$ gives $(1+\beta^{-1})^{q_0}$ (see~\cite{BelS18a}), and we arrive at
\begin{gather}
S_{AB}=(-\mu)^{N}\lambda_2(\bu)G_{AB} \sum_{\{\bu_{0},\bu_{2},\bu_{3}\}\vdash \bu}(-\mu)^{q_0}(1+\beta)^{q_0}\beta^{n_A-q_2-q_0}
f(\bu_{0},\bth_{A}) f(\bu_{3},\bth) \nonumber\\
\hphantom{S_{AB}=}{}\times f(\bu_{2},\bu_{0}) f(\bu_{2},\bu_{3}) f(\bu_{0},\bu_{3}) .\label{SPab5}
\end{gather}

Now we combine the subsets $\bu_{0}$ and $\bu_{2}$ into one subset. Let $\bu_{1}=\{\bu_{0},\bu_{2}\}$ with $\#\bu_{1}=q_1$. Then equation~\eqref{SPab5} takes the form
\begin{gather*}
S_{AB}=(-\mu)^{N}\lambda_2(\bu)G_{AB}
\sum_{\{\bu_{1},\bu_{3}\}\vdash\bu } \beta^{n_A-q_1} f(\bu_{3},\bth) f(\bu_{1},\bu_{3})\nonumber\\
\hphantom{S_{AB}=}{}
\times \sum_{\{\bu_{0},\bu_{2}\}\vdash\bu_{1} } (-\mu)^{q_0}(1+\beta)^{q_0} f(\bu_{0},\bth_{A}) f(\bu_{2},\bu_{0}).%\label{SPab6}
\end{gather*}
The sum over partitions $\{\bu_{0},\bu_{2}\}\vdash\bu_{1}$ gives a new modified Izergin determinant due to the second equation \eqref{K-sumpart}, and we find
\begin{gather*}
S_{AB}=(-\mu)^{N}\lambda_2(\bu)G_{AB} \sum_{\{\bu_{1},\bu_{3}\} \vdash\bu} f(\bu_{3},\bth) f(\bu_{1},\bu_{3})\nonumber\\
\hphantom{S_{AB}=}{}\times \frac{\beta^{n_A-q_1}(-\mu-\mu\beta)^{n_A}}{(1-\mu-\mu\beta)^{n_A-q_1}} K_{q_1,n_A}^{(1/(\mu+\mu\beta))}(\bu_{1}|\bth_{A}) .%\label{SPab7}
\end{gather*}

We now develop the modified Izergin determinant $K_{q_1,n_A}^{(1/(\mu+\mu\beta))}(\bu_{1}|\bth_{A})$ via the first equation~\eqref{K-sumpart}, as a sum over partitions $\{\bth_{A_1},\bth_{A_2}\}\vdash\bth_{A}$:
\begin{gather}
S_{AB}=(-\mu)^{N}\lambda_2(\bu)G_{AB} \frac{(\mu+\mu\beta)^{n_A}\beta^{n_A}}{(\mu+\mu\beta-1)^{n_A}}
\sum_{\{\bth_{A_1},\bth_{A_2}\}\vdash\bth_{A} } \sum_{\{\bu_{1},\bu_{3}\}\vdash\bu } \left(\frac{1-\mu}\beta-\mu\right)^{q_1} f(\bu_{3},\bth) \nonumber\\
\hphantom{S_{AB}=}{}
\times \left(-\frac1{\mu+\mu\beta}\right)^{n_{A_2}} f(\bu_{1},\bu_{3})f(\bu_{1},\bth_{A_1})f(\bth_{A_1},\bth_{A_2}),\label{SPab8}
\end{gather}
where $n_{A_1}=\#\bth_{A_1}$, $n_{A_2}=\#\bth_{A_2}=n_A-n_{A_1}$. Let $\{\bth_{B},\bth_{A_2}\}=\bth_{C}$. Then \eqref{SPab8} can be written as follows:
\begin{gather}
S_{AB}=(-\mu)^{N}\lambda_2(\bu)G_{AB} \frac{(\mu+\mu\beta)^{n_A}\beta^{n_A}}{(\mu+\mu\beta-1)^{n_A}}
\sum_{\{\bth_{A_1},\bth_{A_2}\}\vdash\bth_{A} } \left(-\frac1{\mu+\mu\beta}\right)^{n_{A_2}} f(\bu,\bth_{A_1})f(\bth_{A_1},\bth_{A_2})\nonumber\\
\hphantom{S_{AB}=}{}\times\sum_{\{\bu_{1},\bu_{3}\}\vdash\bu } \left(\frac{1-\mu}\beta-\mu\right)^{q_1} f(\bu_{3},\bth_{C}) f(\bu_{1},\bu_{3}).\label{SPab9}
\end{gather}
The sum over partitions $\{\bu_{1},\bu_{3}\}\vdash\bu $ gives again the modified Izergin determinant due to the second equation \eqref{K-sumpart}:
\begin{gather*}
\sum_{\{\bu_{1},\bu_{3}\}\vdash\bu } \left(\frac{1-\mu}\beta-\mu\right)^{q_1} f(\bu_{3},\bth_{C}) f(\bu_{1},\bu_{3})\nonumber\\
\qquad{} =\left(\frac{1-\mu}\beta-\mu\right)^{N-n_B-n_{A_2}}K_{N,N-n_{A_1}}^{(\mu+(\mu-1)/\beta)}(\bu|\{\bth\setminus\bth_{A_1}\}).%\label{sumlast}
\end{gather*}
Substituting this into \eqref{SPab9} we finally arrive at
\begin{gather}
S_{AB}=\frac{(-\mu)^{N}\lambda_2(\bu)G_{AB}\beta^{n_A}}{(1-\mu-\mu\beta)^{n_A}} \sum_{\{\bth_{A_1},\bth_{A_2}\}\vdash\bth_{A} } \xi^{n_{A_1}}
f(\bu,\bth_{A_1})f(\bth_{A_1},\bth_{A_2})\nonumber\\
\hphantom{S_{AB}=}{} \times K_{N,N-n_{A_1}}^{(\mu+(\mu-1)/\beta)}(\bu|\{\bth\setminus\bth_{A_1}\}),\label{SPab10}
\end{gather}
where
\begin{gather}\label{xi}
\xi=\frac\mu\beta (\beta+1)^2(\mu-1).
\end{gather}

Thus, we computed the sum over partitions of the set $\bu$. However, instead of the original partitions, we obtained a new sum over partitions of the inhomogeneities $\bth$.

\subsection[Further transformation of the scalar product for $\bu$ on-shell]{Further transformation of the scalar product for $\boldsymbol{\bu}$ on-shell}\label{SS-FTSPOS}

Observe that up to now the parameters $\bu$ were arbitrary complex numbers. Now we require them to be on-shell. Then due to Theorem~\ref{MT} we have
\begin{gather*}%\label{Mainthmp}
\xi^{n_{A_1}}f(\bu,\bth_{A_1})K^{(\mu+(\mu-1)/\beta)}_{N,N-n_{A_1}}(\bu|\{\bth\setminus\bth_{A_1}\})
=\eta^{n_{A_1}} K^{(\mu+(\mu-1)/\beta)}_{N,N}(\bu|\bth) K^{(1/\eta)}_{N,n_{A_1}}(\bu|\bth_{A_1}),
\end{gather*}
where $\eta=\big(\frac{\mu-1}\beta+\mu\big)\frac{\tvk-\rho_1}{\vk-\rho_2}$ (see \eqref{eta}). Then \eqref{SPab10} takes the form
\begin{gather}
S_{AB}=\frac{(-\mu)^{N}\lambda_2(\bu)G_{AB}\beta^{n_A}}{(1-\mu-\mu\beta)^{n_A}}K^{(\mu+(\mu-1)/\beta)}_{N,N}(\bu|\bth)\nonumber\\
\hphantom{S_{AB}=}{}\times
\sum_{\{\bth_{A_1},\bth_{A_2}\}\vdash \bth_{A}} \eta^{n_{A_1}}f(\bth_{A_1},\bth_{A_2}) K^{(1/\eta)}_{N,n_{A_1}}(\bu|\bth_{A_1}).\label{Q1}
\end{gather}
The remaining sum over partitions now can be computed via \eqref{sun-Kf}:
\begin{gather*}%\label{applSF}
\sum_{\{\bth_{A_1},\bth_{A_2}\}\vdash\bth_{A} } \eta^{n_{A_1}} f(\bth_{A_1},\bth_{A_2}) K^{(1/\eta)}_{N,n_{A_1}}(\bu|\bth_{A_1})=\eta^{n_{A}}K^{(0)}_{N,n_{A}}(\bu|\bth_{A}) =\eta^{n_{A}}f(\bu,\bth_{A}).
\end{gather*}
Substituting this into \eqref{Q1} we immediately arrive at
\begin{gather}\label{Q4}
S_{AB}=(-1)^{n_B}\mu^{N}\alpha^{-n_{A}}\lambda_2(\bu)G_{AB}f(\bu,\bth_{A})K^{(\mu+(\mu-1)/\beta)}_{N,N}(\bu|\bth),
\end{gather}
where $\alpha$ is given by \eqref{alpha0}.

It remains to compare this result with representation \eqref{Svunew1} at $\bv=\{\bth_{A},\bth_{B}-c\}$. For this we should calculate $\lambda_2(\bv) K_{2N,N}^{(\alpha)}(\{\bu,\bv\}|\bth)$ for given values of $\bv$. Replacing $\bth_{A}$ with $\bth_{A}+\epsilon$ we obtain
\begin{gather*}
\lim_{\epsilon\to 0}\lambda_2(\bth_{A}+\epsilon)\lambda_2(\bth_{B}-c)K_{2N,N}^{(\alpha)}(\{\bu,\bth_{A}+\epsilon,\bth_{B}-c\}|\bth)\nonumber\\
\qquad{} =(-\alpha)^{n_B}\lim_{\epsilon\to 0}\frac{(-1)^{Nn_B}h(\bth,\bth_{B})}
{g(\bth_A,\bth_A+\epsilon) g(\bth_A,\bth_B)}K_{2N,N}^{(\alpha)}(\{\bu,\bth_{A}+\epsilon\}|\bth_{A}),%\label{limK1}
\end{gather*}
where we used representation \eqref{lambdas1} for the function $\lambda_2(u)$ and reduction property \eqref{Kz}. Using now \eqref{mresK} we find
\begin{gather*}
\lim_{\epsilon\to 0}\lambda_2(\bth_{A}+\epsilon)\lambda_2(\bth_{B}-c)K_{2N,N}^{(\alpha)}(\{\bu,\bth_{A}+\epsilon,\bth_{B}-c\}|\bth)\nonumber\\
\qquad{}=(-\alpha)^{n_B}\frac{(-1)^{Nn_B}h(\bth,\bth_{B})h(\bth_A,\bth_A)}{ g(\bth_A,\bth_B)} f(\bu,\bth_{A})=(-\alpha)^{n_B}G_{AB}f(\bu,\bth_{A}),%\label{limK2}
\end{gather*}
where $G_{AB}$ is given by \eqref{Gab}. Substituting this expression for $\lambda_2(\bv)K_{2N,N}^{(\alpha)}(\{\bu,\bv\}|\bth)$ at $\bv=\{\bth_{A},\bth_{B}-c\}$ into~\eqref{Svunew1} we arrive at~\eqref{Q4}.

Thus, for an arbitrary partition $\bth=\{\bth_{A},\bth_{B}\}$, the values of the polynomial $S_\nu^{N}(\bv,\bu)$ at $\bv=\{\bth_{A},\bth_{B}-c\}$ are given by equation~\eqref{Svunew1}. As we discussed above, this means that $S_\nu^{N}(\bv,\bu)$ is given by~\eqref{Svunew1} for arbitrary complex~$\bv$.

\section{Conclusion}

In this paper we proved the determinant representation for the OFS-ONS scalar product conjectured in~\cite{BelP} within the framework of the MABA. Similarly to the known determinant formulas, this representation contains the Jacobian of the transfer matrix eigenvalue. However, due to the peculiarities of the MABA this Jacobian can be reduced to the modified Izergin determinant. This possibility arises due to the fact that the number of the Bethe parameters in the modified on-shell Bethe vector coincides with the number of sites of the chain.

In spite of representation \eqref{Svu} formally looks very similar to the one obtained in \cite{KitMT99, Sla89} for the XXX chain with periodic boundary conditions, the proof is very different. Recall that the determinant formula \cite{Sla89} for the ABA solvable models holds for the most general case when the Hilbert space of the monodromy matrix entries is not specified. In particular, it can be infinite-dimensional. On the contrary, the proof given in this paper is essentially based on the fact that we work with a finite-dimensional representation of the $RTT$ algebra. In its turn, this yields a set of very specific properties of the on-shell Izergin determinant. The latter also is an essential part of our proof.

All this does not mean, however, that there is no other proof of the representation \eqref{Svu}, which would be more general and would not rely on the properties of a particular model. The are planning to publish such a proof in our forthcoming publication.

\appendix

\section{Properties of the modified Izergin determinant}\label{A-PMID}

In this section we give a list of properties of the modified Izergin determinant introduced in Section~\ref{A-MDID}. The proofs can be found in~\cite{BelSV18c}. In all the formulas listed below $\bu$, $\bv$, and $z$ are arbitrary complex numbers such that $\#\bu=n$ and $\#\bv=m$.

We begin with a direct relation between modified Izergin determinant and its conjugated:
\begin{gather}\label{c-c}
\overline{K}_{n,m}^{(z)}(\bu|\bv)=(1-z)^{m-n}K_{m,n}^{(z)}(\bv|\bu), \qquad K^{(z)}_{n,m}(-\bu|-\bv)=\overline{K}^{(z)}_{n,m}(\bu|\bv).
\end{gather}
Due to this relation all other properties are given for $K^{(z)}_{n,m}(\bu|\bv)$ only.
\begin{itemize}\itemsep=0pt
\item For arbitrary complex $w$
\begin{gather*}%\label{Elprop}
K^{(z)}_{n,m}(\bu-w|\bv)=K^{(z)}_{n,m}(\bu|\bv+w).
\end{gather*}
\item The modified Izergin determinant has the following initial conditions:
\begin{gather}\label{K0}
K_{n,0}^{(z)}(\bu|\varnothing)=1, \qquad K_{0,n}^{(z)}(\varnothing|\bv)=(1-z)^n.
\end{gather}
\item If one of the arguments goes to infinity, then
\begin{gather*}
K_{n,m}^{(z)}(\bu|\bv)\Bigr|_{u_k\to\infty}=K_{n-1,m}^{(z)}(\bu_k|\bv),\nonumber\\
K_{n,m}^{(z)}(\bu|\bv)\Bigr|_{v_k\to\infty}=(1-z)K_{n,m-1}^{(z)}(\bu|\bv_k).%\label{Kvinf}
\end{gather*}
\item For arbitrary complex $w$
\begin{gather}\label{Kz}
K_{n+1,m+1}^{(z)}(\{\bu,w-c\}|\{\bv,w\})=-z K_{n,m}^{(z)}(\bu|\bv).
\end{gather}
\item The modified Izergin determinant has the following representations as sums over partitions
\begin{gather}
K_{n,m}^{(z)}(\bu|\bv)=\sum_{\{\bv_{\so},\bv_{\st}\}\vdash\bv}(-z)^{\#\bv_{\st}} f(\bu,\bv_{\so})f(\bv_{\so},\bv_{\st}),\nonumber\\
K_{n,m}^{(z)}(\bu|\bv)=(1-z)^{m-n}\sum_{\{\bu_{\so},\bu_{\st}\}\vdash\bu}(-z)^{\#\bu_{\so}} f(\bu_{\st},\bv)f(\bu_{\so},\bu_{\st}).\label{K-sumpart}
\end{gather}
Here the sum is taken over all partitions $\{\bv_{\so},\bv_{\st}\}\vdash\bv$ in the first equation, while in the second equation, the sum is taken over all partitions $\{\bu_{\so},\bu_{\st}\}\vdash\bu$.
\item The order of the sets $\bu$ and $\bv$ can be changed via
\begin{gather}\label{Kinv1}
K_{n,m}^{(z)}(\bu|\bv+c)=\frac{(-z)^{n}(1-z)^{m-n} }{f(\bv,\bu)}K_{m,n}^{(1/z)}(\bv|\bu).
\end{gather}
\item The modified Izergin determinant has simple poles at $u_j=v_k$. The residue at $u_n=v_m$ is given by
\begin{gather}\label{resK}
K_{n,m}^{(z)}(\bu|\bv)\Bigr|_{u_n\to v_m}=g(u_n,v_m)f(\bu_n,u_n)f(v_m,\bv_m)K_{n-1,m-1}^{(z)}(\bu_n|\bv_m)+{\rm reg},
\end{gather}
where ${\rm reg}$ means regular part. It follows from \eqref{resK} that if $\#\bw=\#\bw'=l$, then
\begin{gather}\label{mresK}
\lim_{\bw'\to\bw}\frac{K_{n+l,m+l}^{(z)}(\bu|\bv)}{f(\bw',\bw)}=f(\bu,\bw)f(\bw,\bv)K_{n,m}^{(z)}(\bu|\bv).
\end{gather}
\item
Let $\bu$ and $\bv$ be sets of arbitrary complex numbers such that $\#\bu=n$ and $\#\bv=m$. Then
\begin{gather}\label{sun-Kf}
\sum_{\{\bv_{\so},\bv_{\st}\}\vdash\bv}z_1^{l_{\st}} K^{(z_2)}_{n,l_{\so}}(\bu|\bv_{\so}) f(\bv_{\so},\bv_{\st})=K_{n,m}^{(z_2-z_1)}(\bu|\bv).
\end{gather}
Here $l_{\st}=\#\bv_{\st}$. The sum is taken with respect to all partitions $\{\bv_{\so},\bv_{\st}\}\vdash\bv$. There is no any restriction on the cardinalities of the subsets.
\end{itemize}

\section{Identities for rational functions}\label{A-IRF}

\subsection{Sums of rational functions}\label{A-SRF}

To prove \eqref{sumM} we use the following summation formulas:
\begin{gather}
\sum_{j=1}^N \frac{g(u_j,v_k)}{h(u_j,v_k)}\gamma_j = \frac{h(\bv,v_k)}{h(\bu,v_k)},\nonumber\\
\sum_{j=1}^N \frac{g(v_k,u_j)}{h(v_k,u_j)}\gamma_j = -\frac{h(v_k,\bv)}{h(v_k,\bu)},\nonumber\\
\sum_{j=1}^N g(v_k,u_j)\gamma_j =-1.\label{sumform}
\end{gather}
All these formulas can be obtained by calculation of special contour integrals. For example, consider an integral
\begin{gather*}
I=\frac1{2\pi{\rm i}c}\oint_{|z|=R\to\infty} \frac{g(z,v_k)}{h(z,v_k)}\frac{g(z,\bu)}{g(z,\bv)} \,{\rm d}z \nonumber\\
\hphantom{I}{} = \frac1{2\pi{\rm i}}\oint_{|z|=R\to\infty} \frac{c\,{\rm d}z}{(z-v_k)(z-v_k+c)}\prod_{i=1}^N\frac{z-v_i}{z-u_i}.%\label{contI}
\end{gather*}
The integration is taken over anticlockwise oriented contour around infinity. Clearly, $I=0$, as the integrand behaves as $z^{-2}$ at $z\to\infty$. On the other hand, the integral is equal to the sum of residues within the contour. The residues at $z=u_j$ give the sum in the l.h.s.\ of the first equation \eqref{sumform}. One more contribution comes from the residue at $z=v_k-c$. Thus, we obtain
\begin{gather*}%\label{Ires}
I=0=\sum_{j=1}^N \frac{g(u_j,v_k)}{h(u_j,v_k)}\gamma_j-\prod_{i=1}^N\frac{v_k-v_i-c}{v_k-u_i-c}.
\end{gather*}
This immediately implies the first identity \eqref{sumform}. Other identities can be proved exactly in the same manner.

Using now representation \eqref{Lamder0} we obtain
\begin{gather*}
\sum_{j=1}^N\frac{c\gamma_j}{g(v_k,\bu)\lambda_2(v_k)}\frac{\partial\Lambda(v_k|\bu)}{\partial u_j}
=(-1)^{N-1}(\tvk-\rho_1)\frac{\lambda_1(v_k)}{\lambda_2(v_k)}h(\bv,v_k) \nonumber\\
\hphantom{\sum_{j=1}^N\frac{c\gamma_j}{g(v_k,\bu)\lambda_2(v_k)}\frac{\partial\Lambda(v_k|\bu)}{\partial u_j}=}{} -(\vk-\rho_2)h(v_k,\bv) -(\rho_1+\rho_2) \lambda_1(v_k),%\label{sumM2}
\end{gather*}
leading to \eqref{sumM}.

\subsection{Matrices with rational elements}\label{A-MRE}

Let $\bar x=\{x_1,\dots,x_n\}$ and let $\Omega(x)$ be an $n\times n$ matrix with the elements
\begin{gather*}%\label{Omx}
\Omega_{jk}(\bar x)=\frac{f(\bar x_k,x_k)}{h(x_j,x_k)}.
\end{gather*}
Let us prove that the inverse matrix has the following entries:
\begin{gather*}%\label{OImx}
\bigl(\Omega^{-1}(\bar x)\bigr)_{jk}=\frac{f(x_k,\bar x_k)}{h(x_k,x_j)}=\Omega_{jk}(\bar x)\Bigr|_{c\to -c}.
\end{gather*}
For this we consider the product of these two matrices. Let
\begin{gather*}%\label{GOmOm}
G_{jk}=\sum_{l=1}^n \frac{f(\bar x_l,x_l)}{h(x_j,x_l)} \frac{f(x_k,\bar x_k)}{h(x_k,x_l)}.
\end{gather*}
Consider an auxiliary contour integral
\begin{gather*}%\label{IG}
I=\frac1{2\pi{\rm i} c}\oint_{|z|=R\to\infty}\frac{f(\bar x,z)}{h(x_j,z)} \frac{f(x_k,\bar x_k)}{h(x_k,z)} \,{\rm d}z,
\end{gather*}
where the anticlockwise oriented integration contour is around infinite point. Since the integrand behaves as $z^{-2}$ at $z\to\infty$, we have $I=0$. On the other hand, this integral is equal to the sum of the residues within the integration contour. The contribution of the poles at $z=x_l$, $l=1,\dots,n$ gives $-G_{jk}$. One more contribution comes from the pole at $z=x_k+c$ at $j=k$. Thus, we obtain
\begin{gather*}%\label{enpr}
I=0=-G_{jk}+\delta_{jk}f(x_k,\bar x_k)f(\bar x_k,x_k+c)=-G_{jk}+\delta_{jk},
\end{gather*}
where we used \eqref{gfh-prop}. Thus, $G_{jk}=\delta_{jk}$, what ends the proof.

\section{Relations between on-shell modified Izergin determinants\label{A-ROSID}}

\begin{Theorem}\label{MT} Consider an arbitrary partition of inhomogeneities $\{\bth_{A},\bth_{B}\}\vdash\bth$ such that $\#\bth_{A}=n_A$ and $\bth_{B}=N-n_A$. Let $\bu$ be on-shell, that is the set $\bu$ satisfies Bethe equations~\eqref{BE00}. Then
\begin{gather*}%\label{Mainthm}
f(\bu,\bth_{A})K^{(\mu+(\mu-1)/\beta)}_{N,N-n_A}(\bu|\bth_{B})=\left(\frac\eta\xi\right)^{n_A}
K^{(\mu+(\mu-1)/\beta)}_{N,N}(\bu|\bth) K^{(1/\eta)}_{N,n_A}(\bu|\bth_{A}),
\end{gather*}
where
\begin{gather}\label{eta}
\eta=\left(\frac{\mu-1}\beta+\mu\right)\frac{\tvk-\rho_1}{\vk-\rho_2}.
\end{gather}
and $\xi$ is given by \eqref{xi}.
\end{Theorem}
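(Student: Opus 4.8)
The plan is to regard both sides as functions of the on-shell data and to reduce everything to the scalar quadratic \eqref{BEAZ4} satisfied by the eigenvalues $d^\pm$ of the matrix $Z$. Write $z_0=\mu+(\mu-1)/\beta$ for the recurring value of the parameter. First I would record the two structural inputs: the factorization $K_{N,N}^{(z)}(\bu|\bth)=\prod_{i=1}^N(d_i-z)$ from \eqref{K-resul0}, valid only for the full set, together with $f(\bu,\bth)=\prod_i d_i$ from \eqref{prod-f}; and the relations $d^+d^-=(\vk-\rho_2)/(\tvk-\rho_1)$, $d^++d^-=(\vk+\tvk)/(\tvk-\rho_1)$ coming from \eqref{BEAZ4}. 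In terms of these, \eqref{eta} reads $\eta=z_0/(d^+d^-)$, so that $1/\eta=d^+d^-/z_0$. As a sanity anchor, the two endpoints are immediate or reduce to the quadratic: for $n_A=0$ (using \eqref{K0} and $f(\bu,\varnothing)=1$) both sides equal $K_{N,N}^{(z_0)}(\bu|\bth)$, while for $n_A=N$ the claim collapses, via the factorization and \eqref{prod-f}, to $\prod_i d_i=(\eta/\xi)^N\prod_i(d_i-z_0)(d_i-1/\eta)$, which holds factor by factor precisely because each $d_i$ satisfies \eqref{BEAZ4} and because of the definitions \eqref{eta}, \eqref{xi} of $\eta$ and $\xi$.

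Next I would prove the general $n_A$ case by induction on $n_A$, transferring one inhomogeneity at a time from $\bth_B$ to $\bth_A$. The engine is the summation identity \eqref{sun-Kf} together with the sum-over-partition representations \eqref{K-sumpart}: these are exactly the tools already used in Section~\ref{SS-FTSPOS} to collapse such sums into a single modified Izergin determinant. Concretely, I would isolate a single element $\theta_\ell\in\bth_B$ and expand both $K_{N,N-n_A}^{(z_0)}(\bu|\bth_B)$ and $K_{N,n_A+1}^{(1/\eta)}(\bu|\{\bth_A,\theta_\ell\})$ by peeling $\theta_\ell$ off through \eqref{K-sumpart}, so that the inductive step becomes a one-variable identity in the data attached to $\theta_\ell$. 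Feeding in the on-shell factorization of the ambient full-set determinant and \eqref{BEAZ4}, this one-variable identity should reduce to the same scalar relation between $z_0$, $1/\eta$ and $d^\pm$ that governs the $n_A=N$ endpoint, closing the induction. The already-derived product identity \eqref{finId} is a special instance of this mechanism and can be reused to fix constants.

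The hard part will be the transfer step itself, because---unlike the full set---the subset determinants $K_{N,N-n_A}^{(z_0)}(\bu|\bth_B)$ and $K_{N,n_A}^{(1/\eta)}(\bu|\bth_A)$ do not factorize: in the definition \eqref{defKdef1} the complement $\bth_{B_j}$ (resp.\ $\bth_{A_j}$) is taken inside the subset, so neither is a principal minor of $Z-zI$ and the clean product form is unavailable. The way I would control them is through the eigen-decomposition $Z=S^{-1}\mathcal{D}S$ with $\mathcal{D}=\diag(d_1,\dots,d_N)$ from the proof of Proposition~\ref{P-IdentID}, using it to evaluate the partition sums generated by \eqref{K-sumpart} and to show that transferring $\theta_\ell$ from $\bth_B$ to $\bth_A$ multiplies the right-hand side by $\eta/\xi$, converts the accompanying factor $f(\bu,\theta_\ell)$ into the new row/column of $K_{N,n_A+1}^{(1/\eta)}$, and deletes $\theta_\ell$ from $\bth_B$ in $K_{N,N-n_A}^{(z_0)}$; the scalar seed is the $1\times1$ determinant $K_{N,1}^{(1/\eta)}(\bu|\{\theta_\ell\})=f(\bu,\theta_\ell)-1/\eta$. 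Carrying the bookkeeping of the summation formulas through this transfer---keeping track of the powers of $z_0$, $1/\eta$ and of the cross-products $f(\cdot,\cdot)$---is the main labor; once the single-element transfer lemma is in place, the full statement follows by iterating it $n_A$ times starting from $n_A=0$.
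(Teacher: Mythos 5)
Your endpoint checks are correct: the $n_A=0$ case is immediate from \eqref{K0}, and your factor-by-factor verification at $n_A=N$ is a genuine scalar consequence of \eqref{BEAZ4} and the definitions \eqref{eta}, \eqref{xi} (it amounts to the parameter identity $(d^+-z_0)(d^--z_0)=-\xi\, d^+d^-$ with $z_0=\mu+(\mu-1)/\beta$, which indeed holds). Your skeleton -- induction on $n_A$, moving one inhomogeneity at a time from $\bth_B$ to $\bth_A$ -- is also the skeleton of the paper's proof. But the single-element transfer lemma, which you yourself call ``the main labor'', \emph{is} the theorem, and the mechanism you propose for it would fail. The subset determinants are not principal minors of $Z-zI$ and are not functions of the spectrum of $Z$: unpacking \eqref{defKdef1}, the matrix underlying $K^{(z_0)}_{N,n_B}(\bu|\bth_B)$ is the $B\times B$ block of $Z^T$ multiplied by $\diag\bigl(1/f(\theta^B_j,\bth_A)\bigr)$, so
\begin{gather*}
K^{(z_0)}_{N,n_B}(\bu|\bth_B)=\frac{\det\bigl((Z^T)_{B\times B}-z_0\diag\bigl(f(\theta^B_j,\bth_A)\bigr)\bigr)}{\prod_{\theta_j\in\bth_B}f(\theta_j,\bth_A)},
\end{gather*}
a generalized-resolvent determinant on which the eigendecomposition $Z=S^{-1}\mathcal{D}S$ (whose matrix $S$ is not known explicitly) gives no handle. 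Correspondingly, the ratio identity your induction needs, namely $f(\bu,\theta_\ell)K^{(z_0)}_{N,n_B-1}(\bu|\bth_B\setminus\theta_\ell)\big/K^{(z_0)}_{N,n_B}(\bu|\bth_B)=(\eta/\xi)\,K^{(1/\eta)}_{N,n_A+1}(\bu|\{\bth_A,\theta_\ell\})\big/K^{(1/\eta)}_{N,n_A}(\bu|\bth_A)$, involves all the Bethe roots and all the inhomogeneities; peeling $\theta_\ell$ off via \eqref{K-sumpart} generates cross-factors $f(\theta_\ell,\cdot)$ and $f(\cdot,\theta_\ell)$ inside the partition sums, so there is no two-term recursion and no reduction to the scalar relation that governs the $n_A=N$ endpoint.

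The deeper problem is where the on-shell condition enters. You invoke it only through Proposition~\ref{P-IdentID}, i.e.\ through the spectral statement \eqref{K-resul0}, \eqref{prod-f}, \eqref{BEAZ4}. That information suffices for identities among \emph{full-set} determinants, such as \eqref{finId}, but it is strictly weaker than the Bethe equations and cannot constrain the subset determinants appearing in Theorem~\ref{MT}. The paper's proof uses the Bethe equations in the refined form \eqref{BEj}: it defines the defect $Y(\theta_\ell)$ of the inductive step, applies the matrix $\Omega_{jk}=f(\bth^B_k,\theta^B_k)/h(\theta^B_j,\theta^B_k)$ to the vector $\bigl(Y(\theta_\ell)\bigr)$, and evaluates the two resulting partition sums by Lemma~\ref{summax} (a contour-integral identity, valid off shell) and Lemma~\ref{summin} (where \eqref{BEj} is essential). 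This produces the homogeneous system $\sum_{\ell}\bigl((\mu+(\mu-1)/\beta)\Omega_{j\ell}-\delta_{j\ell}f(\bu,\theta_j)\bigr)Y(\theta_\ell)=0$, whose coefficient matrix has determinant $K^{(\mu+(\mu-1)/\beta)}_{N,n_B}(\bu|\bth_B)\neq0$ generically, forcing $Y\equiv0$. None of this machinery -- the two summation lemmas, the $\Omega$-inversion, the nondegeneracy argument -- appears in, or can be replaced by, your spectral input, so the proposal has a genuine gap at its central step.
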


The proof of Theorem~\ref{MT} is based on two preparatory lemmas.

\begin{Lemma}\label{summax} Let $\bth_A$ and $\bth_B$ be two fixed partitions of the set $\bth$ such that $\#\bth_B=n_B>0$. Let $\theta_j\in\bth_{B}$. Then for arbitrary complex $\gamma$ the following identity holds:
\begin{gather*}
\sum_{\substack{\{\bth_{B_1},\bth_{B_2}\}\vdash\bth_{B}\\ \#\bth_{B_1}=1}}
\frac{f(\bth_{B_2},\bth_{B_1})}{h(\theta_j,\bth_{B_1})}K_{N,n_B-1}^{(\gamma)}(\bu|\bth_{B_2})\nonumber\\
\qquad{} =\frac1{\gamma}\Bigl(f(\bu,\theta_j)K_{N,n_B-1}^{(\gamma)}(\bu|\{\bth_{B}\setminus\theta_j\}) -K_{N,n_B}^{(\gamma)}(\bu|\bth_{B})\Bigr).%\label{sumKb}
\end{gather*}
Here $\bu$ and $\bth$ are arbitrary complex numbers. In particular, we do not require the set $\bu$ to be on-shell.
\end{Lemma}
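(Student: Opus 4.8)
The plan is to prove the identity by residue calculus in an auxiliary spectral parameter, the technique used repeatedly in Appendices~\ref{A-SRF} and~\ref{A-MRE}. The guiding idea is that the summation on the left-hand side and the two terms on the right-hand side all arise as residues of a single rational function of one extra variable $w$, so that the lemma reduces to the statement that the sum of all residues of this function (finite poles together with the point at infinity) vanishes. Concretely, I would adjoin $w$ to the set $\bu$ and study
\[
\Phi(w)=\frac{f(\bth_{B},w+c)}{h(\theta_j,w+c)}\,K^{(\gamma)}_{N+1,n_B}(\{\bu,w\}|\bth_{B}).
\]
Since every factor is rational in $w$, so is $\Phi$, and no use of Bethe equations is made, which respects the hypothesis that $\bu$ be arbitrary.

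First I would analyse the behaviour at infinity. By the large-argument reduction of the modified Izergin determinant one has $K^{(\gamma)}_{N+1,n_B}(\{\bu,w\}|\bth_{B})\to K^{(\gamma)}_{N,n_B}(\bu|\bth_{B})$, while $h(\theta_j,w+c)^{-1}\sim -c/w$ and $f(\bth_{B},w+c)\to1$. Hence $\Phi(w)\sim -c\,K^{(\gamma)}_{N,n_B}(\bu|\bth_{B})/w$, so $\Phi$ decays only like $w^{-1}$ and carries a residue at $w=\infty$ equal to $c\,K^{(\gamma)}_{N,n_B}(\bu|\bth_{B})$. This term is precisely what will produce $-K^{(\gamma)}_{N,n_B}(\bu|\bth_{B})$ in the final formula; note that, unlike the $z^{-2}$ integrands elsewhere in the paper, here the point at infinity genuinely contributes.

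Next I would locate the finite poles. At $w=\theta_k-c$ the factor $f(\theta_k,w+c)$ is singular, the value $K^{(\gamma)}_{N+1,n_B}(\{\bu,\theta_k-c\}|\bth_{B})=-\gamma K^{(\gamma)}_{N,n_B-1}(\bu|\bth_{B}\setminus\theta_k)$ follows from the reduction property \eqref{Kz}, and the remaining factors of $f(\bth_{B},w+c)$ collapse to $f(\bth_{B}\setminus\theta_k,\theta_k)$; the residue is therefore proportional to the summand of the left-hand side with $\bth_{B_1}=\{\theta_k\}$. At each $w=\theta_l$ with $l\neq j$ the simple pole of the Izergin determinant given by \eqref{resK} is exactly cancelled by the simple zero of $f(\theta_l,w+c)$, so those points contribute nothing. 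The only surviving finite pole is then $w=\theta_j$, coming from $h(\theta_j,w+c)^{-1}$.

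The one genuinely delicate point — the main obstacle — is the residue at $w=\theta_j$, where a zero of the weight $f(\bth_{B},w+c)$ collides with a pole of the Izergin determinant, an indeterminate $0\cdot\infty$. Here I would expand $f(\theta_j,w+c)\approx (w-\theta_j)/c$ and use \eqref{resK} to isolate the singular part $g(w,\theta_j)f(\bu,w)f(\theta_j,\bth_{B}\setminus\theta_j)K^{(\gamma)}_{N,n_B-1}(\bu|\bth_{B}\setminus\theta_j)$ of the determinant; crucially, the shift identity $f(u,v+c)=1/f(v,u)$ from \eqref{gfh-prop} turns $f(\bth_{B}\setminus\theta_j,w+c)|_{w=\theta_j}$ into $1/f(\theta_j,\bth_{B}\setminus\theta_j)$, which cancels the factor $f(\theta_j,\bth_{B}\setminus\theta_j)$ generated by the residue and leaves the clean finite value $f(\bu,\theta_j)K^{(\gamma)}_{N,n_B-1}(\bu|\bth_{B}\setminus\theta_j)$. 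Multiplying by the residue $-c$ of $h(\theta_j,w+c)^{-1}$ reproduces the first term on the right-hand side. Equating the sum of the residue at infinity and the residues at $w=\theta_k-c$ and $w=\theta_j$ to zero and dividing by $c\gamma$ then yields the stated identity.
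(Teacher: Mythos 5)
Your proposal is correct and takes essentially the same route as the paper: the paper integrates exactly the same rational function (in the shifted variable $z=w+c$) over a large contour, evaluates the integral from its $z^{-1}$ behaviour at infinity, and matches this against the finite residues at $z=\theta_k$ (via \eqref{Kz}, giving $\gamma$ times the left-hand side) and at $z=\theta_j+c$ (via \eqref{resK} and $f(u,v+c)=1/f(v,u)$), which is precisely your sum-of-all-residues computation, including the observation that the $\theta_l+c$ poles for $l\neq j$ are cancelled by zeros of the $f$-prefactor. The only difference is presentational: you phrase the infinity contribution as a residue at $w=\infty$, while the paper phrases it as the nonzero value of the contour integral.
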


\begin{Lemma}\label{summin}Let $\bth_A$ and $\bth_B$ be two fixed partitions of the set $\bth$ such that $\#\bth_A=n_A<N$. Let $\theta_j\in\bth_{B}$ and $\bu$ be on-shell. Then
\begin{gather}
\eta\sum_{\substack{\{\bth_{B_1},\bth_{B_2}\}\vdash\bth_{B}\\ \#\bth_{B_1}=1}}
\frac{f(\bth_{B_2},\bth_{B_1})}{h(\theta_s,\bth_{B_1})}\frac{K_{N,n_A+1}^{(1/\eta)}(\bu|\{\bth_{A},\bth_{B_1}\})}{f(\bu,\bth_{B_1})}\nonumber\\
\qquad =\frac\beta{\mu\beta+\mu-1}\Bigl(\eta K_{N,n_A+1}^{(1/\eta)}(\bu|\{\bth_{A},\theta_{s}\})-\xi K_{N,n_A}^{(1/\eta)}(\bu|\bth_{A})\Bigr).\label{sumKs}
\end{gather}
\end{Lemma}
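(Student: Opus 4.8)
The plan is to evaluate the single-element-partition sum on the left-hand side by the contour-integral technique used throughout the paper (as in Appendix~\ref{A-SRF} and the proof of Proposition~\ref{P-AlfBE}), and to feed in the on-shell condition through the Bethe equations~\eqref{BEj}. Writing the single element as $\bth_{B_1}=\{\theta\}$ and $\bth_{B_2}=\bth_B\setminus\theta$, I would introduce the auxiliary function
\begin{gather*}
\Phi(z)=\frac{f(\bth_B,z)}{h(\theta_s,z)}\,\frac{K_{N,n_A+1}^{(1/\eta)}(\bu|\{\bth_A,z\})}{f(\bu,z)}.
\end{gather*}
Since $\Res_{z=\theta}f(\theta,z)=-c$ and the modified Izergin determinant is regular when two of its second-set arguments collide, the residues of $\Phi$ at $z=\theta\in\bth_B$ reproduce the left-hand side up to the overall factor $-c/\eta$. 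The zeros of $1/f(\bu,z)$ at $z=u_j$ cancel the poles of $K_{N,n_A+1}^{(1/\eta)}(\bu|\{\bth_A,z\})$ there, so the only remaining finite poles of $\Phi$ sit at $z=\theta_s+c$ and at $z=u_j+c$.

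I would then apply the residue theorem to $\tfrac{1}{2\pi{\rm i}c}\oint_{|z|=R\to\infty}\Phi(z)\,dz$. The integrand decays only as $z^{-1}$, so the residue at infinity is nonzero; using the behaviour of the modified Izergin determinant when a second-set argument tends to infinity (which produces the factor $1-1/\eta$) this residue contributes the term $(\eta-1)K_{N,n_A}^{(1/\eta)}(\bu|\bth_A)$. The pole at $z=\theta_s+c$ gives nothing, because $f(\bth_B,\theta_s+c)$ contains the vanishing factor $f(\theta_s,\theta_s+c)=0$ by~\eqref{gfh-prop}. Finally, at each pole $z=u_j+c$ the reduction property~\eqref{Kz} collapses $K_{N,n_A+1}^{(1/\eta)}(\bu|\{\bth_A,u_j+c\})$ to $-\tfrac1\eta K_{N-1,n_A}^{(1/\eta)}(\bu_j|\bth_A)$. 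Collecting these contributions reduces the lemma to the identity
\begin{gather*}
\sum_{j=1}^N\frac{g(u_j,\theta_s)f(u_j,\bu_j)}{f(u_j,\bth_B)}K_{N-1,n_A}^{(1/\eta)}(\bu_j|\bth_A)+(\eta-1)K_{N,n_A}^{(1/\eta)}(\bu|\bth_A)\\
=\frac1\alpha K_{N,n_A+1}^{(1/\eta)}(\bu|\{\bth_A,\theta_s\})-\frac{\xi}{\alpha\eta}K_{N,n_A}^{(1/\eta)}(\bu|\bth_A),
\end{gather*}
where I have used $\tfrac{\beta}{\mu\beta+\mu-1}=\tfrac1{\alpha\eta}$, which follows from~\eqref{eta} and~\eqref{alpha0}.

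The residual sum is precisely where on-shellness enters; everything so far is an off-shell identity. To evaluate it I would set up a \emph{second} contour integral, now in a $\bu$-type variable $w$, whose integrand carries $g(w,\theta_s)/f(w,\bth_B)$ together with $K_{N,n_A+1}^{(1/\eta)}(\bu|\{\bth_A,w\})$. By the residue property~\eqref{resK} of the modified Izergin determinant, the poles at $w=u_j$ generate exactly the minors $K_{N-1,n_A}^{(1/\eta)}(\bu_j|\bth_A)$ occurring in the residual sum, the residue at infinity reconstitutes $K_{N,n_A+1}^{(1/\eta)}(\bu|\{\bth_A,\theta_s\})$, and the poles at $w=\theta_k-c$ produce the combination $\sum_k f(\bth_k,\theta_k)/\bigl(h(\theta_s,\theta_k)f(\bu,\theta_k)\bigr)$, which on-shell is fixed by the Bethe equations~\eqref{BEj}. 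As a consistency check, for $n_A=0$ the determinants collapse via~\eqref{K0} to $K_{N,1}^{(1/\eta)}(\bu|\theta)=f(\bu,\theta)-1/\eta$ and the bare summation $\sum_\theta f(\bth_\theta,\theta)/h(\theta_s,\theta)=1$, and the matching can be verified by hand.

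The main obstacle is this second step: carefully tracking the many rational prefactors (in particular reconciling $f(\bu_j,u_j)$ against $f(u_j,\bu_j)$ and the $\bth_A$-dependent factors thrown off by~\eqref{resK}) and confirming that the on-shell Bethe equations reconstruct $K_{N,n_A+1}^{(1/\eta)}(\bu|\{\bth_A,\theta_s\})$ with exactly the coefficient $1/\alpha$. This is the only place where the finite-dimensional, on-shell structure of the model is genuinely used, which is consistent with the fact that Lemma~\ref{summax}, the off-shell analogue, carries no such requirement.
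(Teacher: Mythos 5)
Your first step is correct, and I checked it in detail: the function $\Phi(z)$ has finite poles only at $z=\theta_k\in\bth_B$ and $z=u_j+c$ (the would-be pole at $z=\theta_s+c$ is indeed killed by the zero of $f(\theta_s,z)$, and the poles of the determinant at $z=u_j$ are cancelled by $1/f(\bu,z)$), the reduction \eqref{Kz} applies at $z=u_j+c$ because $u_j$ lies in the first set, the residue at infinity produces $(1-1/\eta)K_{N,n_A}^{(1/\eta)}(\bu|\bth_A)$, and the resulting identity, together with $\beta/(\mu\beta+\mu-1)=1/(\alpha\eta)$, is exactly equivalent to \eqref{sumKs}. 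This off-shell reduction is genuinely different from the paper's route, which never forms such an integral but instead expands $K_{N,n_A+1}^{(1/\eta)}(\bu|\{\bth_A,\theta_l\})$ through the partition formula \eqref{K-sumpart} and manipulates partition sums throughout.

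The gap is your second step, which is where all of the on-shell content of the lemma sits, and the specific claims you make there fail. (i) For any integrand of the form $R(w)\,K_{N,n_A+1}^{(1/\eta)}(\bu|\{\bth_A,w\})$ with rational $R$, the residue at infinity involves the $w\to\infty$ degeneration of the determinant, which is $(1-1/\eta)K_{N,n_A}^{(1/\eta)}(\bu|\bth_A)$; it cannot ``reconstitute'' $K_{N,n_A+1}^{(1/\eta)}(\bu|\{\bth_A,\theta_s\})$. That term can only come from a pole at $w=\theta_s$, but your weight has no such pole: since $\theta_s\in\bth_B$, one has $g(w,\theta_s)/f(w,\theta_s)=1/h(w,\theta_s)$, so $g(w,\theta_s)/f(w,\bth_B)$ is regular at $w=\theta_s$. (ii) The poles at $w=\theta_k-c$ occur only for $\theta_k\in\bth_B$ (while \eqref{BEj} requires the sum over \emph{all} of $\bth$), and their residues carry the whole determinants $K_{N,n_A+1}^{(1/\eta)}(\bu|\{\bth_A,\theta_k-c\})$, to which \eqref{Kz} does not apply (the first set would have to contain $\theta_k-2c$); they do not collapse to $f(\bth_k,\theta_k)/\bigl(h(\theta_s,\theta_k)f(\bu,\theta_k)\bigr)$. (iii) Most seriously, reconciling the weight $f(\bu_j,u_j)f(u_j,\bth_A)$ produced by \eqref{resK} with the weight $f(u_j,\bu_j)$ in your residual sum is done by the \emph{inhomogeneous} Bethe equations \eqref{BE00}, and their inhomogeneous term leaves behind the extra sum
\begin{gather*}
\frac{\rho_1+\rho_2}{\vk-\rho_2}\sum_{j=1}^N g(u_j,\theta_s)\,
\frac{g(u_j,\bu_j)\,h(u_j,\bth_A)}{g(u_j,\bth_B)}\,K_{N-1,n_A}^{(1/\eta)}(\bu_j|\bth_A),
\end{gather*}
which your contour integral never produces and which must itself be shown, on shell, to be proportional to $K_{N,n_A}^{(1/\eta)}(\bu|\bth_A)$ with exactly the right constant. (The piece your method can handle, $\sum_j g(u_j,\theta_s)f(u_j,\bth_A)f(\bu_j,u_j)K_{N-1,n_A}^{(1/\eta)}(\bu_j|\bth_A)$, is computed off shell by integrating $g(w,\theta_s)K_{N,n_A+1}^{(1/\eta)}(\bu|\{\bth_A,w\})$ with \emph{no} factor $1/f(w,\bth_B)$, and equals $K_{N,n_A+1}^{(1/\eta)}(\bu|\{\bth_A,\theta_s\})-(1-1/\eta)K_{N,n_A}^{(1/\eta)}(\bu|\bth_A)$.) This leftover sum is of essentially the same difficulty as Lemma~\ref{summin} itself; it is exactly the kind of term the paper's proof disposes of by expanding everything into partitions and engineering the full-$\bth$ sum required by \eqref{BEj} through the add-and-subtract step \eqref{sumparB}. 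So your step 1 is a valid and elegant reduction, but the proposed proof has a hole precisely at its core, and your own closing paragraph defers rather than resolves it.
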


The proofs of these lemmas respectively are given in Appendices~\ref{A-PL2} and~\ref{A-PL1}.

\begin{proof}[Proof of Theorem~\ref{MT}] The proof partly uses induction over $n_A$. For $n_A=0$ the statement of the theorem is obvious. Assume that it is valid for some $n_A\ge 0$:
\begin{gather*}%\label{MTn1}
\xi^{n_A}f(\bu,\bth_{A})K^{(\mu+(\mu-1)/\beta)}_{N,n_B}(\bu|\bth_{B})=\eta^{n_A} K^{(\mu+(\mu-1)/\beta)}_{N,N}(\bu|\bth) K^{(1/\eta)}_{N,n_A}(\bu|\bth_{A}),
\end{gather*}
Here $\bth_{A}$ and $\bth_{B}$ are arbitrary subsets of $\bth$ with cardinalities $\#\bth_{A}=n_A$ and $\#\bth_{B}=n_B=N-n_A$ respectively. We assume that $K^{(\mu+(\mu-1)/\beta)}_{N,n_B}(\bu|\bth_{B})\ne 0$ for generic inhomogeneities $\bth$ and generic twist parameters.

Now we make new subsets. We fix some $\theta_\ell\in\bth_B$ and consider subsets $\{\bth_{A},\theta_\ell\}$ and $\{\bth_{B}\setminus\theta_\ell\}$. Let us introduce $Y(\theta_\ell)$ as
\begin{gather}
Y(\theta_\ell) = \xi^{n_A+1}f(\bu,\bth_{A})K^{(\mu+(\mu-1)/\beta)}_{N,n_B-1}(\bu|\{\bth_{B}\setminus\theta_\ell\})\nonumber\\
\hphantom{Y(\theta_\ell) =}{} -\eta^{n_A+1}
K^{(\mu+(\mu-1)/\beta)}_{N,N}(\bu|\bth)\frac{K^{(1/\eta)}_{N,n_A}(\bu|\{\bth_{A},\theta_\ell\})}{f(\bu,\theta_\ell)}.\label{MTn2}
\end{gather}
Obviously, if we prove that $Y(\theta_\ell)=0$, then we prove Theorem~\ref{MT} for $n=n_A+1$.

Let $\Omega$ be an $n_B\times n_B$ matrix with the entries
\begin{gather*}%\label{Omjk}
\Omega_{jk}=\frac{f(\bth^B_{k},\theta^B_k)}{h(\theta^B_j,\theta^B_k)}.
\end{gather*}
Here we temporary used a superscript for the subset $\bth_B=\bth^B$. Respectively, $\theta^B_k$ is $k$-th element of the subset $\bth_B$. We also used $\bth^B_{k}= \bth^B\setminus\theta^B_k$. It can be easily checked (see Appendix~\ref{A-MRE}) that
\begin{gather*}%\label{iOmjk}
\big(\Omega^{-1}\big)_{jk}=\frac{f(\theta^B_k,\bth^B_k)}{h(\theta^B_k,\theta_j)}.
\end{gather*}

Let us compute the action of $\Omega$ on the vector $Y$ \eqref{MTn2}. Obviously
\begin{gather*}%\label{ActOMY}
\sum_{\ell=1}^{n_B}\Omega_{j\ell}Y(\theta_\ell)=
\sum_{\substack{\{\bth_{B_1},\bth_{B_2}\}\vdash\bth_{B}\\ \#\bth_{B_1}=1}}
\frac{f(\bth_{B_2},\bth_{B_1})}{h(\theta_j,\bth_{B_1})}Y(\bth_{B_1}).
\end{gather*}
Thus, we can use the results of Lemmas~\ref{summax} and~\ref{summin}. Simple straightforward calculation gives
\begin{gather*}
\sum_{\ell=1}^{n_B}\Omega_{j\ell}Y(\theta_\ell)=\frac{\beta f(\bu,\theta_j)}{\mu\beta+\mu-1}Y(\theta_j)-\frac{\beta \xi}{\mu\beta+\mu-1}\nonumber\\
\qquad{}\times \Bigl\{\xi^{n_A}f(\bu,\bth_{A})K^{(\mu+(\mu-1)/\beta)}_{N,n_B}(\bu|\bth_{B})-
\eta^{n_A} K^{(\mu+(\mu-1)/\beta)}_{N,N}(\bu|\bth) K^{(1/\eta)}_{N,n_A}(\bu|\bth_{A})\Bigr\}.%\label{actOm0}
\end{gather*}
The term in the second line vanishes due to the induction assumption. Hence,
\begin{gather*}%\label{actOm}
\sum_{\ell=1}^{n_B} \Omega_{j\ell}Y(\theta_\ell)=\frac{\beta f(\bu,\theta_j)}{\mu\beta+\mu-1}Y(\theta_j).
\end{gather*}
or equivalently,
\begin{gather*}%\label{actOm1}
\sum_{\ell=1}^{n_B} \left(\left(\mu+\frac{\mu-1}\beta\right)\Omega_{j\ell}-\delta_{j\ell}f(\bu,\theta_j)\right)Y(\theta_\ell)=0.
\end{gather*}
Multiplying this equation from the left by $\big(\Omega^{-1}\big)_{ij}$ and taking the sum over $\theta_j\in\bth_B$ we arrive at
\begin{gather*}%\label{actOm2}
\sum_{\ell=1}^{n_B}\left(f(\bu,\theta_i)\big(\Omega^{-1}\big)_{i\ell}-\left(\mu+\frac{\mu-1}\beta\right)\delta_{i\ell}\right)Y(\theta_\ell)=0.
\end{gather*}
It is easy to see that
\begin{gather*}%\label{detsys}
\det\left(f(\bu,\theta_i)\big(\Omega^{-1}\big)_{i\ell}-\left(\mu+\frac{\mu-1}\beta\right)\delta_{i\ell}\right) =K_{N,n_B}^{(\mu+(\mu-1)/\beta)}(\bu|\bth_B).
\end{gather*}
Due to the induction assumption $K_{N,n_B}^{(\mu+(\mu-1)/\beta)}(\bu|\bth_B)\ne 0$. Then $Y(\theta_\ell)=0$, and hence, Theorem~\ref{MT} is proved for $n=n_A+1$.
\end{proof}

\section{Proof of preparatory lemmas}\label{A-PPL}

\subsection{Proof of Lemma~\ref{summax}}\label{A-PL2}
Let
\begin{gather*}%\label{F-sumKb1}
F=\sum_{\substack{\{\bth_{B_1},\bth_{B_2}\}\vdash\bth_{B}\\ \#\bth_{B_1}=1}}
\frac{f(\bth_{B_2},\bth_{B_1})}{h(\theta_j,\bth_{B_1})}K_{N,n_B-1}^{(\gamma)}(\bu|\bth_{B_2}).
\end{gather*}
Consider an auxiliary contour integral
\begin{gather*}%\label{F-int}
I=\frac1{2\pi{\rm i}c}\oint_{|z|=R\to\infty}\frac{f(\bth_{B},z)}{h(\theta_j,z)}K_{N+1,n_B}^{(\gamma)}(\{\bu,z-c\}|\bth_{B}).
\end{gather*}
On the one hand, this integral is equal to the residue at infinity:
\begin{gather}\label{Il}
I=-K_{N,N-1}^{(\gamma)}(\bu|\bth_{B}).
\end{gather}
On the other hand, this integral is equal to the sum of residues within the contour. The sum in the points $z=\theta_k\in\bth_{B}$
gives $\gamma F$ (due to \eqref{Kz}). One more pole occurs at $z=\theta_j+c$. Using~\eqref{resK} we find
\begin{gather}\label{Ir}
I=\gamma F- f(\bu,\theta_j)K_{N,n_B-1}^{(\gamma)}(\bu|\{\bth_{B}\setminus\theta_j\}).
\end{gather}
Comparing \eqref{Ir} and \eqref{Il} we arrive at the statement of the lemma.

\subsection{Proof of Lemma~\ref{summin}\label{A-PL1}}

Let $\bth_{B_1}=\theta_{l}$ in \eqref{sumKs}. Consider $K_{N,n_A+1}^{(1/\eta)}(\bu|\{\bth_{A},\theta_{l}\})$ as a function of $\theta_{l}$. Using the first equation~\eqref{K-sumpart} we have
\begin{gather*}%\label{K-pres}
K_{N,n_A+1}^{(1/\eta)}(\bu|\{\bth_{A},\theta_{l}\})=\sum_{\{\bth_{\so},\bth_{\st}\}\vdash\{\bth_{A},\theta_{l}\}}
\left(-\frac1\eta\right)^{\#\bth_{\st}} f(\bu,\bth_{\so})f(\bth_{\so},\bth_{\st}).
\end{gather*}
Clearly, either $\theta_l\in\bth_{\so}$ or $\theta_l\in\bth_{\st}$. In the first case we set $\bth_{\so}=\{\theta_l,\bth_{A_1}\}$ and $\bth_{\st}=\bth_{A_2}$. In the second case we set $\bth_{\so}=\bth_{A_1}$ and $\bth_{\st}=\{\theta_l,\bth_{A_2}\}$. Thus, we obtain
\begin{gather*}
\frac{K_{N,n_A+1}^{(1/\eta)}(\bu|\{\bth_{A},\theta_{l}\})}{f(\bu,\theta_l)}= \sum_{\{\bth_{A_1},\bth_{A_2}\}\vdash\bth_{A}}
\left(-\frac1\eta\right)^{\#\bth_{A_2}} f(\bu,\bth_{A_1})f(\bth_{A_1},\bth_{A_2})\nonumber\\
\hphantom{\frac{K_{N,n_A+1}^{(1/\eta)}(\bu|\{\bth_{A},\theta_{l}\})}{f(\bu,\theta_l)}=}{}\times \left(f(\theta_l,\bth_{A_2})
-\frac1\eta\frac{f(\bth_{A_1},\theta_l)}{f(\bu,\theta_l)}\right).%\label{K-pres1}
\end{gather*}
Substituting this to the l.h.s.\ of~\eqref{sumKs} we have
\begin{gather*}%\label{sumKslem}
\eta\sum_{\substack{\{\bth_{B_1},\bth_{B_2}\}\vdash\bth_{B}\\ \#\bth_{B_1}=1}}
\frac{f(\bth_{B_2},\bth_{B_1})}{h(\theta_s,\bth_{B_1})}
\frac{K_{N,n_A+1}^{(1/\eta)}(\bu|\{\bth_{A},\bth_{B_1}\})}{f(\bu,\bth_{B_1})}=\Lambda_1+\Lambda_2,
\end{gather*}
where
\begin{gather}
\Lambda_1=\eta\sum_{\{\bth_{A_1},\bth_{A_2}\}\vdash\bth_{A}}
\left(-\frac1\eta\right)^{\#\bth_{A_2}} f(\bu,\bth_{A_1})f(\bth_{A_1},\bth_{A_2})\nonumber\\
\hphantom{\Lambda_1=}{}\times \sum_{\substack{\{\bth_{B_1},\bth_{B_2}\}\vdash\bth_{B}\\ \#\bth_{B_1}=1}}
\frac{f(\bth_{B_2},\bth_{B_1})f(\bth_{B_1},\bth_{A_2})}{h(\theta_s,\bth_{B_1})},\label{DefL1}
\end{gather}
and
\begin{gather}
\Lambda_2=-\sum_{\{\bth_{A_1},\bth_{A_2}\}\vdash\bth_{A}}
\left(-\frac1\eta\right)^{\#\bth_{A_2}} f(\bu,\bth_{A_1})f(\bth_{A_1},\bth_{A_2})\nonumber\\
\hphantom{\Lambda_2=}{}\times
\sum_{\substack{\{\bth_{B_1},\bth_{B_2}\}\vdash\bth_{B}\\ \#\bth_{B_1}=1}}
\frac{f(\bth_{B_2},\bth_{B_1})f(\bth_{A_1},\bth_{B_1})}{h(\theta_s,\bth_{B_1})f(\bu,\bth_{B_1})}.\label{DefL2}
\end{gather}

\subsubsection{First contribution}

Consider the sum $\Lambda_1$ \eqref{DefL1}. Let
\begin{gather*}%\label{G}
G=\sum_{\substack{\{\bth_{B_1},\bth_{B_2}\}\vdash\bth_{B}\\ \#\bth_{B_1}=1}} \frac{f(\bth_{B_2},\bth_{B_1})f(\bth_{B_1},\bth_{A_2})}{h(\theta_s,\bth_{B_1})}.
\end{gather*}
Consider an auxiliary contour integral
\begin{gather*}%\label{G-int}
J=\frac1{2\pi{\rm i}c}\oint_{|z|=R\to\infty}\frac{f(\bth_{B},z)}{h(\theta_s,z)}f(z,\bth_{A_2}).
\end{gather*}
Taking the residue at infinity we obtain $J=-1$. On the other hand, this integral is equal to the sum of residues within the contour. The sum in the points $z=\theta_k\in\bth_{B}$ gives $-G$. One more series of poles occurs at $z=\theta_k\in\bth_{A_2}$. Thus, we find
\begin{gather}\label{Jr}
G=1+\sum_{\substack{\{\bth_{A_3},\bth_{A'_2}\}\vdash\bth_{A_2}\\ \#\bth_{A_3}=1}} \frac{f(\bth_{B},\bth_{A_3})}{h(\theta_s,\bth_{A_3})}f(\bth_{A_3},\bth_{A'_2}).
\end{gather}
Substituting this into \eqref{DefL1} we arrive at
\begin{gather}\label{L1-res}
\Lambda_1=\eta K_{N,n_A}^{(1/\eta)}(\bu|\bth_{A})-\tilde \Lambda_1,
\end{gather}
where
\begin{gather}
\tilde \Lambda_1=\sum_{\substack{\{\bth_{A_1},\bth_{A_2},\bth_{A_3}\}\vdash\bth_{A} \\ \#\bth_{A_3}=1}}
\left(-\frac1\eta\right)^{\#\bth_{A_2}} \frac{f(\bu,\bth_{A_1})}{h(\theta_s,\bth_{A_3})}f(\bth_{A_1},\bth_{A_2})f(\bth_{A_1},\bth_{A_3})\nonumber\\
\hphantom{\tilde \Lambda_1=}{}\times f(\bth_{A_3},\bth_{A_2})f(\bth_{B},\bth_{A_3}).\label{tL1}
\end{gather}
Here, when substituting \eqref{Jr} into \eqref{DefL1} we first set $\bth_{A_2}=\{\bth_{A_3},\bth_{A'_2}\}$ and then
relabeled $\bth_{A'_2}\to\bth_{A_2}$.

\subsubsection{Second contribution}

In order to compute the contribution $\Lambda_2$ we should transform the sum over partitions of $\bth_B$ in~\eqref{DefL2}. Let $\{\bth_B,\bth_{A_1}\}=\bth_C$. Then
\begin{gather}
\sum_{\substack{\{\bth_{B_1},\bth_{B_2}\}\vdash\bth_{B}\\ \#\bth_{B_1}=1}}
\frac{f(\bth_{B_2},\bth_{B_1})f(\bth_{A_1},\bth_{B_1})}{h(\theta_s,\bth_{B_1})f(\bu,\bth_{B_1})}=
\sum_{\substack{\{\bth_{C_1},\bth_{C_2}\}\vdash\bth_{C}\\ \#\bth_{C_1}=1}}
\frac{f(\bth_{C_2},\bth_{C_1})}{h(\theta_s,\bth_{C_1})f(\bu,\bth_{C_1})}\nonumber\\
\hphantom{\sum_{\substack{\{\bth_{B_1},\bth_{B_2}\}\vdash\bth_{B}\\ \#\bth_{B_1}=1}}
\frac{f(\bth_{B_2},\bth_{B_1})f(\bth_{A_1},\bth_{B_1})}{h(\theta_s,\bth_{B_1})f(\bu,\bth_{B_1})}=}{}
-\sum_{\substack{\{\bth_{A'_1},\bth_{A_3}\}\vdash\bth_{A_1}\\ \#\bth_{A_3}=1}}
\frac{f(\bth_{B},\bth_{A_3})f(\bth_{A'_1},\bth_{A_3})}{h(\theta_s,\bth_{A_3})f(\bu,\bth_{A_3})}.\label{sumparB}
\end{gather}
In other words, instead of taking the sum over the subset $\bth_B$, we take the sum over the subset $\bth_C=\{\bth_B,\bth_{A_1}\}$ and then subtract the sum over the subset $\bth_{A_1}$. In this case we can compute the sum over the subset $\bth_C=\{\bth_B,\bth_{A_1}\}$ via Bethe equations~\eqref{BEj}.

However, we first consider contribution coming from the second term in the r.h.s.\ of~\eqref{sumparB}. We have
\begin{gather*}
\Lambda_2^{(2)}=\sum_{\{\bth_{A_1},\bth_{A_2}\}\vdash\bth_{A}} \left(-\frac1\eta\right)^{\#\bth_{A_2}} f(\bu,\bth_{A_1})f(\bth_{A_1},\bth_{A_2})
\sum_{\substack{\{\bth_{A'_1},\bth_{A_3}\}\vdash\bth_{A_1}\\ \#\bth_{A_3}=1}}
\frac{f(\bth_{B},\bth_{A_3})f(\bth_{A'_1},\bth_{A_3})}{h(\theta_s,\bth_{A_3})f(\bu,\bth_{A_3})}.%\label{L22}
\end{gather*}
Substituting here $\bth_{A_1}=\{\bth_{A'_1},\bth_{A_3}\}$ and relabeling $\bth_{A'_1}\to \bth_{A_1}$ we find
\begin{gather}
\Lambda_2^{(2)}=\sum_{\substack{\{\bth_{A_1},\bth_{A_2},\bth_{A_3}\}\vdash\bth_{A}\\ \#\bth_{A_3}=1}}
\left(-\frac1\eta\right)^{\#\bth_{A_2}} \frac{f(\bu,\bth_{A_1})}{h(\theta_s,\bth_{A_3})}f(\bth_{A_1},\bth_{A_2})f(\bth_{A_3},\bth_{A_2})\nonumber\\
\hphantom{\Lambda_2^{(2)}=}{}\times f(\bth_{A_1},\bth_{A_3})f(\bth_{B},\bth_{A_3}).\label{L22-1}
\end{gather}
We see that this sum is equal to the term $\tilde \Lambda_1$ \eqref{tL1}. Thus, common contribution of these terms vanishes.

Consider now contribution coming from the first term in the r.h.s.\ of~\eqref{sumparB}. We have
\begin{gather}
\Lambda_2^{(1)}=-\sum_{\{\bth_{A_1},\bth_{A_2}\}\vdash\bth_{A}}
\left(-\frac1\eta\right)^{\#\bth_{A_2}} f(\bu,\bth_{A_1})f(\bth_{A_1},\bth_{A_2})\nonumber\\
\hphantom{\Lambda_2^{(1)}=}{}\times
\sum_{\substack{\{\bth_{C_1},\bth_{C_2}\}\vdash\bth_{C}\\ \#\bth_{C_1}=1}}
\frac{f(\bth_{C_2},\bth_{C_1})}{h(\theta_s,\bth_{C_1})f(\bu,\bth_{C_1})}.\label{L21}
\end{gather}
Using Bethe equations \eqref{BEj} we find
\begin{gather*}
\sum_{\substack{\{\bth_{C_1},\bth_{C_2}\}\vdash\bth_{C}\\ \#\bth_{C_1}=1}}
\frac{f(\bth_{C_2},\bth_{C_1})}{h(\theta_s,\bth_{C_1})f(\bu,\bth_{C_1})}
=\frac{\tvk+\vk}{\vk-\rho_2}-\frac1\alpha f(\bu,\theta_s)f(\theta_s,\bth_{A_2})\nonumber\\
\hphantom{\sum_{\substack{\{\bth_{C_1},\bth_{C_2}\}\vdash\bth_{C}\\ \#\bth_{C_1}=1}}
\frac{f(\bth_{C_2},\bth_{C_1})}{h(\theta_s,\bth_{C_1})f(\bu,\bth_{C_1})}=}{}
+\frac1\alpha \sum_{\substack{\{\bth_{A_3},\bth_{A'_2}\}\vdash\bth_{A_2}\\ \#\bth_{A_3}=1}}
g(\theta_s,\bth_{A_3})f(\bu,\bth_{A_3})f(\bth_{A_3},\bth_{A'_2}).%\label{BEC}
\end{gather*}
Substituting this into \eqref{L21} we arrive at
\begin{gather}\label{L21-1}
\Lambda_2^{(1)}=-\frac{\tvk+\vk}{\vk-\rho_2}K_{N,n_A}^{(1/\eta)}(\bu|\bth_A)+ \frac1\alpha(M_1-M_2),
\end{gather}
where
\begin{gather*}%\label{M1}
M_1=\sum_{\substack{\{\bth_{A_1},\bth_{A_2}\}\vdash\{\bth_{A},\theta_s\}\\ \theta_s \in\bth_{A_1}}}
\left(-\frac1\eta\right)^{\#\bth_{A_2}} f(\bu,\bth_{A_1})f(\bth_{A_1},\bth_{A_2}),
\end{gather*}
and
\begin{gather}
M_2=\sum_{\{\bth_{A_1},\bth_{A_2}\}\vdash\bth_{A}}
\left(-\frac1\eta\right)^{\#\bth_{A_2}} f(\bu,\bth_{A_1})f(\bth_{A_1},\bth_{A_2})\nonumber\\
\hphantom{M_2=}{}\times
\sum_{\substack{\{\bth_{A_3},\bth_{A'_2}\}\vdash\bth_{A_2}\\ \#\bth_{A_3}=1}}
g(\theta_s,\bth_{A_3})f(\bu,\bth_{A_3})f(\bth_{A_3},\bth_{A'_2}).\label{M2}
\end{gather}

Presenting $M_1$ as
\begin{gather*}
M_1=\sum_{\{\bth_{A_1},\bth_{A_2}\}\vdash\{\bth_{A},\theta_s\}}\left(-\frac1\eta\right)^{\#\bth_{A_2}} f(\bu,\bth_{A_1})f(\bth_{A_1},\bth_{A_2})\nonumber\\
\hphantom{M_1=}{}-\sum_{\substack{\{\bth_{A_1},\bth_{A_2}\}\vdash\{\bth_{A},\theta_s\}\\ \theta_s \in\bth_{A_2}}}
\left(-\frac1\eta\right)^{\#\bth_{A_2}} f(\bu,\bth_{A_1})f(\bth_{A_1},\bth_{A_2}),%\label{M11}
\end{gather*}
we obtain
\begin{gather*}%\label{M12}
M_1=K_{N,n_A+1}^{(1/\eta)}(\bu|\{\bth_A,\theta_s\})
-\sum_{\substack{\{\bth_{A_1},\bth_{A_2}\}\vdash\{\bth_{A},\theta_s\}\\ \theta_s \in\bth_{A_2}}}
\left(-\frac1\eta\right)^{\#\bth_{A_2}} f(\bu,\bth_{A_1})f(\bth_{A_1},\bth_{A_2}).
\end{gather*}
Setting here $\bth_{A_2}=\{\bth_{A'_2},\theta_s\}$ and relabeling $\bth_{A'_2}\to \bth_{A_2}$ we arrive at
\begin{gather*}
M_1=K_{N,n_A+1}^{(1/\eta)}(\bu|\{\bth_A,\theta_s\}) -\sum_{\{\bth_{A_1},\bth_{A_2}\}\vdash\bth_{A}}
\left(-\frac1\eta\right)^{\#\bth_{A_2}+1} f(\bu,\bth_{A_1})f(\bth_{A_1},\bth_{A_2})f(\bth_{A_1},\theta_s).%\label{M13}
\end{gather*}

Consider now contribution $M_2$. Setting $\bth_{A_2}=\{\bth_{A'_2},\bth_{A_3}\}$ in \eqref{M2} and relabeling $\bth_{A'_2}\to \bth_{A_2}$ we arrive at
\begin{gather}
M_2= \sum_{\substack{ \{\bth_{A_1},\bth_{A_2},\bth_{A_3}\}\vdash\bth_{A}\\ \#\bth_{A_3}=1} }
\left(-\frac1\eta\right)^{\#\bth_{A_2+1}} f(\bu,\bth_{A_1})f(\bth_{A_1},\bth_{A_2})f(\bth_{A_1},\bth_{A_3})
g(\theta_s,\bth_{A_3})\nonumber\\
\hphantom{M_2=}{}\times f(\bu,\bth_{A_3})f(\bth_{A_3},\bth_{A_2}).\label{M21}
\end{gather}
Let $\{\bth_{A_1},\bth_{A_3}\}=\bth_{A_0}$. Then \eqref{M21} takes the form
\begin{gather}
M_2=\sum_{\{\bth_{A_0},\bth_{A_2}\} \vdash\bth_{A} }
\left(-\frac1\eta\right)^{\#\bth_{A_2+1}} f(\bu,\bth_{A_0})f(\bth_{A_0},\bth_{A_2})\nonumber\\
\hphantom{M_2=}{}\times
\sum_{\substack{ \{\bth_{A_1},\bth_{A_3}\}\vdash\bth_{A_0}\\ \#\bth_{A_3}=1} }
g(\theta_s,\bth_{A_3})f(\bth_{A_1},\bth_{A_3}).\label{M22}
\end{gather}
The sum over partitions $\{\bth_{A_1},\bth_{A_3}\}\vdash\bth_{A_0}$ can be easily computed by the contour integral
\begin{gather*}%\label{sumA0}
\sum_{\substack{ \{\bth_{A_1},\bth_{A_3}\}\vdash\bth_{A_0}\\ \#\bth_{A_3}=1} }
g(\theta_s,\bth_{A_3})f(\bth_{A_1},\bth_{A_3})=1-f(\bth_{A_0},\theta_s).
\end{gather*}
Substituting this into \eqref{M22} we obtain
\begin{gather*}
M_2=\sum_{\{\bth_{A_0},\bth_{A_2}\}\vdash\bth_{A} }
\left(-\frac1\eta\right)^{\#\bth_{A_2+1}} f(\bu,\bth_{A_0})f(\bth_{A_0},\bth_{A_2})
\bigl(1-f(\bth_{A_0},\theta_s)\bigr)\nonumber\\
\hphantom{M_2}{} =-\frac1\eta K_{N,n_A}^{(1/\eta)}(\bu|\bth_A)-\sum_{\{\bth_{A_1},\bth_{A_2}\}\vdash\bth_{A} }
\left(-\frac1\eta\right)^{\#\bth_{A_2+1}} f(\bu,\bth_{A_1})f(\bth_{A_1},\bth_{A_2})f(\bth_{A_1},\theta_s).%\label{M23}
\end{gather*}
Thus,
\begin{gather*}%\label{M1M2}
M_1-M_2=K_{N,n_A+1}^{(1/\eta)}(\bu|\{\bth_A,\theta_s\})+\frac1\eta K_{N,n_A}^{(1/\eta)}(\bu|\bth_A),
\end{gather*}
and due to \eqref{L21-1} we arrive at
\begin{gather*}%\label{L21-2new}
\Lambda_2^{(1)}=\left(\frac1{\eta\alpha}-\frac{\tvk+\vk}{\vk-\rho_2}\right)K_{N,n_A}^{(1/\eta)}(\bu|\bth_A)+
\frac1\alpha K_{N,n_A+1}^{(1/\eta)}(\bu|\{\bth_A,\theta_s\}),
\end{gather*}
Finally, taking into account \eqref{L1-res}, \eqref{tL1}, and \eqref{L22-1} we obtain after simple algebra
\begin{gather*}%\label{L1andL2}
\Lambda_1+\Lambda_2=\frac\beta{\beta\mu+\mu-1}\Bigl(\eta K_{N,n_A+1}^{(1/\eta)}(\bu|\{\bth_A,\theta_s\})-\xi K_{N,n_A}^{(1/\eta)}(\bu|\bth_A)\Bigr).
\end{gather*}
Thus, Lemma~\ref{summin} is proved.

\subsection*{Acknowledgements}
The work was performed at the Steklov Mathematical Institute of Russian Academy of Sciences, Moscow. This work is supported by the Russian Science Foundation under grant 19-11-00062.

\pdfbookmark[1]{References}{ref}
\LastPageEnding

\end{document}